\newcommand{\mpr}{minority population ratio\xspace}
\newcommand{\SW}{\mathrm{SW}}
\newcommand{\Util}{\mathrm{Util}}
\newcommand{\U}{\mathrm{U}}
\newcommand{\myparagraph}[1]{\medskip \noindent \textbf{#1}}
\newcommand{\fratio}{fair-ratio\xspace}
\newcommand{\Fratio}{Fair-ratio\xspace}
\newcommand{\AR}{\mathrm{FR}}
\newcommand{\DR}{\ensuremath{\mathrm{DRF}}}
\newcommand{\FOne}{\ensuremath{\mathrm{UNB}}}
\newcommand{\FTwo}{\ensuremath{\mathrm{BAL}}}
\newcommand{\FTwos}{\ensuremath{\mathrm{BAL}^*}}
\begin{document}
\title{Fair and Efficient Multi-Resource Allocation\\ for Cloud Computing}
%
%
\author{Xiaohui Bei\inst{1} \and
Zihao Li\inst{1} \and
Junjie Luo\inst{2}}

%
\authorrunning{X. Bei et al.}
%
\institute{
Nanyang Technological University, Singapore \and Beijing Jiaotong University, Beijing, China \\
\email{xhbei@ntu.edu.sg; zihao004@e.ntu.edu.sg; jjluo1@bjtu.edu.cn}}
\maketitle              
\begin{abstract}
	We study the problem of allocating multiple types of resources to agents with Leontief preferences.
	The classic Dominant Resource Fairness (DRF) mechanism satisfies several desired fairness and incentive properties, but is known to have poor performance in terms of social welfare approximation ratio.
	In this work, we propose a new approximation ratio measure, called \emph{\fratio}, which is defined as the worst-case ratio between the optimal social welfare (resp. utilization) among all \emph{fair} allocations and that by the mechanism, allowing us to break the lower bound barrier under the classic approximation ratio.
	We then generalize DRF and present several new mechanisms with two and multiple types of resources that satisfy the same set of properties as DRF but with better social welfare and utilization guarantees under the new benchmark.
	We also demonstrate the effectiveness of these mechanisms through experiments on both synthetic and real-world datasets.

\keywords{Fair Division \and Mechanism Design \and Cloud Computing.}
\end{abstract}
%
%
%

\section{Introduction}
In order to offer flexible resources and economies of scale, in cloud computing systems,
a fundamental problem is to efficiently allocate heterogeneous computing resources, such as CPU time and memory, to agents with different demands.
This resource allocation problem presents several significant challenges from a technical perspective. For example, how to balance the efficiency of the system and fairness among users? How to incentivize agents to participate and truthfully reveal their private information? These are all delicate issues that need to be carefully considered when designing a resource allocation algorithm.


One of the most widely used mechanisms for multi-type resource allocation is the \emph{Dominant Resource Fairness (DRF)} mechanism proposed by \cite{Ghodsi2011}.
This work assumes that agents in the system have \emph{Leontief} preferences, which means they demand to receive resources of each type in fixed proportions.
Under such preferences, the proposed DRF mechanism generalizes the max-min allocation by equalizing the share of the most demanded resource, called \emph{dominant share}, for all agents.
\cite{Ghodsi2011} show that DRF satisfies a set of desirable properties. These include fairness properties: (\romannumeral1) \emph{share incentive (SI)}, all agents should be at least as happy as if each resource is equally allocated to all agents, and (\romannumeral2) \emph{envy-freeness (EF)}, no agent should prefer the allocation of another agent; efficiency properties: (\romannumeral3) \emph{Pareto optimality (PO)}, it is impossible to increase the allocation of one agent without decreasing the allocation of another agent; as well as incentive properties: (\romannumeral4) \emph{strategy-proofness (SP)}, no agent can benefit from reporting a false demand.
Consequently, DRF has received significant attention with many variants proposed to tackle different restrictions occurred in practice.


Despite the above attractive properties, however, DRF is known to have poor performance in terms of utilitarian social welfare, which is defined as the sum of utilities of all agents.
Many alternative mechanisms have then been proposed to tackle this issue and balance the trade-off between fairness and efficiency \cite{Joe-Wong2013,Bonald2014,Bonald2015,Jiang2021,Jin2016,Tang2016,Tang2020,Grandl2014}.
Most of these mechanisms still satisfy SI, EF, and PO. However, none of them satisfy SP.
Recently, \cite{Jiang2021} propose the so called 2-dominant resource fairness (2-DF) to balance fairness and efficiency.
Different from other mechanisms, 2-DF satisfies SP and PO, but does not satisfy SI and EF generally.
On the other hand, \cite{Parkes2015} justify this worst-case performance of DRF by showing that any mechanism satisfying any of the three properties SI, EF, and SP cannot guarantee more than $\frac{1}{m}$ of the optimal social welfare, which is also what DRF can achieve. Here $m$ denotes the number of resource types.
This characterization seems to suggest that from a worst-case viewpoint, DRF has the best possible social welfare guarantee among all fair or truthful mechanisms.

In this work, we aim to design new mechanisms that satisfy the same set of properties with DRF but with better efficiency guarantees.
In order to get around the theoretical barrier set by~\cite{Parkes2015}, we first propose and justify a new benchmark to measure the social welfare guarantee of a mechanism. Note that \cite{Parkes2015} and many other works use the \emph{approximation ratio}, which is defined as the worst-case ratio between the \emph{optimal social welfare among all allocations} and the mechanism's social welfare, as the performance measure of a mechanism.
However, since SI and EF are both fairness properties that place significant constraints on feasible allocations, it is not surprising that any allocation satisfying SI or EF would incur a large approximation ratio of $m$.
On the other hand, one can show that any mechanism satisfying SI has approximation ratio at most $m$.
This means all mechanisms satisfying SI and EF will have the same worst-case approximation ratio, which renders the approximation ratio notion meaningless in systems where these fairness conditions are hard constraints that must be satisfied.
Since fairness is a hard constraint in many practical applications,
we argue that it is more reasonable to compare the mechanism's social welfare to the optimal social welfare among \emph{all allocations that satisfy SI and EF}.
To this end, we modify the approximation ratio definition and propose this according variant. The new definition allows us to get pass the lower bound barrier from~\cite{Parkes2015} and design mechanisms with better social welfare approximation ratio guarantees.

\subsection{Our results}
We design new resource allocation mechanisms that satisfy properties such as SI, EF, PO, and SP, and at the same time achieve high efficiency.
The efficiency is measured by two objectives: \emph{social welfare}, defined as the sum of utilities of all agents, and \emph{utilization}, defined as the minimum utilization rate among all resources.
Social welfare is an indicator commonly used to measure efficiency, while improving utilization rate is also an important goal for cloud providers for cost-saving (see, e.g., Amazon\footnote{https://aws.amazon.com/blogs/aws/cloud-computing-server-utilization-the-environment/}, IBM\footnote{https://www.ibm.com/cloud/learn/cloud-computing}).
In academia, utilization has been studied by \cite{Li2017,Joe-Wong2013,Jin2018}.
For the performance measure, we define \emph{\fratio} for social welfare (resp. utilization) of a mechanism as the worst-case ratio between the social welfare (resp. utilization) achieved by the optimal mechanism \emph{satisfying SI and EF} and that by the mechanism.
See formal definitions in Section \ref{sec:prelim}.

We first focus on the setting where all agents' dominant resources fall into two types.
This is the most basic and arguably also the most important setting in cloud computing and other application domains such as high performance computing.
For example, most existing commercial cloud computing services, such as Azure, Amazon EC2, and Google Cloud, work with only two (dominant) resources: CPU and memory.
Two-resource setting can also be used to model the coupled CPU-GPU architectures where CPU and GPU are integrated into a single chip to achieve high performance computing \cite{Tang2016}.
In this setting, we present three new mechanisms $\FOne$, $\FTwo$, and $\FTwos$, all with better \fratio guarantees than DRF.
Different from DRF which equalizes the dominant share of all agents, the idea behind our new mechanisms is to partition all agents into two groups according to their dominant resources and carefully increase the share of agents with the smallest fraction of their non-dominant resource in each group.
Mechanism~$\FOne$ satisfies all four properties (SI, EF, PO, and SP) and has a \fratio of $\frac{3}{2}$ for social welfare and $2$ for utilization.
Mechanism $\FTwo$ further improves the \fratio for social welfare to $\frac{4}{3}$.
However, $\FTwo$ satisfies SI, EF, and PO, but not SP.
Finally, we generalize $\FTwo$ to a new mechanism $\FTwos$ which satisfies all the four properties and has the same asymptotic \fratio as $\FTwo$ when the number of agents $n$ goes to infinity.
We further provide a more fine-grained analysis of the \fratio parameterized by a \emph{\mpr} parameter $\alpha \in (0,\frac{1}{2}]$, which is defined as the fraction of agents in the smaller group classified by their dominant resources.
Table \ref{tab:result} lists a summary of the \fratio{s} of different mechanisms in the worst case and in terms of $\alpha$.
We also compare our mechanisms with DRF by conducting experiments on both synthetic and real-world data. Our results  match well with the theoretical bounds of \fratio{s} and show that both $\FOne$ and $\FTwos$ achieve better social welfare and utilization than DRF.

\begin{table}[t]
\caption{\Fratio results for $m=2$ resources overview.}
\label{tab:result}
\centering
\setlength{\tabcolsep}{20pt}
\renewcommand{\arraystretch}{1.7}
\begin{tabular}{@{}lll@{}}
\toprule
                       & Social Welfare                                          & Utilization                            \\ \midrule
 DRF (Lemma \ref{lem:n-2-DRF})     & $2$ \quad $(2-\alpha)$                                        & $\infty$ \quad $(\frac{1}{\alpha})$          \\
$\FOne$ (Theorem \ref{thm:n-2-F1})  & $\frac{3}{2}$ \quad $(1+\alpha)$                              & $2$ \quad $(\frac{1}{1-\alpha})$             \\
$\FTwo$ (Theorem \ref{thm:n-2-F2})  & $\frac{4}{3}$ \quad $(\frac{4-2\alpha}{3-\alpha})$           & $2$ \quad $(\frac{2}{1+\alpha})$ \\
$\FTwos$ (Theorem \ref{thm:n-2-F2s})& $\left[\frac{4-2\alpha}{3-\alpha}, \frac{4-2\alpha}{3-\alpha-\frac{1}{n}}\right]$ &
$\left[\frac{2}{1+\alpha}, \frac{2}{1+\alpha-\frac{1}{n}}\right]$ \\
[0.5em]
\bottomrule
\end{tabular}
\end{table}

\begin{table}[t]
\caption{Price of SP results overview.}
\label{tab:SP-result}
\centering
\setlength{\tabcolsep}{23pt}
\renewcommand{\arraystretch}{1.7}
\begin{tabular}{@{}lll@{}}
\toprule
& Social Welfare     & Utilization     \\
\midrule
$m=2$ (Theorem \ref{thm:n-2-SP})  &  $[1,3-\sqrt{3}+\frac{1}{2n}]$  & $[1,\frac{3}{2-\frac{1}{n}}]$  \\
$m=3$ (Theorem \ref{thm:lower-bound})  &  $[2,3]$   & $\infty$  \\
$m \ge 4$ (Theorem \ref{thm:lower-bound}) &   $m$     & $\infty$  \\
\bottomrule
\end{tabular}
\end{table}

Next we move to the general situation with $m \ge 2$ resources.
We first give a family $\mathcal{F}$ of mechanisms, containing DRF as a special case, that satisfy all the four properties.
This answers the question posed by \cite{Ghodsi2011} that \emph{``whether DRF is the only possible strategy-proof policy for multi-resource fairness, given other desirable properties such as Pareto efficiency''}.
Unfortunately, as we will see in the next part, for general $m$ all mechanisms that satisfy the four properties will have the same \fratio as DRF.
Nevertheless, we show that a generalization of $\FOne$ still satisfies the four properties and its \fratio is always weakly better than DRF.


Finally, we investigate the efficiency loss caused by incentive constraints.
We define the \emph{price of strategyproofness} (Price of SP) for social welfare (resp. utilization) as the best \fratio for social welfare (resp. utilization) among all mechanisms which satisfy SI, EF, PO, and SP.
Our results are summarized in Table 2.
For the case with $m=2$ resources, we show that the price of SP is at most $3 - \sqrt{3}+\frac{1}{2n}$ for social welfare, and at most $\frac{3}{2-\frac{1}{n}}$ for utilization.
When $m = 3$, the price of SP is between $2$ and $3$ for social welfare and $\infty$ for utilization.
Finally, when $m \ge 4$, the price of SP is $m$ for social welfare and $\infty$ for utilization, which implies that in the general setting all mechanisms that satisfy the four properties have the same \fratio as DRF.

\subsection{Related work}
Since its introduction by \cite{Ghodsi2011}, DRF has been extended in multiple directions, including the setting with weighted agents or indivisible tasks
\cite{Parkes2015}, the setting when resources are distributed over multiple servers with placement constraints \cite{Tahir2015,Wang2016} or without placement constraints \cite{Friedman2014,Wang2014}, a dynamic setting when agents arrive at different times \cite{Kash2014} and the case when agents' demands are limited \cite{Li2017,Narayana2021}.
In contrast to these works, we consider the original setting and aim to design mechanisms with better efficiency guarantees than DRF.
Notably, \cite{Li2017} generalize DRF to the limited demand setting, and study the approximation ratio of the generalized mechanism by comparing it with the optimal allocation satisfying PO, SI and EF.
Essentially, their results implies that for two resources, the \fratio of DRF is 2 for social welfare and $\infty$ for utilization, which can be seen as a special case of our more fine-grained result in Lemma \ref{lem:n-2-DRF} parameterized by $\alpha$.
\cite{Dolev2012} advocate a different fairness notion called \emph{Bottleneck Based Fairness (BBF)} for multi-resource allocation with Leontief preferences and show that a BBF allocation always exists.
\cite{Gutman2012} extend DRF and BBF for a larger family of utilities and give a polynomial time algorithm to compute a BBF solution.
Characterization of mechanisms satisfying a set of desirable properties under Leontief preferences has been studied in economics literature \cite{Nicolo2004,Friedman2011,Li2013}.
However, they consider different properties than what we consider.

\section{Preliminaries}\label{sec:prelim}


\subsection{Multi-resource allocation}
We start by introducing the formal model of multi-resource allocation. The notations are mainly adopted from \cite{Parkes2015}.
Given a set of agents $N=\{1,2,\dots,n\}$ and a set of resources $R$ with $|R|=m$, each agent $i$ has a \emph{resource demand vector} $\mathbf{D}_i=\{D_{i1},D_{i2},\dots,D_{im}\}$, where $D_{ir}$ is the ratio between the demand of agent $i$ for resource $r$ to complete one task and the total amount of that resource.
The \emph{dominant resource} of an agent $i$ is the resource $r_i^*$ such that $r_i^* \in \arg \max_{r \in R}D_{ir}$.
For simplicity, we assume that all agents have positive demands, i.e., $D_{ir}>0, \forall i \in N, \forall r \in R$.
For each agent $i$ and each resource $r$, define $d_{ir}=\frac{D_{ir}}{D_{ir_i^*}} \in (0,1]$ as the \emph{normalized demand} and denote the normalized demand vector of agent $i$ by $\mathbf{d}_i=\{d_{i1},d_{i2},\dots,d_{im}\}$.
An \emph{instance} of the multi-resource allocation problem with $n$ agents and $m$ resource is a matrix $\mathbf{I}$ of size $n \times m$ with each row representing a normalized demand vector.

To help better understand these notions, consider a cloud computing scenario where two agents share a system with 9 CPUs and 18 GB RAM.
Each task agent 1 runs require $\langle \textrm{1 CPUs}, \textrm{4 GB}\rangle$, and each task agent 2 runs require $\langle \textrm{3 CPUs}, \textrm{1 GB}\rangle$.
Since each task of agent 1 demands $\frac{1}{9}$ of the total CPU and $\frac{2}{9}$ of the total RAM, the demand vector for agent 1 is $\mathbf{D}_1=\{\frac{1}{9},\frac{2}{9}\}$, with RAM being its dominant resource, and the corresponding normalized demand vector is $\mathbf{d}_1=\{\frac{1}{2},1\}$.
Similarly, for agent 2 we have $\mathbf{D}_2=\{\frac{1}{3},\frac{1}{18}\}$, $\mathbf{d}_2=\{1,\frac{1}{6}\}$, and its dominant resource is CPU.

Given problem instance $\mathbf{I}$, an \emph{allocation} $\mathbf{A}$ is a matrix of size $n \times m$ which allocates a fraction $A_{ir}$ of resource $r$ to agent $i$.
We assume all resources are divisible.
An allocation $\mathbf{A}$ is \emph{feasible} if no resource is required more than available, i.e., $\sum_{i \in N} A_{ir} \le 1, \forall r \in R$.
We assume agents have \emph{Leontief preferences} and the \emph{utility} of an agent with its allocation vector $\mathbf{A}_i$ is defined as
\[
u_i(\mathbf{A}_i)=\max \{y \in \mathbb{R_+} : \forall r \in R, A_{ir} \ge y \cdot d_{ir}\}.
\]
We say an allocation is \emph{non-wasteful} if for each agent $i \in N$ there exists $y \in \mathbb{R}_+$ such that $A_{ir}=y \cdot d_{ir}, \forall r \in R$.
In words, for each agent, the amount of allocated resources are proportional to its normalized demand vector.
The \emph{dominant share} of an agent $i$ under a non-wasteful allocation $\mathbf{A}$ is $A_{ir_i^*}$, where $r_i^*$ is $i$'s dominant resource.

Denote the set of all instances by $\mathcal{I}$, and the set of all feasible allocations by $\mathcal{A}$.
A \emph{mechanism} is a function $f: \mathcal{I} \rightarrow \mathcal{A}$ that maps every instance to a feasible allocation.
We use $f_i(\mathbf{I})$ to denote the allocation vector to agent $i$ under instance $\mathbf{I}$.
A mechanism is non-wasteful if the allocation of the mechanism on any instance is non-wasteful.
We only consider non-wasteful mechanisms.

\subsection{Dominant Resource Fairness (DRF)}
The \emph{DRF} mechanism \cite{Ghodsi2011} works by maximizing and equalizing the dominant shares of all agents, subject to the feasible constraint.
Let $x$ be the dominant share of each agent, DRF solves the following linear program:
\begin{alignat*}{2}
  & \text{maximize}   & \quad & x          \nonumber \\
  & \text{subject to} &       & \sum_{i \in N} x \cdot d_{ir} \le 1, \quad \forall r \in R
\end{alignat*}

This linear program can be rewritten as $x^*=\frac{1}{\max_{r\in R} \sum_{i\in N} d_{ir}}$.
Then, for agent $i$ the allocation $\mathbf{A}_i=x^* \cdot \mathbf{d}_i$.


\subsection{Properties of mechanisms}
In this work we are interested in the following properties of a resource allocation mechanism.

\begin{definition}[Share Incentive (SI)]
An allocation $\mathbf{A}$ is SI if $u_i(\mathbf{A}_i) \ge \frac{1}{n}, \forall i \in N$.
A mechanism $f$ is SI if for any instance $\mathbf{I} \in \mathcal{I}$ the allocation $f(\mathbf{I})$ is SI.
\end{definition}

\begin{definition}[Envy Freeness (EF)]
An allocation $\mathbf{A}$ is EF if $u_i(\mathbf{A}_i) \ge u_i(\mathbf{A}_j), \forall i,j \in N$.
A mechanism $f$ is EF if for any instance $\mathbf{I} \in \mathcal{I}$ the allocation $f(\mathbf{I})$ is EF.
\end{definition}

\begin{definition}[Pareto Optimality (PO)]
An allocation $\mathbf{A}$ is PO if it is not dominated by another allocation $\mathbf{A}'$, i.e., there is no $\mathbf{A}'$ such that $\exists i_0 \in N: u_{i_0}(\mathbf{A}'_{i_0}) > u_{i_0}(f_{i_0}(\mathbf{I}))$ and $\forall i \in N: u_i(\mathbf{A}'_{i_0}) \ge u_i(f_i(\mathbf{I}))$.
A mechanism $f$ is PO if for any instance $\mathbf{I} \in \mathcal{I}$ the allocation $f(\mathbf{I})$ is PO.
\end{definition}

\begin{definition}[Strategyproofness (SP)]
A mechanism $f$ is SP if no agent can benefit by reporting a false demand vector, i.e., $\forall \mathbf{I} \in \mathcal{I}, \forall i \in N, \forall \mathbf{d}'_i, u_i(f_i(\mathbf{I})) \ge u_i(f_i(\mathbf{I}'))$, where $\mathbf{I}'$ is the resulting instance by replacing agent $i$'s demand vector by $\mathbf{d}'_i$.
\end{definition}

Notice that SI, EF, and PO are defined for both allocations and mechanisms, while SP is only defined for mechanisms.
It is easy to verify that a non-wasteful mechanism satisfies PO if and only if at least one resource is used up in the allocation returned by the mechanism.

\cite{Ghodsi2011} shows that DRF satisfies all of these desirable properties.

\subsection{Approximation ratio}

We define \emph{social welfare (SW)} of an allocation $\mathbf{A}$ as the sum of the utilities of all agents,
\[
\SW(\mathbf{A})=\sum_{i \in N} u_i(\mathbf{A}_i).
\]
As in \cite{Li2017}, we define \emph{utilization} of an allocation $\mathbf{A}$ as the minimum utilization rate of $m$ resources,
\[
\U(\mathbf{A})=\min_{r \in R} \sum_{i \in N} A_{ir}.
\]


As discussed in the introduction, we use a revised notion of approximation ratio to measure the efficiency performance of a mechanism, where we use the optimal \emph{fair} allocation as the benchmark instead of the original benchmark which is based on the optimal allocation.
\begin{definition}
The \emph{\fratio for social welfare (resp. utilization)} of a mechanism $f$ is defined as, among all instances $\mathbf{I} \in \mathcal{I}$, the maximum ratio of the optimal social welfare (resp. utilization) among all allocations that satisfy SI and EF
over the social welfare (resp. utilization) of $f(\mathbf{I})$, i.e.,
\[
\AR_{\SW}(f) = \max_{\mathbf{I} \in \mathcal{I}}\frac{\max\limits_{\mathbf{A} \textrm{ is SI,EF}}\SW(\mathbf{A})}{\SW(f(\mathbf{I}))}
\textrm{\quad and \quad}
\AR_{\Util}(f) = \max_{\mathbf{I} \in \mathcal{I}}\frac{\max\limits_{\mathbf{A} \textrm{ is SI,EF}}\U(\mathbf{A})}{\U(f(\mathbf{I}))}.
\]

\end{definition}

%

\section{Two Types of Resources}
In this section we focus on the case where there are only two competing resources. More specifically, we assume that among the $m$ types of resources, there exists $r_1, r_2 \in R$, such that for any agent $i$ and any other resource $r \neq r_1, r_2$, we have $d_{ir_1} \geq d_{ir}$ and $d_{ir_2} \geq d_{ir}$. This means in any allocation, other resources will not run out before $r_1$ or $r_2$ runs out. Thus it is equivalent to assume that $R$ contains only two resources $r_1$ and $r_2$.

%
%
%
%
%
%
%


We partition all agents into two groups $G_1$ and $G_2$, where $G_i (i=1,2)$ consists of all agents whose dominant resource is $r_i$.
Agents with demand vector $(1,1)$ are considered to be in $G_1$.
Denote $n_1=|G_1|$ and $n_2=|G_2|$.
Without loss of generality, we assume that $n_1 \ge \frac{n}{2}$ (otherwise we can rename the two resources).

We now let
\[
\alpha \coloneqq \frac{n_2}{n} \in (0,\frac{1}{2}]
\]
be the fraction of agents in the smaller group and we call $\alpha$ the \emph{\mpr}.
We assume that $\alpha>0$, because when $\alpha=0$ the only allocation satisfying SI is to give every agent $\frac{1}{n}$ of the first resource (and the corresponding amount of the second resource).
As we will see in the following, $\alpha$ is crucial in analyzing the \fratio of a mechanism.

We start by analyzing the \fratio of DRF.

\begin{restatable}{lemma}{DRF}\label{lem:n-2-DRF}
With $2$ resources, for instances with \mpr $\alpha$, we have
\[\AR_{\SW}(\DR) = 2-\alpha \textrm{\quad and \quad} \AR_{\Util}(\DR) = \frac{1}{\alpha}.\]
\end{restatable}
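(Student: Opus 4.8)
The plan is to bound both the numerator and denominator of each \fratio ratio separately, and then exhibit a matching worst-case instance. First I would compute the DRF allocation explicitly: by the linear program characterization, the common dominant share is $x^* = 1/\max_r \sum_i d_{ir}$. Since group $G_1$ has dominant resource $r_1$ (so $d_{i r_1}=1$ for $i \in G_1$) and group $G_2$ has $d_{i r_2}=1$, we get $\sum_i d_{i r_1} = n_1 + \sum_{i \in G_2} d_{i r_1}$ and $\sum_i d_{i r_2} = n_2 + \sum_{i \in G_1} d_{i r_2}$. The worst case for DRF will push the off-dominant demands toward extremes, so I expect the relevant sums to be dominated by $n_1$ (the larger group), giving $x^* \approx 1/n_1$ and hence $\SW(\DR) \approx n_1 \cdot 1 + n_2 \cdot 1 \cdot$ (something), and $\U(\DR)$ bounded below in terms of how much of $r_2$ is used.

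**Bounding the optimal fair allocation.**

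For the numerator, I would use the fact (stated in the introduction) that any SI allocation gives each agent utility at least $\frac1n$, and any non-wasteful SI+EF allocation is quite constrained. The key observation is that for social welfare, an SI+EF allocation can give the large group $G_1$ utility close to $1$ each (if their off-dominant demand is tiny they barely consume $r_2$) while still giving $G_2$ agents their fair share; this should yield an optimal fair SW of roughly $n_1 \cdot 1 + n_2 \cdot \frac1n \cdot n = \ldots$ — more carefully, I expect $\max_{\mathbf A \text{ SI,EF}} \SW(\mathbf A) \le (2-\alpha) \cdot \SW(\DR)$ after normalizing, and the factor $2-\alpha = 1 + (1-\alpha)$ should come from: DRF effectively serves "one group's worth" of utility while the fair optimum serves "almost two groups' worth" when $\alpha$ is small. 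For utilization, DRF can leave resource $r_2$ almost entirely unused when $G_2$ is small and its members have tiny $r_2$-share, while a fair allocation can fill $r_2$ completely, giving the $\frac1\alpha$ gap. I would prove the upper bounds $\AR_{\SW}(\DR) \le 2-\alpha$ and $\AR_{\Util}(\DR) \le \frac1\alpha$ by arguing over an arbitrary instance with ratio $\alpha$: fix the DRF allocation, relate its SW and $\U$ to $n_1, n_2$ and the demand profile, and then bound any SI+EF allocation's SW and $\U$ using the per-agent utility cap of $1$ (since $u_i \le$ the available fraction of its dominant resource $\le 1$) and feasibility.

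**The matching instances and the main obstacle.**

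Finally I would construct tight instances: for both bounds, take $G_1$ agents with demand vector $(1,\varepsilon)$ and $G_2$ agents with demand vector $(\varepsilon,1)$ for small $\varepsilon \to 0$. Then DRF sets $x^* \to 1/\max(n_1, n_2) = 1/n_1$, so $\SW(\DR) \to n_1/n_1 + n_2/n_1 = 1 + \frac{\alpha}{1-\alpha}$ after rescaling — I'd need to track constants — whereas the fair optimum gives each $G_1$ agent utility $\approx 1$ and each $G_2$ agent utility $\approx 1$ (they use almost disjoint resources), so SW $\to n_1 + n_2 = n$; taking the ratio and simplifying in terms of $\alpha = n_2/n$ should land exactly on $2-\alpha$. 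A similar computation with $\U$ gives $\frac1\alpha$. The main obstacle I anticipate is the \emph{upper bound} direction — showing \emph{no} SI+EF allocation can beat the claimed ratio against DRF on \emph{any} instance with the given $\alpha$ — because this requires a clean structural handle on the set of SI+EF allocations (in particular using EF to limit how much the large group can be favored, and using the $\frac1n$ floor from SI together with feasibility on the scarce resource). I would likely separate agents by group, observe that within a group EF forces equal utility for identical-demand agents and bounds utilities across groups, and then reduce to an optimization over two "representative" utility levels subject to the two feasibility constraints, which is a small LP whose optimum matches the instance-based lower bound.
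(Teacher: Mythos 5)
There is a genuine gap, and it is in the part you yourself flag as the ``easy'' direction: the matching instance. First, your computation of the fair benchmark on the proposed instance is wrong: with normalized demands every unit of an agent's utility consumes one unit fraction of its dominant resource, so the total utility of $G_1$ is at most $1$ (resource~$r_1$ has total supply $1$), and likewise for $G_2$; hence $\SW \le 2$ for \emph{every} allocation, and the claim that the fair optimum reaches $n_1+n_2=n$ is impossible. Second, even after correcting this, the symmetric instance with demands $(1,\varepsilon)$ and $(\varepsilon,1)$ is not tight: DRF gets dominant share $\approx 1/n_1$ for everyone, so $\SW(\DR)\to \frac{1}{1-\alpha}$, while the fair optimum tends to $2$, giving a ratio of only $2(1-\alpha)=2-2\alpha < 2-\alpha$; the utilization ratio on this instance is similarly $\frac{1-\alpha}{\alpha} < \frac{1}{\alpha}$. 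The missing idea is an \emph{asymmetric} minority group: in the paper's construction almost all of $G_2$ has demand close to $(1,1)$, so resource~$1$ is contested by essentially all $n$ agents and DRF's equal dominant share collapses to $\approx \frac1n$ (SW $\approx 1$, resource~$2$ usage $\approx \alpha$), while one special $G_2$ agent with negligible resource-$1$ demand can, in the fair benchmark, absorb the leftover $\approx 1-\alpha$ of resource~$2$ and push SW to $\approx 2-\alpha$ and utilization to $\approx 1$. Without that within-group asymmetry you cannot reach the claimed lower bounds.

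On the upper bounds your plan is vaguer but closer in spirit; note however that EF is not actually needed there. The paper's argument uses only SI plus a scaling observation: if DRF's SW is $s$, each agent's DRF dominant share is $\frac{s}{n}$, so in any SI allocation each agent retains at least a $\frac1s$ fraction of its DRF consumption of the exhausted resource; this caps how much of that resource the other group can gain, yielding $\SW(\mathbf{A}) \le 3-\alpha-\frac1s \le (2-\alpha)s$. For utilization the whole content is that SI forces DRF to use at least $\alpha$ of each resource (the minority group alone uses $\alpha$ of resource~$2$, the majority uses $1-\alpha\ge\alpha$ of resource~$1$), while the benchmark is trivially at most $1$, giving $\AR_{\Util}(\DR)\le \frac1\alpha$. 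Your proposed reduction to a two-level LP with EF constraints would have to be carried out in detail to be checkable, and as stated it does not yet pin down the constants.
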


\begin{proof}
We first show $\AR_{\SW}(\DR) = 2-\alpha$.
For the lower bound, we build an instance with \mpr $\alpha$ and $n$ agents as follows.
The first group $G_1$ consists of $n(1-\alpha)$ agents who have the same demand vector $(1,\varepsilon)$, where $\varepsilon=\frac{1}{n}$.
The second group $G_2$ consists of $n\alpha$ agents, where except for one special agent $i^*$ whose demand vector is $(\frac{\varepsilon}{2},1)$, all other agents have the same demand vector $(1-\varepsilon,1)$.
We choose $n$ large enough such that $n\alpha \ge 2$.
The idea is that under DRF since all agents must have the same dominant share, their dominant share is close to $\frac{1}{n}$ because of the limit of resource~1 and hence SW will be close to 1, while there exists an allocation that satisfies SI and EF, and has SW close to $2-\alpha$ by giving roughly $1-\alpha$ dominant share to the special agent $i^*$
and $\frac{1}{n}$ dominant share to other agents.

Formally, under DRF, resource~1 will be used up and the dominant share of every agent is
\[
\frac{1}{n(1-\alpha)+(n\alpha-1)(1-\varepsilon)+\frac{\varepsilon}{2}} \le \frac{1}{n-2},
\]
so SW of the DRF allocation is at most $\frac{n}{n-2}$.
However, if we give $\frac{1}{n}$ dominant share to every agent except for agent $i^*$ and give $i^*$ the bundle $(x\frac{\varepsilon}{2},x)$, where $x=(1-\alpha)(1-\frac{1}{n})+\frac{1}{n}$, such that resource~2 is used up, then the SW is
\[
1-\frac{1}{n}+x \ge 2-\alpha-\frac{1}{n}.
\]
It is easy to verify that the above allocation, denoted by $\mathbf{A}^*$, satisfies SI, EF (and PO).
For EF, notice that the special agent $i^*$ receives $x\frac{\varepsilon}{2}\le\frac{1}{2n}$ of resource~1 while all other agents in $G_2$ receive $\frac{1-\varepsilon}{n}=\frac{1-\frac{1}{n}}{n}$ of resource~1.
Thus we have 
\[
\AR_{\SW}(\DR) \ge 
\frac{2-\alpha-\frac{2}{n}}{\frac{n}{n-2}}\overset{n \to \infty}{\longrightarrow} 2-\alpha.
\]

For the upper bound, for any instance $\mathbf{I}$ with SW $s \ge 1$ under DRF, we show that SW of any allocation satisfying SI is upper bounded by $(2-\alpha)s$, and hence $\AR_{\SW}(\DR) \le 2-\alpha$.
Let $\mathbf{A}$ be an arbitrary allocation on $\mathbf{I}$ satisfying SI.
If resource~1 is used up under DRF, then agents in $G_1$ get $(1-\alpha)s$ of resource~1 and agents in $G_2$ get $1-(1-\alpha)s$ of resource~1.
Notice that under DRF every agent gets $\frac{s}{n}$ dominant share while in $\mathbf{A}$ every agent gets at least $\frac{1}{n}$ dominant share.
Thus, in $\mathbf{A}$ agents in $G_2$ get at least $\frac{1}{s}(1-(1-\alpha)s)$ of resource~1.
Consequently, in $\mathbf{A}$ agents in $G_1$ get at most $1-\frac{1}{s}(1-(1-\alpha)s)$ of resource~1.
Then we have
\[
\SW(\mathbf{A}) \le 1-\frac{1}{s}(1-(1-\alpha)s)+1=3-\alpha-\frac{1}{s} \le (2-\alpha)s,
\]
where the last inequality follows by $s \ge 1$.
Analogously, if resource~2 is used up under DRF, we have that $\SW(\mathbf{A}) \le 3-(1-\alpha)-\frac{1}{s}\le (2-\alpha)s$.

We then show $\AR_{\Util}(\DR) = \frac{1}{\alpha}$.
For the lower bound, we use the same instance used for the lower bound of SW.
Recall that for that instance under DRF resource~2 is not used up, and the dominant share of every agent is at most $\frac{1}{n-2}$, so at most $\frac{n\alpha+(1-\alpha)n\varepsilon}{n-2} \le \alpha+\frac{2}{n-2}$ of resource~2 is used.
However, in $\mathbf{A}^*$ resource~1 is not used up and at least $1-\alpha+(n\alpha-1)\frac{1-\varepsilon}{n} \ge 1-\frac{2}{n}$ of resource~1 is used.
So
\[
\AR_{\Util}(\DR) \ge 
\frac{1-\frac{2}{n}}{\alpha+\frac{2}{n-2}}\overset{n \to \infty}{\longrightarrow} \frac{1}{\alpha}.
\]
For the upper bound, since DRF satisfies SI, at least $1-\alpha$ of resource~1 is used and at least $\alpha$ of resource~2 is used, so $\AR_{\Util}(\DR) \le \max\{\frac{1}{1-\alpha},\frac{1}{\alpha}\}=\frac{1}{\alpha}$.
\end{proof}

When $\alpha$ approaches 0, we have $\AR_{\SW}(\DR) \rightarrow 2$ and $\AR_{\Util}(\DR) \rightarrow \infty$.
Notice that with 2 resources $\AR_{\SW}(f)$ for any mechanism $f$ satisfying SI is at most $2$ as the mechanism can always achieve at least $1$ in SW.

In the following, we present two new mechanisms with the same set of properties as DRF but with better \fratio{s}.

\subsection{Mechanism $\FOne$}

\begin{algorithm}[t]
\DontPrintSemicolon
$C\leftarrow (c_1, c_2) = (1,1)$  \tcp*{remaining resources}
$G_1\leftarrow\{i \mid d_{i,1}=1\}$; $G_2\leftarrow\{i \mid d_{i,1}<1\}$  \\
\ForEach{$i \in N$}
{
  $\mathbf{A}_i\leftarrow\frac{1}{n} \mathbf{d}_i$ \tcp*{every agent receives $\frac{1}{n}$ dominant share}
  $C \leftarrow C-\mathbf{A}_i$
}
\While{$c_1 > 0$ and $c_2 > 0$}
{
  $P \leftarrow \arg\min_{i \in G_2} A_{i,1}$  \tcp*{agents with the smallest fraction of resource $r_1$}
  $\delta_0 \leftarrow \min\limits_{i \in N \setminus P} A_{i,1}-\min\limits_{i \in P} A_{i,1}$  \tcp*{increasing step when 2nd smallest fraction of resource $r_1$ is reached}
  $\delta_1 \leftarrow \frac{C_1}{|P|}$,
  $\delta_2 \leftarrow \frac{C_2}{\sum_{i \in P} \frac{1}{d_{i,1}}}$
  \tcp*{increasing step when resource $r_1$ (or $r_2$) is used up}
  $\delta^* \leftarrow \min\{\delta_0, \delta_1, \delta_2\}$ \\
  \ForEach{$i \in P$}
    {
      $\mathbf{A}_i \leftarrow \mathbf{A}_i+(\delta^*,\frac{\delta^*}{d_{i,1}})$
      \tcp*{increase resource $r_1$ by the same $\delta^*$ for agents in $P$}
      $C \leftarrow C-(\delta^*,\frac{\delta^*}{d_{i,1}})$
    }
}
\Return $\mathbf{A}$
\caption{$\FOne(\mathbf{d}_1,\mathbf{d}_2,\dots,\mathbf{d}_n)$}
\label{alg:F1}
\end{algorithm}

The 
lower bound analysis in the proof
of Lemma \ref{lem:n-2-DRF} shows that when the population of two groups are unbalanced, i.e., when $\alpha$ is close to $0$, it is better to allocate more resources to agents in the minor group $G_2$ with smaller $d_{i,1}$.
This idea leads to mechanism $\FOne$, described in Algorithm~\ref{alg:F1}.
The mechanism has two steps.
In step 1, the mechanism allocates every agent $\frac{1}{n}\mathbf{d}_i$ of resources such that each agent has a dominant share of $\frac{1}{n}$, which ensures SI.
In step 2, the mechanism repeats the following process till one resource is used up:
Select a set of agents from $G_2$ who have the smallest fraction $t_1$ of resource $r_1$, denoted by $P$, and increase their fractions of resource $r_1$ at the same speed ($\delta^*$) till the fraction reaches the second smallest fraction $t_2$ in $G_2$ ($\delta^*=\delta_0$) or one resource is used up ($\delta^*=\delta_1$ for resource $r_1$ and $\delta^*=\delta_2$ for resource $r_2$).

\myparagraph{Example 1.}
Consider an instance with 3 agents who have demand vectors $\mathbf{d}_1=(1,\frac{2}{5})$, $\mathbf{d}_2=(1,\frac{1}{5})$ and $\mathbf{d}_3=(\frac{1}{5},1)$.
We compare the allocation under $\FOne$ and DRF.
Notice that DRF can also be viewed as a two-step mechanism, where in step 1 every agent gets $\frac{1}{n}$ dominant share (the same as $\FOne$) and in step 2 we increase the dominant share of every agent at the same speed till one resource is used up.
For the above instance, in step 1 all 3 agents get $\frac{1}{3}$ dominant share, and the remaining resource is $C=(\frac{4}{15},\frac{7}{15})$, corresponding to Figure \ref{fig:example_sup1}.
In step 2, under DRF, all agents have the same dominant share $x^*=\frac{1}{\max\{\frac{11}{5},\frac{8}{5}\}}=\frac{5}{11}$ and the final allocation vectors are $A_1=(\frac{5}{11},\frac{2}{11})$, $A_2=(\frac{5}{11},\frac{1}{11})$ and $A_3=(\frac{1}{11},\frac{5}{11})$, corresponding to Figure \ref{fig:example_sup21}.
Under $\FOne$,  we increase the allocation of agent $3$, who currently has the smallest fraction $\frac{1}{15}$ of resource $r_1$, till the second resource $r_2$ is used up and we have $A_3=(\frac{4}{25},\frac{4}{5})$, corresponding to Figure \ref{fig:example_sup22}.
The SW under DRF is $\frac{5}{11} \times 3 \approx 1.36$, while the SW under $\FOne$ is $\frac{1}{3}+\frac{1}{3}+\frac{4}{5} \approx 1.47$.
\medskip

\begin{figure*}[t]
\centering
\subfloat[Step 1\label{fig:example_sup1}]{%
  \includegraphics[scale=0.7]{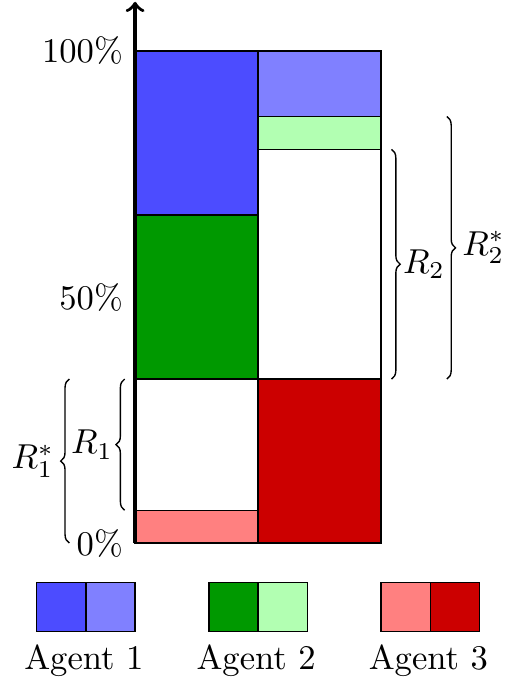}%
}\hspace{3mm}%
\subfloat[DRF Step 2\label{fig:example_sup21}]{%
  \includegraphics[scale=0.7]{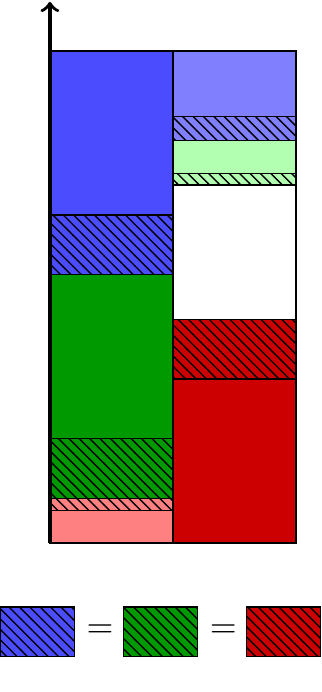}%
}\hspace{4mm}%
\subfloat[$\FOne$ Step 2\label{fig:example_sup22}]{%
  \includegraphics[scale=0.7]{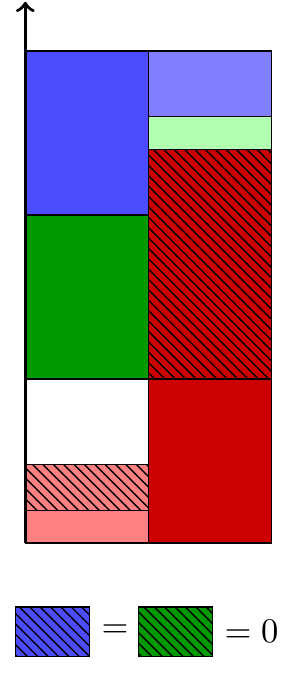}%
}\hspace{4mm}%
\subfloat[$\FTwo$ Step 2\label{fig:example_sup23}]{%
  \includegraphics[scale=0.7]{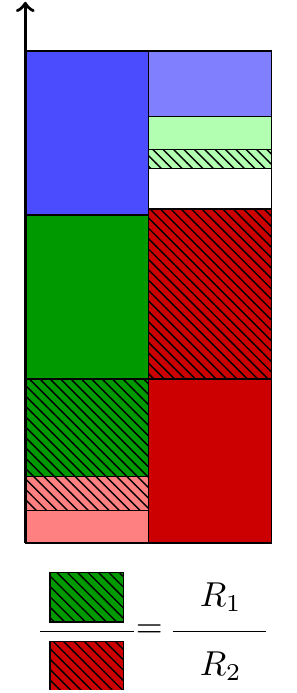}%
}\hspace{4mm}%
\subfloat[$\FTwos$ Step 2\label{fig:example_sup24}]{%
  \includegraphics[scale=0.7]{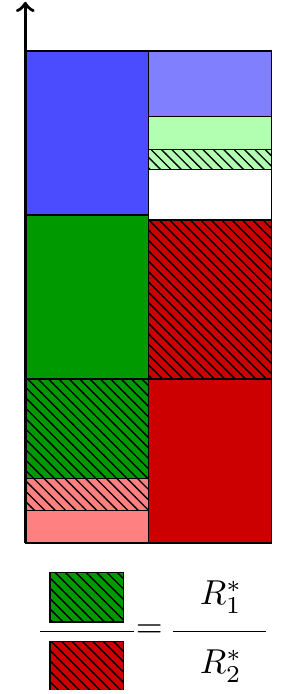}%
}
\caption{Allocations under DRF, $\FOne$, $\FTwo$ and $\FTwos$ in Example 1. The shaded area represents the added parts in respective Step 2.}
\label{fig:example}
\end{figure*}


We show that $\FOne$ satisfies all four properties and has a better \fratio than DRF.

\begin{restatable}{theorem}{FONE}
\label{thm:n-2-F1}
With $2$ resources, mechanism $\FOne$ can be implemented in polynomial time, satisfies SI, EF, PO, and SP, and has
\[\AR_{\SW}(\FOne) = 1 + \alpha \textrm{\quad and \quad} \AR_{\Util}(\FOne) = \frac{1}{1-\alpha}.\]
\end{restatable}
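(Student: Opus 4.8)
I would split the proof into (i) the four axioms plus the running time, and (ii) the two \fratio{}s, each via a matching upper and lower bound. The running time is clear: each pass of the \texttt{while} loop either halts the algorithm (when $\delta^\ast\in\{\delta_1,\delta_2\}$) or strictly enlarges $P$ (when $\delta^\ast=\delta_0$), so there are $O(n)$ iterations of $O(n)$ work each. For the axioms, the structural fact I would isolate first is that the water‑filling never pushes any agent's share of $r_1$ above $\tfrac1n$: every $i\in G_1$ keeps $A_{i,1}=\tfrac1n$, and $G_1\subseteq N\setminus P$ throughout, so $\delta_0$ always caps the raised value by $\tfrac1n$. Consequently $\FOne$ outputs $A_i=\tfrac1n\mathbf d_i$ for $i\in G_1$ and $A_{j,1}=\max\{\tfrac{d_{j,1}}{n},w^\ast\}\le\tfrac1n$ for $j\in G_2$, where $w^\ast$ is the terminal water level. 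SI is then immediate (dominant shares only increase from $\tfrac1n$); PO holds because the loop exits only once $r_1$ or $r_2$ is exhausted, and a non‑wasteful allocation exhausting a resource is PO; EF reduces to the four group‑pair cases, each handled by writing $u_i(\mathbf A_j)=\min_r A_{jr}/d_{ir}$ and using $A_{j,1}\le\tfrac1n$, together with the dichotomy for the $G_2/G_2$ case that either $A_{j,1}/d_{i,1}\le u_i$ or else $A_{j,1}=\tfrac{d_{j,1}}{n}$ and hence $u_j=\tfrac1n\le u_i$. For SP I would again use the invariant $A_{i,1}\le\tfrac1n$ (valid for every report): a truthful $G_1$‑agent already attains its maximal utility $\tfrac1n$; a truthful $G_2$‑agent that reports $\tilde d_{i,1}$ receives a bundle proportional to $(\tilde d_{i,1},1)$, so its true utility equals $A_{i,1}/\max\{d_{i,1},\tilde d_{i,1}\}$, and a monotonicity property of the water‑filling (reporting a smaller $\tilde d_{i,1}$ makes the agent more $r_2$‑expensive, which weakly lowers $w^\ast$) shows this never exceeds the truthful utility $\max\{\tfrac1n,w^\ast/d_{i,1}\}$.

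\textbf{Upper bounds on the \fratio{}s.} Set $\beta:=\tfrac1n\sum_{i\in G_1}d_{i,2}\in(0,1-\alpha]$ and $\gamma:=\tfrac1n\sum_{j\in G_2}\tfrac1{d_{j,1}}\ge\alpha$. Tracking where the water‑filling stops — it exhausts $r_2$ once $G_2$ has absorbed $1-\beta$ of it, unless it first raises every $G_2$‑agent to the cap $\tfrac1n$, in which case $r_1$ is exhausted and $G_2$ absorbs only $\gamma$ of $r_2$ — yields the identity $\SW(\FOne)=\min\{\,2-\alpha-\beta,\ 1-\alpha+\gamma\,\}$. Now take any SI, EF allocation $\mathbf A$, which we may assume non‑wasteful, with utilities $u_i$. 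Two feasibility computations give two bounds on $\SW(\mathbf A)$: combining $r_1$‑ and $r_2$‑feasibility with SI (so $\sum_{G_1}u_id_{i,2}\ge\beta$) gives $\SW(\mathbf A)\le 2-\beta$; combining $r_1$‑feasibility with the EF consequence $u_jd_{j,1}\le\tfrac1{n_1}$ for all $j\in G_2$ (which follows from $u_i\ge u_jd_{j,1}$ and $\sum_{G_1}u_i\le1$) gives $\SW(\mathbf A)\le 1+\tfrac{\gamma-\alpha}{1-\alpha}$. If $2-\alpha-\beta\le 1-\alpha+\gamma$, the first bound and $\beta\le1-\alpha$ give $\SW(\mathbf A)\le 2-\beta\le(1+\alpha)(2-\alpha-\beta)=(1+\alpha)\SW(\FOne)$; otherwise $\gamma<1-\beta\le1\le\tfrac{1-\alpha+\alpha^2}{\alpha}$, and the second bound with a short computation gives $\SW(\mathbf A)\le(1+\alpha)(1-\alpha+\gamma)=(1+\alpha)\SW(\FOne)$. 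For utilization, if $\FOne$ exhausts $r_2$ then $\U(\FOne)\ge1-\alpha$ (the $G_1$‑agents alone use $1-\alpha$ of $r_1$), while $\U(\mathbf A)\le1$, so the ratio is at most $\tfrac1{1-\alpha}$; the remaining case, in which $\FOne$ exhausts $r_1$ (so $\U(\FOne)=\beta+\gamma$, relevant only when $\alpha<\tfrac12$), is handled by bounding $\U(\mathbf A)\le\sum_{G_1}u_id_{i,2}+\sum_{G_2}u_j$ via the EF inequalities $u_jd_{j,1}\le\tfrac1{n_1}$ and $u_id_{i,2}\le u_j$ (all $i\in G_1,j\in G_2$) together with feasibility, and comparing against $\beta+\gamma$.

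\textbf{Matching lower bounds.} Both bounds are realized in the limit by a single family of instances obtained by perturbing the worst‑case DRF instance of Lemma~\ref{lem:n-2-DRF}: let $G_1$ consist of one ``special'' agent with demand $(1,\eta)$ and $n(1-\alpha)-1$ agents with demand $(1,1)$, and let $G_2$ consist of $n\alpha$ agents with demand $(\varepsilon,1)$. Because the $G_2$‑agents are so $r_2$‑hungry ($\gamma=\alpha/\varepsilon$ is huge, so the instance is ``generic''), $\FOne$ can only raise them infinitesimally, so $\SW(\FOne)\to 1$ and $\U(\FOne)\to 1-\alpha$; but the allocation giving the special agent a dominant share $\approx\alpha$ and every other agent $\tfrac1n$ is SI, EF (and PO), exhausts $r_1$, has social welfare $\to 1+\alpha$ and utilization $\to 1$. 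Letting $\eta,\varepsilon\to0$ and then $n\to\infty$ gives $\AR_{\SW}(\FOne)\ge 1+\alpha$ and $\AR_{\Util}(\FOne)\ge\tfrac1{1-\alpha}$.

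\textbf{Main obstacle.} The conceptual crux is the pair of upper bounds on $\SW(\mathbf A)$ together with the identity $\SW(\FOne)=\min\{2-\alpha-\beta,\,1-\alpha+\gamma\}$, and the (perhaps counterintuitive) fact, exhibited by the tight instance, that the slack $2-\beta$ is genuinely attainable even though it is not attained by instances with a homogeneous group $G_1$. The most laborious parts are the EF and SP verifications — both needing the ``$A_{i,1}\le\tfrac1n$'' invariant and, for SP, the water‑filling monotonicity — and the ``$\FOne$ exhausts $r_1$'' sub‑case of the utilization upper bound, where one must squeeze the optimal fair utilization below $\tfrac{\beta+\gamma}{1-\alpha}$ using only the envy‑freeness inequalities.
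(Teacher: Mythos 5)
Most of your proposal is sound, and your route to the two ratios is genuinely different from the paper's: the paper argues by cases on which resource $\FOne$ exhausts, using exchange-style arguments (in the resource-1-exhausted case it even shows $\FOne$ is exactly optimal among SI+EF allocations), whereas you work with the closed form $\SW(\FOne)=\min\{2-\alpha-\beta,\,1-\alpha+\gamma\}$ and two global inequalities, $\SW(\mathbf A)\le 2-\beta$ and $\SW(\mathbf A)\le 1+\frac{\gamma-\alpha}{1-\alpha}$ (the latter from the envy cap $u_jd_{j,1}\le\frac1{n_1}$). I checked the algebra in both cases and it closes; the lower-bound family also works and is essentially the paper's construction. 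Even the sub-case you worried about (utilization when $r_1$ is exhausted) does go through with your toolkit: from $d_{i,2}\le1$, SI and $r_1$-feasibility one gets $\sum_{G_1}u_id_{i,2}\le\beta+\alpha-\sum_{G_2}d_{j,1}u_j$, and adding $\sum_{G_2}u_j$ and using $u_j\le\frac1{n_1d_{j,1}}$ gives $\U(\mathbf A)\le\beta+\alpha+\frac{\gamma-\alpha}{1-\alpha}\le\frac{\beta+\gamma}{1-\alpha}$, as needed.

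The genuine gap is in strategyproofness for an agent of $G_2$. Your justification rests on ``reporting a smaller $\tilde d_{i,1}$ makes the agent more $r_2$-expensive, which weakly lowers $w^\ast$.'' First, that monotonicity is not unconditionally true: if resource~1 is the binding resource and the agent was not raised truthfully ($w^\ast<d_{i,1}/n$), an under-report frees resource~1 and the terminal level can rise (it is only capped by $d_{i,1}/n$, which is what actually saves the argument there). Second, and more importantly, the dangerous deviation is the \emph{over}-report $\tilde d_{i,1}>d_{i,1}$ --- exactly the manipulation in the paper's Example~2 against $\FTwo$ --- and for it your cited monotonicity says nothing; over-reporting makes the agent cheaper in $r_2$ per unit of $r_1$, so $w^\ast$ can strictly \emph{increase}, and you must still rule out $\tilde w^\ast/\tilde d_{i,1}>\max\{\tfrac1n, w^\ast/d_{i,1}\}$. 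Some additional argument is required here, e.g.\ the paper's: if the deviator's true utility rose, its bundle strictly increased in both resources while every other agent's bundle weakly increased (other $G_2$ agents get $\max\{\tilde w^\ast,d_{j,1}/n\}\ge\max\{w^\ast,d_{j,1}/n\}$ when $\tilde w^\ast\ge w^\ast$, and otherwise a direct comparison already kills the gain), so the resource exhausted in the truthful run would be over-allocated --- contradiction with feasibility/PO. Equivalently, one can compare total consumption at the level $w'=\tilde d_{i,1}w^\ast/d_{i,1}$ to show $\tilde w^\ast\le w'$. Either way, this case needs to be argued explicitly; as written, your SP proof covers $G_1$ agents and under-reports but not the over-reporting deviation.
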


Because $\alpha \in (0, 1/2]$, we have $\AR_{\SW}(\FOne) \leq 3/2$ and $\AR_{\Util}(\FOne) \leq 2$, both of which are significantly better than DRF.

We prove Theorem \ref{thm:n-2-F1} via the following Lemmas \ref{lem:F1-4-properties} to \ref{lem:F1-time}.
We start by showing that $\FOne$ satisfies all four properties.

\begin{lemma}
\label{lem:F1-4-properties}
With $2$ resources, mechanism $\FOne$ satisfies SI, EF, PO, and SP.
\end{lemma}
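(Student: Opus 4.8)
The plan is to verify each of the four properties in turn, exploiting the two-step structure of $\FOne$ and the fact that step 2 only ever \emph{increases} allocations of agents in $G_2$ while leaving agents in $G_1$ untouched.

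\textbf{SI.} This is immediate: step 1 already gives every agent a dominant share of exactly $\frac{1}{n}$, i.e.\ $u_i(\mathbf{A}_i) = \frac{1}{n}$ after step~1, and step 2 only adds resources to agents in $P \subseteq G_2$ along their normalized demand direction, so dominant shares never decrease. Hence $u_i(\mathbf{A}_i) \ge \frac{1}{n}$ for all $i$ in the final allocation, and I must also check that each intermediate allocation stays feasible --- which follows because $\delta^* \le \min\{\delta_1,\delta_2\}$ guarantees neither $c_1$ nor $c_2$ goes negative.

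\textbf{PO.} As noted in the preliminaries, a non-wasteful mechanism is PO iff some resource is used up. The while loop terminates exactly when $c_1 = 0$ or $c_2 = 0$ (the $\delta_1,\delta_2$ steps force this), so I need to argue the loop actually terminates: each iteration either uses up a resource (and we stop) or, via $\delta^* = \delta_0$, strictly enlarges $P$ by merging in at least one new agent at the second-smallest $r_1$-fraction; since $|G_2| \le n$ is finite, after at most $n$ iterations either $P = G_2$ and then only $\delta_1,\delta_2$ can be binding, forcing termination with a resource exhausted. I should also handle the degenerate case where resource~1 is already exhausted after step~1 (e.g.\ $\alpha$ small, many $G_1$ agents), in which case the loop does not execute and PO holds trivially.

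\textbf{EF.} This is the main obstacle and needs the most care. I need $u_i(\mathbf{A}_i) \ge u_i(\mathbf{A}_j)$ for all pairs. Split into cases by the groups of $i$ and $j$. After step~1 everyone has dominant share $\frac{1}{n}$, and since $d_{jr} \le 1$, agent $i$ values agent $j$'s step-1 bundle at $\le \frac1n$, so EF holds after step~1. For step~2: (a) if $i \in G_1$, then $\mathbf{A}_i$ is unchanged and equals $\frac1n \mathbf{d}_i$ with $d_{i,1}=1$, so agent $i$'s dominant resource is $r_1$ and $u_i(\mathbf{A}_j) = A_{j,1}/d_{i,1} = A_{j,1}$ for any $j$; I must show $A_{j,1} \le \frac1n$ for all $j$, which holds because step~2 only increases $r_1$-fractions of $G_2$-agents up toward the fractions already held by $G_1$-agents (namely $\frac1n$), and the algorithm always picks the \emph{minimum} $A_{i,1}$ over $G_2$ to raise --- so no $G_2$-agent's $r_1$-fraction ever exceeds $\frac1n$ (indeed never exceeds $\min_{k \notin P} A_{k,1}$, capped ultimately at $\frac1n$). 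Wait, this needs the invariant that every $G_1$ agent keeps $r_1$-fraction $\frac1n$ and the increasing agents in $G_2$ stay below that; the $\delta_0$ cap is exactly what enforces the "never overshoot the next agent" discipline. (b) If $i \in G_2$: agent $i$'s dominant resource is $r_2$, so $u_i(\mathbf{A}_j) = \min\{A_{j,1}/d_{i,1}, A_{j,2}/1\} \le A_{j,2}$. For $j \in G_1$, $A_{j,2} = \frac{d_{j,2}}{n} \le \frac1n \le u_i(\mathbf{A}_i)$. For $j \in G_2$, I compare via the $r_2$-coordinate or, cleaner, note both are non-wasteful so comparing utilities reduces to comparing $r_2$-fractions $A_{i,2}$ vs $A_{j,2}$; here I use that the algorithm raises the agents with the smallest $r_1$-fraction, and track how the $r_1$-fraction ordering within $G_2$ relates to the $r_2$-fraction ordering. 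The key observation is an invariant maintained throughout step~2: for any two agents $i,j \in G_2$, if $A_{i,1} < A_{j,1}$ then $u_i(\mathbf{A}_i) \ge u_i(\mathbf{A}_j)$ --- i.e.\ agents processed earlier (smaller $r_1$-fraction) are at least as well off in their own eyes; I would prove this invariant is preserved by each increment step, since raising $P$ (the current-minimum set) along Leontief directions only helps those agents and the $\delta_0$ cap prevents them from leapfrogging others.

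\textbf{SP.} I would argue that agent $i$ cannot gain by misreporting $\mathbf{d}_i' \ne \mathbf{d}_i$. The cleanest route: the true utility agent $i$ derives from \emph{any} bundle $\mathbf{b}$ is $u_i(\mathbf{b}) = \min_r b_r/d_{ir}$, so a bundle is worth $y$ to agent $i$ iff it contains $\ge y\,d_{ir}$ of each resource. Distinguish whether the misreport changes $i$'s dominant resource / group membership. If $i \in G_1$ truthfully: $\FOne$ gives $i$ exactly $\frac1n$ dominant share (step~1, never changed). Any report keeping $i$ in $G_1$ still yields dominant share $\frac1n$ in $r_1$ and some amount in $r_2$; the true utility is $\min\{\frac1n, A_{i,2}/d_{i,2}\} \le \frac1n$, so no gain. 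A report moving $i$ into $G_2$ gives $i$ some bundle $(A_{i,1}', A_{i,2}')$ with $A_{i,1}' \le \frac1n$ (by the invariant above) --- but $i$'s true dominant resource is $r_1$, so $u_i = \min\{A_{i,1}', A_{i,2}'/d_{i,2}\} \le A_{i,1}' \le \frac1n$, again no gain. If $i \in G_2$ truthfully with true normalized demand $d_{i,1} = t$: truthful $\FOne$ gives $i$ a bundle whose $r_2$-fraction is some $s \ge \frac1n$ determined by when $i$'s $r_1$-fraction stops rising. Here I must show misreporting $d_{i,1}' \ne t$ (the other coordinate is pinned to $1$ for a $G_2$-agent) cannot increase the true utility $\min\{A_{i,1}'/t,\ A_{i,2}'\}$. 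Reporting a \emph{larger} $d_{i,1}'$ makes the algorithm treat $i$ as needing more $r_1$ per unit $r_2$, so $i$ gets raised to the same $r_1$-fraction thresholds but the accompanying $r_2$ is smaller ($\delta^*/d_{i,1}'$ smaller), hurting the true $r_2$-fraction; reporting a \emph{smaller} $d_{i,1}'$ gives more $r_2$ per step but --- and this is the subtle point --- may cause $i$ to exit $P$ sooner (its $r_1$-fraction catches the next threshold faster) and also changes the $\delta_1,\delta_2$ computations; I would show the net true utility cannot exceed the truthful one by comparing the total $r_2$ agent $i$ accumulates, which is bounded by how much the overall step-2 process can push into $i$ before a resource runs out, a quantity that is maximized (for $i$'s true preferences) by the honest report. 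I expect this $G_2$-to-$G_2$ misreport case to be the most delicate part of the whole lemma, and would likely isolate it as a short sub-claim with its own mini-argument tracking the two coordinates of $i$'s bundle as a function of the reported $d_{i,1}'$.
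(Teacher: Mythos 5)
Your SI and PO arguments are fine and coincide with the paper's, and your EF outline follows the paper's route, though it needs two small repairs: the claim that within $G_2$ "comparing utilities reduces to comparing $r_2$-fractions" is false, since $u_i(\mathbf{A}_j)=\min\{A_{j,1}/d_{i,1},\,A_{j,2}\}$ may be capped by the resource-1 coordinate; and your stated invariant only rules out envy from the agent with the \emph{smaller} $r_1$-fraction toward the larger one, whereas the substantive direction is the reverse: an unraised agent $j$ does not envy a raised agent $i$ precisely because raised agents never overtake anyone in resource 1, so $u_j(\mathbf{A}_i)\le A_{i,1}/d_{j,1}\le A_{j,1}/d_{j,1}=u_j(\mathbf{A}_j)$. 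These are fixable with facts you already have on hand.

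The genuine gap is strategyproofness for a $G_2$ agent misreporting while staying in $G_2$. You correctly identify this as the delicate case, but what you offer is the claim to be proved ("the total $r_2$ agent $i$ accumulates \ldots is maximized by the honest report") rather than an argument, and the partial reasoning you give is unsound as stated: for an over-report $d'_{i,1}>d_{i,1}$ you assert the agent "gets raised to the same $r_1$-fraction thresholds," but the thresholds are \emph{not} the same --- the misreport changes the step-1 consumption and the rate at which resource 2 is burned in step 2, so the level at which the loop halts shifts, and that shift is the entire difficulty (the same issue defeats the under-report sketch). The paper closes this case with a short global argument you are missing: if the manipulator's true utility strictly increases, then both coordinates of its bundle strictly increase; since its dominant share then exceeds $\frac{1}{n}$, it must lie in the final minimum set of the manipulated run, so with $x=\min_{i\in G_2}A_{i,1}$ and $x'=\min_{i\in G_2}A'_{i,1}$ one gets $x'=A'_{i_0,1}>A_{i_0,1}\ge x$; consequently every other $G_2$ agent ends with resource-1 fraction $\max\{x',d_{j,1}/n\}\ge\max\{x,d_{j,1}/n\}$ (and proportionally at least as much resource 2), while $G_1$ agents are untouched, so the manipulated allocation weakly improves everyone and strictly improves $i_0$, contradicting PO of the truthful allocation. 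Without this argument (or a completed monotonicity analysis of the stopping level as a function of the report), your SP proof does not go through.
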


\begin{proof}
We first show $\FOne$ satisfies SI, EF, and PO.
SI and PO are clearly satisfied since all agents have dominant share at least $\frac{1}{n}$ and the mechanism stops only when one resource is used up.
EF is satisfied in the first step when all agents get the same dominant share $\frac{1}{n}$.
After that, the mechanism allocates more resources to agents in $G_2$, so there is no envy from $G_2$ to $G_1$.
Note that the mechanism stops before any agent in $G_2$ receiving more than $\frac{1}{n}$ of resource~1.
Since all agents in $G_1$ have $\frac{1}{n}$ of resource~1, there is no envy from $G_1$ to $G_2$.
Within $G_2$, the mechanism only allocate resources to agents who have the smallest fraction of resource~1, so no envy will occur.

It remains to show SP is satisfied.
It is easy to see that no agent has an incentive to change the group they belong to as in the final allocation the fraction of the non-dominant resource is at most $\frac{1}{n}$ for all agents in both $G_1$ and $G_2$.
In addition, agents in $G_1$ always get $\frac{1}{n}$ of resource~1, so agents in $G_1$ have no incentive to lie.
Finally, we consider agents in $G_2$.
Suppose that there exists an agent $i_0 \in G_2$ who can benefit by reporting a false demand vector.
Let $\mathbf{A}$ be the truthful allocation and $\mathbf{A}'$ be the manipulated allocation.
Denote $x=\min_{i \in G_2} A_{i,1}$ and $x'=\min_{i \in G_2} A'_{i,1}$.
Since agent $i_0$ gets more utility from $\mathbf{A}'$ than from $\mathbf{A}$, we have $A_{i_0,1} < A'_{i_0,1}$.
Since agent $i_0$ receives more dominant resource in the manipulated outcome, $i_0$ must be one of the agents who have the smallest fraction of resource~1 in the manipulated outcome.
Thereby, $A'_{i_0,1}=x'$ and then $x \leq A_{i_0,1} < A'_{i_0,1} =x'$.
However, this means that agent $i_0$ receives more of both resources in the manipulated outcome while all other agents receive at least the same amount of resources, which contradicts that the truthful allocation is PO.
This finishes the proof for SP.
\end{proof}

Next we prove the claimed upper bounds in Theorem \ref{thm:n-2-F1}.
We show that under $\FTwo$ if resource~1 is used up then $\FTwo$ maximizes both SW and utilization, and if resource~2 is used up then the space left for improvement is at most $\alpha$ of resource~1.

\begin{lemma}
\label{lem:F1-upper}
With $2$ resources,
\[\AR_{\SW}(\FOne) \le 1 + \alpha \textrm{\quad and \quad} \AR_{\Util}(\FOne) \le \frac{1}{1-\alpha}.\]
\end{lemma}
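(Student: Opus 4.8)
The plan is to fix an instance $\mathbf{I}$ and an arbitrary non-wasteful allocation $\mathbf{A}$ satisfying SI and EF; write $\rho_r(\mathbf{B})=\sum_{i\in N}B_{ir}$ for the amount of resource $r$ used by an allocation $\mathbf{B}$, so $\U(\mathbf{B})=\min\{\rho_1(\mathbf B),\rho_2(\mathbf B)\}$, and write $\FOne$ also for its output allocation on $\mathbf{I}$. I would bound $\SW(\mathbf{A})$ and $\U(\mathbf{A})$ in terms of $\SW(\FOne)$ and $\U(\FOne)$, splitting on whether $\FOne$ exhausts resource~$1$ or resource~$2$ (by PO at least one is exhausted). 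For $i\in G_1$, non-wastefulness gives $u_i(\mathbf A_i)=A_{i,1}$ and $A_{i,2}=u_i(\mathbf A_i)d_{i,2}$ with $d_{i,2}\le 1$; for $j\in G_2$, $u_j(\mathbf A_j)=A_{j,2}$ and $A_{j,1}=u_j(\mathbf A_j)d_{j,1}$. Writing $u_i(\mathbf A_i)=\tfrac1n+e_i$ with $e_i\ge 0$ for $i\in G_1$ (SI) and using feasibility of resource~$1$ gives $\sum_{i\in G_1}e_i+\sum_{j\in G_2}A_{j,1}\le\alpha$; substituting this together with the identity $u_j(\mathbf A_j)-A_{j,1}=A_{j,1}w_j$, where $w_j:=\tfrac1{d_{j,1}}-1\ge 0$, yields
\[
\SW(\mathbf{A})\le 1+\sum_{j\in G_2}A_{j,1}w_j, \qquad \U(\mathbf{A})\le\rho_2(\mathbf{A})\le \tfrac1n\sum_{i\in G_1}d_{i,2}+\alpha+\sum_{j\in G_2}A_{j,1}w_j .
\]
I will also use $\sum_{j\in G_2}u_j(\mathbf A_j)\le 1-\tfrac1n\sum_{i\in G_1}d_{i,2}$ (feasibility of resource~$2$ plus SI on $G_1$).

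Case 1 ($\FOne$ exhausts resource~$1$). Then $G_1$ holds exactly $1-\alpha$ of resource~$1$, so $\sum_{j\in G_2}A^{\FOne}_{j,1}=\alpha$, and since $e_i=0$ under $\FOne$ both displayed inequalities become equalities for $\FOne$. Hence it suffices to prove that $(A^{\FOne}_{j,1})_{j\in G_2}$ maximizes $\sum_{j\in G_2}A_{j,1}w_j$ over all SI, EF allocations. The constraints on this vector are precisely $A_{j,1}\ge d_{j,1}/n$ (SI), $A_{j,1}$ nondecreasing in $d_{j,1}$ (this is exactly what EF inside $G_2$ forces: for $d_{i,1}<d_{j,1}$ it reads $u_i(\mathbf A_i)\ge u_j(\mathbf A_j)$ and $u_j(\mathbf A_j)d_{j,1}\ge u_i(\mathbf A_i)d_{i,1}$), and $\sum_{j\in G_2}A_{j,1}\le\alpha$. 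Ordering $G_2$ so that $d_{j,1}$ is increasing and hence $w_j$ nonincreasing, Abel summation rewrites $\sum_{j}A_{j,1}w_j$ as a nonnegative combination of the prefix sums $\sum_{j\le k}A_{j,1}$, and a short exchange argument shows that the water-filling vector with total $\alpha$ maximizes every prefix sum over this constraint set; this water-filling vector is exactly what $\FOne$ produces when Step~2 runs until resource~$1$ is used up, and it is resource-$2$-feasible since resource~$1$ ran out first. Therefore $\SW(\FOne)\ge\SW(\mathbf A)$ and $\rho_2(\FOne)\ge\rho_2(\mathbf A)$; since also $\rho_1(\FOne)=1\ge\rho_1(\mathbf A)$, we get $\U(\mathbf A)\le\min\{1,\rho_2(\FOne)\}=\U(\FOne)$. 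Both ratios are at most $1$ here.

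Case 2 ($\FOne$ exhausts resource~$2$). Then $\rho_2(\FOne)=1$; $G_1$ uses $\tfrac1n\sum_{i\in G_1}d_{i,2}$ of resource~$2$, so $\sum_{j\in G_2}u^{\FOne}_j=1-\tfrac1n\sum_{i\in G_1}d_{i,2}$ and $\SW(\FOne)=2-\alpha-\tfrac1n\sum_{i\in G_1}d_{i,2}$, which is $\ge 1$ because $\tfrac1n\sum_{i\in G_1}d_{i,2}\le 1-\alpha$; moreover every agent in $G_2$ still holds a positive amount of resource~$1$, so $\U(\FOne)=\rho_1(\FOne)>1-\alpha$. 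For social welfare, combining $\SW(\mathbf A)\le 1+\sum_{j\in G_2}\bigl(u_j(\mathbf A_j)-A_{j,1}\bigr)$ with $\sum_{j\in G_2}u_j(\mathbf A_j)\le 1-\tfrac1n\sum_{i\in G_1}d_{i,2}$ and $\sum_{j\in G_2}A_{j,1}\ge 0$ gives $\SW(\mathbf A)\le 2-\tfrac1n\sum_{i\in G_1}d_{i,2}=\SW(\FOne)+\alpha$, so $\SW(\mathbf A)/\SW(\FOne)\le 1+\alpha/\SW(\FOne)\le 1+\alpha$. For utilization, $\U(\mathbf A)\le 1$ by feasibility while $\U(\FOne)>1-\alpha$, so $\U(\mathbf A)/\U(\FOne)\le 1/(1-\alpha)$.

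Since $\mathbf A$ and $\mathbf I$ were arbitrary, the two cases give $\AR_{\SW}(\FOne)\le 1+\alpha$ and $\AR_{\Util}(\FOne)\le\tfrac1{1-\alpha}$. I expect the main obstacle to be Case~1: arguing both that, after accounting for the resource-$1$ budget, social welfare and utilization reduce to the single scalar $\sum_{j\in G_2}A_{j,1}w_j$ — which is precisely what makes the slack from agents in $G_1$ receiving more than $\tfrac1n$ harmless — and that $\FOne$'s greedy min-raising realizes the prefix-sum-optimal vector subject to the SI lower bounds and the monotonicity forced by EF within $G_2$.
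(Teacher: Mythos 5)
Your proof is correct, and it follows the same two-case skeleton as the paper (split on which resource $\FOne$ exhausts); in the resource-2 case your argument is essentially identical to the paper's — the benchmark is bounded by $\SW(\FOne)+\alpha$ and one uses $\SW(\FOne)\ge 1$ for welfare and $\U(\FOne)\ge 1-\alpha$ for utilization. Where you genuinely diverge is the resource-1 case. The paper argues directly about allocations: it first shows that every agent then holds exactly $\frac{1}{n}$ of resource 1 (else a $G_2$ agent would be envied by $G_1$), and then asserts by informal exchange reasoning that evenly splitting resource 1 is the most efficient way to convert it into welfare and resource-2 usage, so $\FOne$ is exactly optimal among SI and EF allocations. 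You instead reduce both objectives to the single linear functional $\sum_{j\in G_2}A_{j,1}w_j$, relax the benchmark to three necessary conditions (the SI lower bounds $A_{j,1}\ge d_{j,1}/n$, the EF-forced monotonicity of $A_{j,1}$ in $d_{j,1}$ within $G_2$, and the budget $\sum_{j\in G_2}A_{j,1}\le\alpha$ coming from SI on $G_1$ and resource-1 feasibility), and show via Abel summation and prefix-sum domination that $\FOne$'s water-filling vector is optimal over this relaxation. This buys a more explicit and quantitative argument: the slack $e_i$ of agents in $G_1$ is absorbed into the budget constraint instead of the paper's separate exchange step for the case $S_1'>\frac{n_1}{n}$, and the EF-monotonicity observation makes precise why the greedy min-raising is optimal. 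Two small points to tighten: the three conditions are \emph{necessary} for SI and EF, not "precisely" equivalent to them — which is all you need, since you only use them as a relaxation — and the claim that water-filling maximizes every prefix sum is asserted rather than proved, though it is a routine verification (compare $kA_k$ with $\alpha-\sum_{j>k}\max\{d_{j,1}/n,\,A_k\}$ against the water level). Finally, restricting the benchmark to non-wasteful allocations matches the paper's implicit convention; for social welfare it is without loss (trim each bundle to $u_i\mathbf{d}_i$, which preserves SI, EF and SW), and for utilization the stated bounds are only meaningful under that reading.
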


\begin{proof}
Denote by $\mathbf{A}$ the allocation under $\FOne$.
We distinguish two cases depending on which resource is used up in $\mathbf{A}$.

\paragraph{Case 1: resource~1 is used up.}
For this case, we show that $\mathbf{A}$ maximizes both SW and utilization subject to SI and EF.
First, we claim that every agent receives $\frac{1}{n}$ of resource~1 in $\mathbf{A}$.
Suppose this is not case, then there exists some agent~$i^*$ from $G_2$ who receives more than $\frac{1}{n}$ of resource~1 and also more than $\frac{1}{n}$ of resource~2 due to $d_{i^*,1} < d_{i^*,2}$.
Consequently, agent~$i^*$ will be envied by all agents in $G_1$ who receive at most $\frac{1}{n}$ of resource~1 and resource~2, a contradiction to EF.

For utilization, notice that the most efficient way to maximize the use of resource~2 by using a fixed amount of resource~1 subject to SI and EF is to evenly distribute resource~1 among all agents, which is exactly the case in $\mathbf{A}$.
Therefore, allocation $\mathbf{A}$ maximizes utilization subject to SI and EF.

For SW, let $\mathbf{A}'$ be an allocation satisfying SI and EF, and $S_1^{\prime}$ be the sum of resource~1 received by agents in $G_1$ in $\mathbf{A}'$.
Since $\mathbf{A}'$ satisfying SI, we have $S_1^{\prime} \ge \frac{n_1}{n}$.
If $S_1^{\prime} = \frac{n_1}{n}$, then all agents in $G_2$ can receive at most $\frac{n_2}{n}$ of resource~1 totally while the most efficient way to maximize the use of resource~2 is exactly the case in $\mathbf{A}$, so we have $\text{SW}(\mathbf{A}') \le \text{SW}(\mathbf{A})$.
If $S_1^{\prime} > \frac{n_1}{n}$, then the sum of dominant shares of agents in $G_1$ is increased by $S_1^{\prime} - \frac{n_1}{n}$ compared with $\mathbf{A}$, while the sum of dominant shares of agents in $G_2$ is decreased by more than $S_1^{\prime} - \frac{n_1}{n}$ since the demand vectors of agents $i\in G_2$ satisfy that $d_{i,1} < d_{i,2}$.
Thus $\text{SW}(\mathbf{A}') \le \text{SW}(\mathbf{A})$.
Therefore, allocation $\mathbf{A}$ also maximizes SW subject to SI and EF.

\paragraph{Case 2: resource~2 is used up.}
For utilization, since resource~2 is used up and $1-\alpha$ of resource~1 is used by agents in $G_1$ in $\mathbf{A}$, we have $\AR_{\Util}(\FOne) \le \frac{1}{1-\alpha}$.
For SW, let $S_2$ be the sum of dominant shares of agents in $G_2$ in $\mathbf{A}$.
Then $\text{SW}(\mathbf{A})=(1-\alpha)+S_2$ and $S_2 \ge \alpha$.
In any allocation~$\mathbf{A}'$ satisfying SI, all agents in $G_1$ receive at least the same amount of resource~2 as they receive in $\mathbf{A}$,
so the sum of dominant shares of agents in $G_2$ is at most $S_2$ and hence
$\SW(\mathbf{A}') \le 1+S_2$.
Then we have
\[
\AR_{\SW}(\FOne) \le 
\frac{1+S_2}{(1-\alpha)+S_2} \le \frac{1+\alpha}{(1-\alpha)+\alpha} =1+\alpha,
\]
where the second inequality follows from $S_2 \ge \alpha$.
\end{proof}

Then we show the corresponding lower bounds.
The idea is to build an instance where after step 1 the remaining resource is $C=(\alpha-\epsilon,\epsilon)$ with $\epsilon \to 0$. In step 2 $\FOne$ can only increase the allocations of agents in $G_2$ and get SW at most $1+\epsilon$, while the optimal allocation can increase the allocation of one agent in $G_1$ with the smallest $d_{i,2}$ and get SW of $1+\alpha-\epsilon$.

\begin{lemma}
\label{lem:F1-lower}
With $2$ resources,
\[\AR_{\SW}(\FOne) \ge 1 + \alpha \textrm{\quad and \quad} \AR_{\Util}(\FOne) \ge \frac{1}{1-\alpha}.\]
\end{lemma}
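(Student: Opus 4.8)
The plan is to construct, for each target value of $\alpha \in (0,\frac12]$, a family of instances (parameterized by a vanishing $\epsilon$) on which $\FOne$ performs poorly relative to the optimal SI+EF allocation, matching the upper bounds of Lemma \ref{lem:F1-upper}. The guiding intuition, already stated in the paragraph before the lemma, is to arrange that after step~1 the leftover resource vector is $C = (\alpha - \epsilon, \epsilon)$: since $\FOne$ in step~2 only ever augments agents in $G_2$ (whose dominant resource is $r_2$), it can consume at most $\epsilon$ more of resource~2 and hence add at most $O(\epsilon)$ to the social welfare, ending with $\SW(\mathbf A) \le 1 + O(\epsilon)$ and with resource~1 essentially only $(1-\alpha)$-utilized. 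By contrast, a smarter SI+EF allocation can pour the leftover $\approx \alpha$ of resource~1 into a single agent of $G_1$ whose non-dominant coordinate $d_{i,2}$ is tiny, raising that agent's dominant share to roughly $\alpha$ without violating EF (every other agent still holds $\frac1n$ of resource~1, so nobody envies, and the boosted agent is in $G_1$ so it does not envy anyone either). That yields $\SW \ge 1 + \alpha - O(\epsilon)$ and utilization close to $1$ on resource~1.

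Concretely I would take $n$ agents with $n_2 = \alpha n$ in $G_2$ and $n_1 = (1-\alpha)n$ in $G_1$. Give all $G_2$ agents demand vector $(\epsilon', 1)$ for a suitably small $\epsilon'$ (so that after giving each $\frac1n$ dominant share, $G_2$ has used only $\approx \alpha n \cdot \frac{\epsilon'}{n} \to 0$ of resource~1 and exactly $\alpha$ of resource~2). In $G_1$, designate one special agent $i^*$ with demand vector $(1, \epsilon'')$ where $\epsilon''$ is chosen so that raising $i^*$'s dominant share to $\approx \alpha$ uses only $O(\epsilon)$ of resource~2; let the remaining $G_1$ agents have demand vector $(1, \delta)$ for a small $\delta$ whose role is just to make resource~2 the binding constraint in step~1 at the right level — I will pick these parameters so that after step~1 the consumed amounts are resource~1 $= (1-\alpha) + o(1)$ and resource~2 $= \alpha + o(1)$, i.e.\ leftover $C \approx (\alpha, 0^+)$. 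One then checks (i) $\FOne$'s step~2 augments only $G_2$ agents and exhausts resource~2 after adding at most $\epsilon$, so $\SW(\FOne) \le 1 + \epsilon$ and $\U(\FOne) \le (1-\alpha) + o(1)$ on resource~1; and (ii) the alternative allocation $\mathbf A'$ that instead gives $i^*$ dominant share $\approx \alpha$ and everyone else $\frac1n$ is SI, EF (and PO), with $\SW(\mathbf A') \ge 1 + \alpha - o(1)$ and $\U(\mathbf A') \ge 1 - o(1)$. Taking $\epsilon \to 0$ (and $n \to \infty$) gives $\AR_{\SW}(\FOne) \ge 1+\alpha$ and $\AR_{\Util}(\FOne) \ge \frac{1}{1-\alpha}$.

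The main obstacle is the bookkeeping of constants so that all three requirements hold simultaneously: the leftover after step~1 must be exactly $(\alpha - o(1),\, o(1))$ — this forces a careful choice of the demand coordinates $\epsilon', \epsilon'', \delta$ as functions of $n$ and $\epsilon$ — while at the same time the comparison allocation $\mathbf A'$ must be genuinely EF, which requires that the special agent $i^*$'s share of the non-dominant resource~1 stay at $\frac1n$ (it is in $G_1$, so it receives $\frac1n$ of $r_1$ and a large share of $r_2$; the $G_2$ agents hold only $o(1/n)$ of $r_1$, so there is no envy from $G_1$ toward them, and no agent envies $i^*$ since $i^*$'s $r_1$-holding is exactly $\frac1n$, matching $G_1$ and exceeding $G_2$). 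A minor subtlety worth a sentence is verifying that $\FOne$ really does stop with resource~2 exhausted (rather than resource~1) on this instance — which follows because $G_2$ agents' huge $d_{i,2}$ relative to $d_{i,1}$ means resource~2 runs out long before their $r_1$-shares could climb to $\frac1n$. Once the parameters are fixed, both ratios follow by direct substitution and taking limits.
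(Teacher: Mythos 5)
Your overall plan is the right one and matches the paper's: after step~1 the leftover should be $\approx(\alpha,0^+)$, so that step~2 of $\FOne$ (which only augments $G_2$) can add only $o(1)$ welfare, while an SI+EF allocation can hand the leftover $\approx\alpha$ of resource~1 to a special $G_1$ agent with tiny $d_{i^*,2}$. But your concrete instance does not realize this, and the error is not mere bookkeeping. You give the non-special $G_1$ agents demand $(1,\delta)$ with $\delta$ \emph{small}. Then after step~1 resource~2 is consumed only to level $\alpha+\frac{\epsilon''+(n_1-1)\delta}{n}\approx\alpha$, so the leftover resource~2 is about $1-\alpha$, not $0^+$ (your own stated targets are inconsistent here: ``resource~2 consumed $=\alpha+o(1)$'' cannot coexist with ``leftover $C\approx(\alpha,0^+)$''). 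On such an instance $\FOne$'s step~2 keeps raising the $G_2$ agents (demand $(\epsilon',1)$) until resource~2 is exhausted, adding roughly $1-\alpha$ to their dominant shares, so $\SW(\FOne)\approx 2-\alpha$, which is essentially the best any SI allocation can do; no gap of $1+\alpha$ opens and the utilization bound fails for the same reason. To force the leftover resource~2 to be $O(1/n)$ you need the non-special $G_1$ agents to demand resource~2 at level close to $1$ — the paper uses $(1,1-\varepsilon)$ with $\varepsilon=\frac1n$ — which is the opposite of ``small $\delta$.''

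A second, smaller confusion is in your EF check for the comparison allocation: you write that $i^*$'s $r_1$-holding stays at $\frac1n$ while it gets ``a large share of $r_2$.'' Since $i^*\in G_1$ its dominant resource is $r_1$, so boosting its dominant share means giving it $\approx\alpha+\frac1n$ of resource~1 and only $\approx\alpha\epsilon''$ of resource~2; if its $r_1$-holding stayed at $\frac1n$ there would be no welfare gain at all. The correct envy argument (as in the paper) is on resource~2: because $d_{i^*,2}$ is tiny, the boosted $i^*$ still holds no more of resource~2 than any other agent (each non-special $G_1$ agent holds $\frac{1-\varepsilon}{n}$ of resource~2 and each $G_2$ agent holds $\frac1n$), so nobody envies $i^*$, while $i^*$ and everyone else keep at least $\frac1n$ dominant share, giving SI. With the non-special $G_1$ demands corrected to $(1,1-\varepsilon)$ and this EF argument, the rest of your limit computation goes through exactly as in the paper.
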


\begin{proof}
We first consider SW.
We build an instance with \mpr $\alpha$ and $n$ agents as follows.
The first group $G_1$ consists of $n(1-\alpha)$ agents, where except for one special agent $i^*$ whose demand vector is $(1,\varepsilon)$, all other agents have the same demand vector $(1,1-\varepsilon)$, where $\varepsilon=\frac{1}{n}$.
The second group $G_2$ consists of $n\alpha$ agents who have the same demand vector $(\varepsilon,1)$.
We choose $n$ large enough such that $n(1-\alpha)\geq 2$ and $n\alpha\ge 1$.
Under $\FOne$, in step 2 we will increase the allocations of agents in $G_2$.
However, the remaining amount of resource~2 after step 1 is only $1-\alpha-(n(1-\alpha)-1)\frac{1-\varepsilon}{n}-\frac{\varepsilon}{n} \le \frac{2}{n}$.
So the SW under $\FOne$ is upper bounded by $1+\frac{2}{n}$.
On the other hand, if we give $\frac{1}{n}$ dominant share to every agent except for the special agent $i^*$ and give agent $i^*$ the bundle $(\frac{n-1}{n}\alpha+\frac{1}{n},\frac{n-1}{n^2}\alpha+\frac{1}{n^2})$ such that resource~1 is used up, then the SW is $1+\frac{n-1}{n}\alpha$.
It is easy to verify that the above allocation, denoted by $\mathbf{A}^*$, satisfies SI and EF.
For EF, notice that the first agent receives $\frac{n-1}{n^2}\alpha+\frac{1}{n^2} \le \frac{n+1}{2n^2}\le \frac{n-1}{n^2}$ of resource~2, where the last equality holds as $n\ge 3$, while all other agents receive at least $\frac{1-\varepsilon}{n}=\frac{n-1}{n^2}$ of resource~2.
Then we have 
\[
\AR_{\SW}(\FOne) \ge
\frac{1+\frac{n-1}{n}\alpha}{1+\frac{2}{n}}\overset{n \to \infty}{\longrightarrow} 1+\alpha.
\]

For utilization, we use the same instance as above.
Under $\FOne$, since the SW under $\FOne$ is upper bounded by $1+\frac{2}{n}$, at most $1-\alpha+\frac{\alpha}{n}+\frac{2}{n} \le 1-\alpha+\frac{3}{n}$ of resource~1 is used.
In $\mathbf{A}^*$, resource~1 is used up and at least $1-\frac{2}{n}$ of resource~2 is used, so 
\[
\AR_{\Util}(\FOne) \ge
\frac{1-\frac{2}{n}}{1-\alpha+\frac{3}{n}}\overset{n \to \infty}{\longrightarrow} \frac{1}{1-\alpha}.
\]
\end{proof}

Finally, it is easy to show that $\FOne$ can be implemented in polynomial time.

\begin{lemma}
\label{lem:F1-time}
With $2$ resources, $\FOne$ can be implemented in $O(n^2)$ time.
\end{lemma}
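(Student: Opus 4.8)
The plan is to bound the running times of the two phases of Algorithm~\ref{alg:F1} (computing $\FOne$) separately. The initialization phase --- forming $G_1,G_2$ and setting $\mathbf{A}_i=\frac{1}{n}\mathbf{d}_i$ together with the running remainder $C$ --- makes a single pass over the $n$ agents with $O(1)$ arithmetic each, hence costs $O(n)$. All the remaining work is in the \texttt{while} loop, and I would establish two claims: (i) the loop runs for $O(n)$ iterations, and (ii) each iteration costs $O(n)$. Multiplying these gives $O(n^2)$, which dominates the initialization and yields the lemma.

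Claim (ii) is routine bookkeeping: within one iteration, each of computing $P=\arg\min_{i\in G_2}A_{i,1}$, the value $\delta_0$, the quantities $|P|$ and $\sum_{i\in P}1/d_{i,1}$ (hence $\delta_1$ and $\delta_2$), the minimum $\delta^*$, and finally the updates of $\mathbf{A}_i$ and $C$ over the (at most $n$) agents in $P$, is a single linear scan over $N$ with $O(1)$ work per agent, so an iteration takes $O(n)$ time.

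Claim (i) is the substantive part: $P$ grows monotonically, and strictly in every iteration but the last. At the start of an iteration all members of $P$ share the common smallest value $t_1=\min_{i\in G_2}A_{i,1}$, and $\delta_0$ is exactly the gap from $t_1$ up to the next value $\min_{i\in N\setminus P}A_{i,1}$. If $\delta^*=\delta_1$ (resp.\ $\delta^*=\delta_2$) then after the update $c_1=0$ (resp.\ $c_2=0$) and the loop halts. Otherwise $\delta^*=\delta_0$, so after the update every agent of $P$ sits exactly at the former value $\min_{i\in N\setminus P}A_{i,1}$; hence in the next iteration the new minimum over $G_2$ is attained by the old set $P$ together with at least one further agent of $G_2$, so $|P|$ strictly increases. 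Since $1\le|P|\le n_2\le n$ throughout, there are at most $n$ iterations.

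The one step requiring care --- and the main obstacle --- is justifying that, when $\delta^*=\delta_0$ and the loop does not halt, the value $\min_{i\in N\setminus P}A_{i,1}$ is realized by a genuinely new agent of $G_2$, rather than only by agents of $G_1$ all sitting at fraction $\frac{1}{n}$ (in which case pushing $P$ to $\frac{1}{n}$ would add no new $G_2$ agent and the iteration count argument would stall). I would rule this out with the same accounting used in Lemma~\ref{lem:F1-4-properties}: if all agents outside $P$ already have at least $\frac{1}{n}$ of resource~$1$, then (using $n_1/n=1-\alpha$ and $|G_2|=n_2$, and that no $G_2$ agent ever exceeds $\frac{1}{n}$ of resource~$1$) one checks that the amount of resource~$1$ remaining equals $|P|(\tfrac1n-t_1)$, i.e.\ $\delta_1=\delta_0$; so that iteration already exhausts resource~$1$ and is the last one, and it is not required to enlarge $P$. (If one additionally sorts $G_2$ by $d_{i,1}$ at the outset so that $P$ is always a prefix, the per-iteration cost can be amortized to $O(\log n)$ and the bound improves to $O(n\log n)$, but $O(n^2)$ is all that is claimed here.)
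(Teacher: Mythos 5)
Your proof is correct and follows essentially the same route as the paper's: bound the number of while-loop iterations by $n$ via the monotone growth of $P$, and charge $O(n)$ work per iteration, giving $O(n^2)$. Your extra care in ruling out the stalling case (when the next level $\min_{i\in N\setminus P}A_{i,1}$ is held only by $G_1$ agents at $\frac{1}{n}$, in which case $\delta_1=\delta_0$ and the iteration is the last) fills in a detail the paper's one-line argument leaves implicit, and the accounting you give for it is sound.
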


\begin{proof}
Notice that $|P|$ is increasing in each round of step 2, so the number of rounds of step 2 is at most $n$.
Since each round of step 2 can be implemented in $O(n)$ time, $\FOne$ can be implemented in $O(n^2)$ time.
\end{proof}

\subsection{Mechanism $\FTwo$ and $\FTwos$}

According to Theorem \ref{thm:n-2-F1}, $\FOne$ has the worst performance when the population of two groups are balanced, i.e., when $\alpha$ is close to $\frac{1}{2}$, because in step 2 it only increases allocations of agents in one group ($G_2$).
In this case, a better strategy in step 2 is to increase allocations of agents from both groups.


\begin{algorithm}[t]
\DontPrintSemicolon
$C\leftarrow (c_1, c_2) = (1,1)$  \tcp*{remaining resources}
$G_1\leftarrow\{i \mid d_{i,1}=1\}$; $G_2\leftarrow\{i \mid d_{i,1}<1\}$ \\
\ForEach{$i \in N$}
{
  $\mathbf{A}_i\leftarrow\frac{1}{n} \mathbf{d}_i$ \tcp*{every agent receives $\frac{1}{n}$ dominant share}
  $C \leftarrow C-\mathbf{A}_i$
}
$(R_1,R_2) \leftarrow C$ \tcp*{remaining resources after step 1}
\While{$c_1>0$ and $c_2>0$}
{
  $P_1 \leftarrow \arg\min_{i \in G_1} A_{i,2}$ \\
  $P_2 \leftarrow \arg\min_{i \in G_2} A_{i,1}$ \\
  $(\delta^*_1,\delta^*_2) \leftarrow$ CalcStep $()$ \tcp*{calculate increasing steps}
  \ForEach{$k=1,2$}
  {
    \ForEach{$i \in P_k$}
    {
      $\mathbf{A}_i \leftarrow \mathbf{A}_i+ \frac{\delta_k^*}{d_{i,3-k}}\mathbf{d}_i$
      \tcp*{increase the non-dominant resource by the same $\delta_k^*$}
      $C \leftarrow C-\frac{\delta_k^*}{d_{i,3-k}}\mathbf{d}_i$
    }
  }
}
\Return $\mathbf{A}$
\caption{$\FTwo(\mathbf{d}_1,\mathbf{d}_2,\dots,\mathbf{d}_n)$}
\label{alg:F2}
\end{algorithm}

\begin{algorithm}[t]
    \DontPrintSemicolon

      \ForEach{$k = 1,2$}
      {
        $\delta_k \leftarrow \min\limits_{i \in N \setminus P_k} A_{i,3-k}-\min\limits_{i \in P_k} A_{i,3-k}$ \tcp*{increasing step when 2nd smallest fraction of resource $r_{3-k}$ is reached}
        $D_k \leftarrow \sum_{i \in P_k} \frac{1}{d_{i,3-k}}$ \\
        $\overline{\delta_k} \leftarrow \frac{c_{3-k}}{|P_k|+ D_k\frac{R_{3-k}}{R_k}}$
        \tcp*{increasing step when resource $r_{3-k}$ is used up}
        $\delta_k^* \leftarrow \min\{\delta_k,\overline{\delta_k}\}$
      }
      \If{$\frac{\delta_1^* D_1}{\delta_2^* D_2} \le \frac{R_1}{R_2}$}
      {
        $\delta_2^* \leftarrow \delta_1^* \cdot \frac{ D_1}{D_2} \cdot \frac{R_2}{R_1}$ \tcp*{decrease $\delta_2^*$ according to $\delta_1^*$}
      }
      \Else
      {
        $\delta_1^* \leftarrow \delta_2^* \cdot \frac{D_2}{D_1} \cdot \frac{R_1}{R_2}$ \tcp*{decrease $\delta_1^*$ according to $\delta_2^*$}
      }
      \Return $(\delta^*_1,\delta^*_2)$
    \caption{CalcStep $()$}
    \label{alg:F2-step}
    \end{algorithm}

Following this intuition, we propose mechanism $\FTwo$, described in Algorithm \ref{alg:F2}.
Mechanism $\FTwo$ also has two steps. Step 1 is the same as $\FOne$, where every agent gets $\frac{1}{n}$ dominant share.
In step 2, the mechanism increases allocations of agents from both groups, and within each group the method is the same as in $\FOne$, that is, within each group, only agents who have the smallest amount of the non-dominant resource will be allocated more resources, and they will be allocated the same fraction ($\delta_1^*$ or $\delta_2^*$) of the non-dominant resource.
In addition, $\FTwo$ controls the relative allocation rates $(\delta^*_1,\delta^*_2)$ of two groups such that the ratio between the increased dominant shares of two groups is proportional to the ratio between the remaining amounts of two resources after step 1.
Formally, let $\Delta S_1$ and $\Delta S_2$ be the sum of increased dominant share of agents in $G_1$ and $G_2$ in step 2 respectively.
Let $R_1=1-\frac{n_1}{n}-\frac{1}{n}\sum_{i \in G_2}d_{i,1}$ and $R_2=1-\frac{n_2}{n}-\frac{1}{n}\sum_{i \in G_1}d_{i,2}$ be the amount of remaining resources after step 1,
then $\FTwo$ ensures that
\begin{equation}
\label{eq:F2}
\frac{\Delta S_1}{\Delta S_2}=\frac{R_1}{R_2}.
\end{equation}
This condition is crucial to guarantee the good performance of $\FTwo$.

To compute the increasing steps $(\delta^*_1,\delta^*_2)$ (CalcStep $()$ in line 10 of Algorithm \ref{alg:F2}),
we calculate the largest increasing steps $(\delta^*_1,\delta^*_2)$ such that condition (\ref{eq:F2}) is satisfied and one of the following conditions is satisfied:
(a) one resource has been used up; (b) one agent has to be added into $P_1$ or $P_2$.
The concrete algorithm is given in Algorithm \ref{alg:F2-step}.

We now show that $\FTwo$ satisfies SI, EF, and PO, and its \fratio for SW is at most $\frac{4}{3}$.



\begin{restatable}{theorem}{FTWO}
\label{thm:n-2-F2}
With $2$ resources, mechanism $\FTwo$ can be implemented in polynomial time, satisfies SI, EF, and PO, and has
\[\AR_{\SW}(\FTwo) = \frac{4-2\alpha}{3-\alpha} \textrm{\quad and \quad} \AR_{\Util}(\FTwo) = \frac{2}{1+\alpha}.\]
\end{restatable}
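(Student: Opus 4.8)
The plan is to prove Theorem~\ref{thm:n-2-F2} by the same four-step route used for $\FOne$ in Lemmas~\ref{lem:F1-4-properties}--\ref{lem:F1-time}: polynomial running time, then the properties SI, EF, PO, then the two \fratio upper bounds, then matching lower bounds. For the running time I would first check that $R_1,R_2>0$ after Step~1 (using $\alpha>0$ and $d_{i,1}<1$ for every $i\in G_2$, so $\tfrac1n\sum_{i\in G_2}d_{i,1}<\alpha$, and symmetrically), so that the while loop is well defined; then each iteration either exhausts a resource and terminates the loop, or has $\delta_k^*$ equal to the second-smallest threshold $\delta_k$ for some $k$, which strictly enlarges $P_k$. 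Hence there are $O(n)$ iterations, each costing $O(n)$ for \textsc{CalcStep} and the updates, for an $O(n^2)$ bound. SI is immediate from Step~1, and PO holds because $\FTwo$ is non-wasteful and the loop exits exactly when one resource is exhausted.

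For EF I would use that within each group Step~2 is a water-filling on the off-dominant coordinate: every activated agent of $G_k$ has its $r_{3-k}$-share raised by the common amount $\delta_k^*$ in each round and $P_k$ is the current minimizer set, so in the final allocation $\mathbf{A}$ an agent $i\in G_k$ holds $A_{i,3-k}=\max\{d_{i,3-k}/n,\,t_k\}$ for some common level $t_k$, hence (by non-wastefulness) dominant share $\max\{1/n,\,t_k/d_{i,3-k}\}$; the former is non-decreasing and the latter non-increasing in $d_{i,3-k}$, and a short case analysis of the Leontief utilities shows these two monotonicities together rule out within-group envy (the coupling~(\ref{eq:F2}) only rescales the common within-group rate, so it does not disturb this). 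For cross-group envy, SI forces group $G_{3-k}$ to consume at least $n_{3-k}/n$ of resource $r_{3-k}$, so $\sum_{i\in G_k}A_{i,3-k}\le n_k/n$ and therefore $t_k=\max_{i\in G_k}A_{i,3-k}\le 1/n$; consequently each agent gets utility $\ge 1/n$ from its own bundle but at most $1/n$ from every bundle of the other group, so there is no envy across groups.

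The technical heart, and what I expect to be the main obstacle, is the pair of \fratio upper bounds. Write $\SW(\mathbf{A})=1+\Delta S_1+\Delta S_2$ with $\Delta S_1 R_2=\Delta S_2 R_1$ by~(\ref{eq:F2}), and split on which resource is exhausted in $\mathbf{A}$. Increasing a group's dominant share by $\Delta$ consumes exactly $\Delta$ of that group's dominant resource and strictly less than $\Delta$ of the other (off-dominant normalized demands are $\le 1$, and $<1$ strictly for $i\in G_2$); combined with the proportionality this forces, e.g.\ when $r_1$ is exhausted, $\Delta S_2>\tfrac{R_1R_2}{R_1+R_2}$ and $\Delta S_1+\Delta S_2>R_1$ (symmetrically for $r_2$), and these coarse bounds then have to be refined using that $\FTwo$ activates only the agents with the smallest off-dominant demand until a resource runs out. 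On the benchmark side, for any \emph{non-wasteful} SI, EF allocation $\mathbf{A}^*$ --- non-wastefulness of the benchmark is essential (the all-$\tfrac1n$ allocation is wasteful, and without this restriction the utilization benchmark would degenerate to $1$) --- write $\SW(\mathbf{A}^*)=1+\Delta S_1^*+\Delta S_2^*$; feasibility together with SI give the coupled inequalities $\Delta S_1^*+(\min_{i\in G_2}d_{i,1})\,\Delta S_2^*\le R_1$ and $(\min_{i\in G_1}d_{i,2})\,\Delta S_1^*+\Delta S_2^*\le R_2$, while EF additionally forbids $\mathbf{A}^*$ from loading a group's entire increase onto a single agent past a threshold (the remaining agents of that group must ``follow''), which keeps $\SW(\mathbf{A}^*)$ strictly below the naive value $1+R_1+R_2$. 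Feeding in the a~priori bounds $R_1\le\alpha$, $R_2\le 1-\alpha$ --- and the sharper $R_1\le\alpha(1-\min_{i\in G_2}d_{i,1})$, $R_2\le(1-\alpha)(1-\min_{i\in G_1}d_{i,2})$ --- one is left with an optimization over a small family of cases (which resource is exhausted; whether $\FTwo$ enlarges its active sets before exhausting a resource; which vertex of the benchmark program is optimal) from which the exact factors $\AR_{\SW}(\FTwo)=\tfrac{4-2\alpha}{3-\alpha}$ and $\AR_{\Util}(\FTwo)=\tfrac{2}{1+\alpha}$ must be extracted; isolating the configuration of $R_1,R_2$ and the two minimum off-dominant demands that is simultaneously worst for the social-welfare ratio is the delicate step, and it is exactly where the gain of the balancing rule~(\ref{eq:F2}) over $\FOne$'s one-sided rule is quantified.

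For the matching lower bounds I would mirror the construction of Lemma~\ref{lem:F1-lower}: take $n$ large and $\varepsilon=\tfrac1n$, tune the bulk demand vectors in each group so that the residuals $R_1,R_2$ after Step~1 realize the worst configuration found above, add a few ``special'' agents with a tiny off-dominant demand that an SI, EF allocation can load up but whose exploitation $\FTwo$ is forced by~(\ref{eq:F2}) to split across both groups, then exhibit the witnessing allocation $\mathbf{A}^*$ and verify SI and EF for it as in the earlier proofs (the special agents receive so little of their off-dominant resource that nobody envies them), and let $n\to\infty$.
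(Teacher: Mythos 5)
Your treatment of the routine parts is sound and essentially matches the paper: the $O(n^2)$ bound, SI, PO, and the EF argument (within-group water-filling on the off-dominant coordinate plus the observation that no agent ever exceeds $\tfrac1n$ of its non-dominant resource, which kills cross-group envy) are all fine. The problem is that the actual content of the theorem --- the exact bounds $\frac{4-2\alpha}{3-\alpha}$ and $\frac{2}{1+\alpha}$ --- is precisely the part you leave as ``must be extracted'' from an unspecified case optimization, so the proposal is a plan around the gap rather than a proof. The paper closes this gap with two concrete comparison lemmas that are absent from your sketch: (i) for the group whose dominant resource is exhausted under $\FTwo$, the mechanism's total dominant share is at least that of the SW-optimal SI+EF allocation (Lemma~\ref{lem:F2-one-side}, proved by rescaling the benchmark's bundles for the other group by $\gamma=x/x^*$ and contradicting feasibility via the water-filling optimality of $\FTwo$ within a group); and (ii) for the non-exhausted group, $\Delta S_{r_2}\ge\frac12\Delta S'_{r_2}$ for \emph{every} SI+EF allocation $\mathbf{A}'$ (Lemma~\ref{lem:F2-other-side}, proved by splitting on $\beta=\Delta S_k/R_k$ versus $\tfrac12$ and using that the marginal cost $D_{r_2}$ of raising the water level is non-increasing, so doubling the budget of resource $r_1$ at most doubles the gain in resource $r_2$). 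Plugging (i)--(ii) into $\frac{1+\Delta S^*_{r_2}}{1+\frac12\Delta S^*_{r_2}}$ with $\Delta S^*_{r_2}\le 1-\alpha$ gives $\frac{4-2\alpha}{3-\alpha}$ in three lines; the utilization bound needs the further comparisons $y_1^*\le 2y_1$, $y_2^*\le 2y_2$ (with $R_2\le R_1\le\alpha$ in one case), which your sketch does not touch. Your coarse inequalities ($\Delta S_2>\frac{R_1R_2}{R_1+R_2}$ etc.) are true but vacuous as $R_1\to 0$, which is exactly the worst-case regime.

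Moreover, one ingredient you do commit to is wrong in the direction you need it: you assert that EF ``forbids $\mathbf{A}^*$ from loading a group's entire increase onto a single agent past a threshold.'' It does not --- the paper's own lower-bound instances (both for $\FOne$ and for $\FTwo$/$\FTwos$ in Lemma~\ref{lem:F2-lower}) have a single special agent with tiny off-dominant demand absorbing essentially all of the slack $R_2\approx 1-\alpha$ while EF holds, because EF only constrains that agent's consumption of the \emph{other} resource. So an LP built from your proposed constraints would not isolate the right worst case; the $\tfrac{4-2\alpha}{3-\alpha}$ constant comes from the factor-$2$ comparison in (ii), not from any EF cap on the benchmark's concentration. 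Finally, the matching lower bound is also only gestured at; the paper's construction makes the post-step-1 residual $(\epsilon,1-\alpha-\epsilon)$ so that condition~(\ref{eq:F2}) forces $\FTwo$ to waste roughly half of $R_2$ on $G_1$, and you would need to exhibit and verify that instance (or an equivalent one) explicitly.
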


We prove the claimed upper bounds via Lemmas \ref{lem:F2-one-side} to \ref{lem:F2-upper-uti}, and the running time and the three properties in Lemma \ref{lem:F2-time-property}. 
The proof of the claimed lower bounds is deferred to the proof of Theorem \ref{thm:n-2-F2s} (Lemma \ref{lem:F2-lower}), where we show the result for both $\FTwo$ and $\FTwos$.

If $R_1=0$ or $R_2=0$, then one resource is used up when all agents have the same dominant share $\frac{1}{n}$ and the allocation of $\FTwo$ is already optimal subject to SI.
In the following analysis we assume $R_1>0$ and $R_2>0$, which also implies that $\Delta S_1>0$ and $\Delta S_2>0$.

We start with the upper bound~$\AR_{\SW}(\FTwo) \le \frac{4-2\alpha}{3-\alpha}$.
Denote the allocation under $\FTwo$ by $\mathbf{A}$ and the SW-maximizing
allocation satisfying SI and EF by $\mathbf{A}^*$.
We can interpret $\mathbf{A}^*$ as the result of a two-step mechanism (similar to the two steps of $\FTwo$), where in step~1 every agent receives $\frac{1}{n}$ dominant share, which is the same as step~1 of $\FTwo$, and the remaining resources are allocated in step~2.
Accordingly, we denote the sum of dominant shares of agents in $G_1$ and $G_2$ in $\mathbf{A}^*$ by $(1-\alpha)+\Delta S^*_1$ and $\alpha+\Delta S^*_2$, respectively.
We then provide two lemmas that compare $\Delta S_k$ and $\Delta S^*_k$ for $k \in \{1,2\}$.



The first lemma shows that for the group whose dominant resource is used up in $A$, agents in that group do not receive fewer dominant resource in $\mathbf{A}$ than in $\mathbf{A}^*$.
Intuitively, suppose resource~1 is used up in $\mathbf{A}$, then the only way to improve the SW is to increase the allocations for $G_2$ and decrease that for $G_1$.
Since $d_{i,1}<d_{i,2}=1$ for all $i \in G_2$, the increase in the dominant share in $G_2$ will outweigh the decrease in the dominant share in $G_1$, thus improving the SW.

\begin{lemma}
\label{lem:F2-one-side}
For any $k \in \{1,2\}$, if resource $k$ is used up in $\mathbf{A}$, then $\Delta S_k \ge \Delta S^*_k$.
\end{lemma}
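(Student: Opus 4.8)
The plan is to argue by contradiction. Assume without loss of generality that $k=1$, so resource~$1$ is used up in $\mathbf{A}$, and suppose for contradiction that $\Delta S_1^*>\Delta S_1$; I will exhibit an SI and EF allocation whose social welfare strictly exceeds $\SW(\mathbf{A}^*)$. The starting point is a resource-$1$ accounting. Since every agent $i\in G_1$ receives an amount of resource~$1$ equal to its dominant share, $G_1$ consumes $\frac{n_1}{n}+\Delta S_1$ of resource~$1$ in $\mathbf{A}$ and $\frac{n_1}{n}+\Delta S_1^*$ in $\mathbf{A}^*$. Because $\mathbf{A}$ exhausts resource~$1$ whereas $\mathbf{A}^*$ is merely feasible, step~$2$ of $\mathbf{A}$ spends all of $R_1$ on resource~$1$, so the amount $\hat B$ it spends on $G_2$ in step~$2$ equals $R_1-\Delta S_1$, while the corresponding amount $\hat B^*$ for $\mathbf{A}^*$ satisfies $0\le\hat B^*\le R_1-\Delta S_1^*<\hat B$, the nonnegativity because SI forces every $G_2$-agent's step-$2$ increment to be nonnegative.

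Next I would show that $\FTwo$'s within-$G_2$ rule is welfare-extremal. Let $g(B)$ denote the maximum of $\sum_{i\in G_2}(\text{step-2 dominant-share increment of }i)$ over all ways of spending a resource-$1$ budget $B$ on $G_2$ in step~$2$ that are SI and envy-free within $G_2$. Expanding the Leontief envy-freeness conditions among agents of $G_2$ shows that such an allocation is essentially a ``water level'' allocation, the binding constraints being those between agents at consecutive resource-$1$ levels; hence the optimum is attained by the greedy rule of raising the agents with currently least resource~$1$, which is exactly $\FTwo$, so $\Delta S_2=g(\hat B)$. Moreover $g$ is piecewise linear and each piece has slope equal to an average of reciprocals $\frac{1}{d_{i,1}}>1$ over the currently active agents, so $g(B)-g(B')>B-B'$ for all $B>B'\ge 0$; and since $\mathbf{A}^*$ restricted to $G_2$ is SI and envy-free within $G_2$ and spends exactly $\hat B^*$ there, $\Delta S_2^*\le g(\hat B^*)\le g(R_1-\Delta S_1^*)$.

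Then I would assemble the contradiction:
\[
\SW(\mathbf{A})-\SW(\mathbf{A}^*)=(\Delta S_1-\Delta S_1^*)+(\Delta S_2-\Delta S_2^*)\ \ge\ \bigl(\Delta S_1+g(R_1-\Delta S_1)\bigr)-\bigl(\Delta S_1^*+g(R_1-\Delta S_1^*)\bigr).
\]
The map $x\mapsto x+g(R_1-x)$ is strictly decreasing because every slope of $g$ exceeds $1$, so from $\Delta S_1<\Delta S_1^*$ we get $\SW(\mathbf{A})>\SW(\mathbf{A}^*)$. Since $\mathbf{A}$ is itself SI and EF (Lemma~\ref{lem:F2-time-property}), this contradicts the choice of $\mathbf{A}^*$ as a welfare-maximizing SI and EF allocation, finishing the case $k=1$; the case $k=2$ follows by renaming the two resources.

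I expect the main obstacle to be the extremality claim $\Delta S_2=g(\hat B)$ (as opposed to the trivial inequality $\Delta S_2\le g(\hat B)$): it requires identifying which of the many Leontief envy-freeness inequalities within $G_2$ are tight at the optimum and verifying that water-filling solves the resulting linear program. A lesser technical point is the forward reference to the SI and EF of $\FTwo$; if one prefers to avoid it, one can replace $\mathbf{A}$ in the argument above by the explicit allocation that copies $\FTwo$'s allocation on $G_2$, gives step-$1$ shares to all of $G_1$, and then fills $G_1$ greedily (raising the agents with least resource~$2$) with all remaining resource~$1$, and verify directly that this allocation is SI, EF, and has social welfare $\SW(\mathbf{A})$.
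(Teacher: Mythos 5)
Your proof is correct, but it takes a different route from the paper's. The paper also reduces everything to the extremality of the water-filling rule within $G_2$ (it asserts, as you do, that raising the agents with least resource~1 is the most efficient EF-and-SI way to convert a resource-1 budget into dominant share for $G_2$), but it then closes the argument with a scaling trick: assuming $\Delta S_1^*>\Delta S_1$ with gap $\delta$, it combines the resource-1 accounting $x^*\le x-\delta$, the strict bound $x^*<\alpha+\Delta S_2^*$ (this is where the standing assumption $R_1,R_2>0$ enters), and the optimality inequality $\Delta S_2^*+\delta\ge\Delta S_2$, and multiplies $\mathbf{A}^*$ restricted to $G_2$ by $\gamma=x/x^*$ to produce an SI, within-$G_2$-EF allocation that uses $x$ of resource~1 but strictly more than $\alpha+\Delta S_2$ of resource~2, contradicting extremality. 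You instead make the exchange-rate intuition quantitative: you identify the value function $g$ of the water-filling rule, observe its slopes are averages of $1/d_{i,1}>1$ over the active set, and conclude that the map $x\mapsto x+g(R_1-x)$ is strictly decreasing, so shifting dominant share from $G_2$ to $G_1$ strictly lowers welfare, contradicting optimality of $\mathbf{A}^*$ directly. Your version buys a rigorous form of the paper's informal remark that the gain in $G_2$ outweighs the loss in $G_1$, at the cost of needing the piecewise-linear structure of $g$ rather than just the bare extremality statement; both proofs equally rely on the (only sketched) verification that water-filling solves the within-$G_2$ LP, and both need $\mathbf{A}$ itself to be SI and EF. One small caveat on your closing ``renaming'' step: for $k=2$ the active sets lie in $G_1$, which may contain agents with $d_{i,2}=1$, so the reciprocals are only $\ge 1$ there; strictness of the slopes (and hence the strict decrease you need) follows because, under the paper's standing assumption $R_2>0$, some agent of $G_1$ has $d_{i,2}<1$ and every active set contains such an agent --- this mirrors exactly where the paper's proof invokes $R_1,R_2>0$, but it deserves a sentence rather than being absorbed into ``by symmetry.''
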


\begin{proof}
Due to symmetry, it suffices to show the result for $k=1$.
Suppose that resource~1 is used up in $\mathbf{A}$.
Denote the sum of resource~1 received by agents in $G_2$ in $\mathbf{A}$ and $\mathbf{A}^*$ by $x$ and $x^*$ respectively.
Notice that at step 2 of $\FTwo$ we always increase allocations of agents in $G_2$ with the smallest $A_{i,1}$, which is the most efficient way of using $x$ of resource~1 to maximize the received amount of resource~2 subject to EF.
Therefore, for any other allocation satisfying SI and EF within $G_2$, if agents in $G_2$ receive at most $x$ of resource~1, then they can receive at most $\alpha+\Delta S_2$ of resource~2.
We use this property to show that $\Delta S_1 \ge \Delta S^*_1$.

Suppose towards a contradiction that $\Delta S^*_1 > \Delta S_1$ and let $\delta=\Delta S^*_1 - \Delta S_1>0$.
We prepare some inequalities.
First, since resource $1$ is used up in $\mathbf{A}$, we have that
\begin{equation}
\label{eq:F2-ratio-1}
x^* \le x-\delta.
\end{equation}
Next, since $d_{i,1}<1$ for some $i \in G_2$ from the assumption $R_1>0$ and $R_2>0$, in $\mathbf{A}^*$ the amount of resource~1 received by agents in $G_2$ is less than the amount of resource~2, that is,
\begin{equation}
\label{eq:F2-ratio-2}
x^*< \alpha+\Delta S^*_2.
\end{equation}
Finally, since $\text{SW}(\mathbf{A}^*)=1+\Delta S^*_1+\Delta S^*_2 \ge \text{SW}(\mathbf{A})=1+\Delta S_1+\Delta S_2$,
we have $\Delta S_2-\Delta S^*_2 \le \Delta S^*_1-\Delta S_1=\delta$, that is,
\begin{equation}
\label{eq:F2-ratio-3}
\Delta S^*_2+\delta \ge \Delta S_2.
\end{equation}

Now, based on $\mathbf{A}^*$, we construct a partial allocation $\mathbf{A}'$ for $G_2$ by multiplying each $\mathbf{A}^*_i$ by $\gamma$ for all $i \in G_2$, where $\gamma=\frac{x}{x^*}$.
In $\mathbf{A}'$ every agent in $G_2$ has at least $\frac{1}{n}$ dominant share since $\mathbf{A}^*$ satisfies SI and the allocations of agents in $G_2$ are increased from $\mathbf{A}^*$ to $\mathbf{A}'$.
EF is also satisfied within $G_2$ since $\mathbf{A}^*$ satisfies EF and the allocations of agents in $G_2$ are multiplied by the same factor $\gamma$  to get $\mathbf{A}'$.
However, in $\mathbf{A}'$ the sum of resource~2 received by agents in $G_2$ is
\begin{equation*}
\begin{aligned}
\gamma (\alpha+\Delta S^*_2) =\frac{x}{x^*} (\alpha+\Delta S^*_2) & \overset{(\ref{eq:F2-ratio-1})}{\ge} \alpha+\Delta S^*_2+ \frac{\delta}{x^*} (\alpha+\Delta S^*_2) \\
& \overset{(\ref{eq:F2-ratio-2})}{>} \alpha+ \Delta S^*_2 +\delta \\ 
& \overset{(\ref{eq:F2-ratio-3})}{\ge} \alpha+\Delta S_2.
\end{aligned}
\end{equation*}
Then, in $\mathbf{A}'$ agents in $G_2$ have $x$ of resource~1 but more than $\alpha+\Delta S_2$ of resource~2, which is a contradiction.
Therefore, $\Delta S_1 \ge \Delta S^*_1$.
\end{proof}

The second lemma provides a lower bound for the other group whose dominant resource is not used up in $\mathbf{A}$.
For this group, we show that the increased dominant share in step 2 is no less than half of the increased dominant share in step 2 of any allocation satisfying SI and EF.
This lemma will also be used in the analysis of utilization.

\begin{lemma}
\label{lem:F2-other-side}
Let $\mathbf{A}'$ be an arbitrary allocation satisfying SI and EF.
Denote the sum of dominant shares of agents in $G_1$ and $G_2$ in $\mathbf{A}'$ by $(1-\alpha)+\Delta S'_1$ and $\alpha+\Delta S'_2$, respectively.
Let $r_1 \in \{1,2\}$ be the resource that is used up in $\mathbf{A}$ and let $r_2=3-r_1$ represents the other,
then $\Delta S_{r_2} \ge \frac{1}{2} \Delta S'_{r_2}$.
\end{lemma}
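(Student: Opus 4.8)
The plan is to exploit the defining invariant~\eqref{eq:F2} of $\FTwo$, namely $\frac{\Delta S_1}{\Delta S_2}=\frac{R_1}{R_2}$, together with a feasibility (resource-counting) argument for the arbitrary allocation $\mathbf{A}'$. Without loss of generality take $r_1=1$, so resource~1 is used up in $\mathbf{A}$ and we must show $\Delta S_2 \ge \tfrac12 \Delta S'_2$. The key structural fact to establish first is that within step~2 of $\FTwo$, for each group the mechanism always pours the current common fraction of the \emph{non-dominant} resource into exactly the agents who have the least of it; hence the amount of resource~2 consumed by $G_2$ to produce an increase of $\Delta S_2$ in dominant share is the minimum possible subject to EF within $G_2$ — this is the same "most efficient use" observation already invoked in the proof of Lemma~\ref{lem:F2-one-side}. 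Symmetrically, the amount of resource~1 consumed by $G_1$ to produce $\Delta S_1$ is minimal. So in $\mathbf{A}'$, producing $\Delta S'_1$ for $G_1$ costs \emph{at least} the same amount of resource~1 that $\FTwo$ uses to produce $\Delta S'_1$ for $G_1$, and likewise for $G_2$ with resource~2.

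Next I would set up the counting. Let $a_1$ be the amount of resource~1 used by $G_2$ in step~2 of $\mathbf{A}$ and $a_2$ the amount of resource~2 used by $G_1$ in step~2 of $\mathbf{A}$; by definition of $R_1,R_2$ and because resource~1 is exactly exhausted in $\mathbf{A}$, we have $\Delta S_1 + a_1 = R_1$, while $a_2 \le R_2$ is all that is guaranteed on the other side (resource~2 need not be used up). The crucial point is the invariant: $\Delta S_1 R_2 = \Delta S_2 R_1$. For $\mathbf{A}'$, feasibility of resource~1 gives $\Delta S'_1 + (\text{resource~1 used by }G_2) \le R_1$; since the step-2 allocation of $G_1$ in $\mathbf{A}'$ uses at least as much resource~1 per unit of dominant share as... wait — actually $G_1$'s dominant resource \emph{is} resource~1, so $G_1$ in $\mathbf{A}'$ uses exactly $\Delta S'_1$ of resource~1 in step~2, and the resource~1 used by $G_2$ in $\mathbf{A}'$ is at least the "efficient" amount the mechanism would charge, which is a nondecreasing function of $\Delta S'_2$. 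Feasibility then reads
\[
\Delta S'_1 + g_2(\Delta S'_2) \;\le\; R_1,
\]
where $g_2(\cdot)$ is the minimal resource-1 cost of delivering a given extra dominant share to $G_2$ under EF, and $g_2$ is exactly the function realized by $\FTwo$ on $G_2$. Combining with the invariant $\Delta S_1 = \tfrac{R_1}{R_2}\Delta S_2$ and $\Delta S_1 + g_2(\Delta S_2) = R_1$, and using convexity/linearity of $g_2$ on $G_2$ (each agent in $G_2$ has a fixed ratio $d_{i,1}$, so $g_2$ is piecewise linear and I can bound $g_2(2\Delta S_2)\ge 2 g_2(\Delta S_2) \ge 2(R_1-\Delta S_1)=2R_1-2\Delta S_1$ ... ), one deduces that if $\Delta S'_2 > 2\Delta S_2$ the left-hand side exceeds $R_1$, a contradiction. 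A parallel check must confirm resource~2 feasibility does not bind first, using $\Delta S_{1} = \tfrac{R_1}{R_2}\Delta S_2$ and $a_2\le R_2$.

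I expect the main obstacle to be pinning down the convexity/superadditivity property of the "efficient cost" function $g_2$ precisely enough to get the clean factor of $2$, and handling the interaction with the invariant~\eqref{eq:F2}: one has to argue that pushing $\Delta S'_2$ beyond $2\Delta S_2$ forces $\Delta S'_1$ below $\Delta S_1 - \Delta S_2$ (roughly), i.e. $G_1$ in $\mathbf{A}'$ gives up more resource~1 than $G_2$ can possibly use, contradicting either EF within $G_2$ or feasibility. A subtlety is that the "cost" of extra dominant share for $G_2$ is measured in resource~1, whereas the invariant is phrased in dominant shares; translating between the two via the agents' $d_{i,1}$ values and the fact that $\FTwo$ keeps $G_2$'s step-2 allocation EF-tight is where the calculation lives. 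Once the superadditivity of $g_2$ and the feasibility inequality are in place, the factor-$\tfrac12$ bound should drop out by a short contradiction argument analogous in spirit to that of Lemma~\ref{lem:F2-one-side}.
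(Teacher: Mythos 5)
Your core argument is, at bottom, the paper's own: the fact you call convexity/superadditivity of the cost function $g_2$ is exactly the paper's observation that the harmonic-mean efficiency $D_{r_2}$ of the active set in Algorithm \ref{alg:F2-step} only deteriorates as step 2 proceeds, and your feasibility inequality $\Delta S'_1+g_2(\Delta S'_2)\le R_1$ is the paper's remark that in any SI+EF allocation the agents of $G_{r_2}$ can consume at most $R_{r_1}$ of resource $r_1$ beyond step 1. Granting (as the paper also does, without detailed proof) that the water-filling on the smallest $A_{i,1}$ is the resource-1-cheapest way to deliver dominant share to $G_2$ under SI and EF, this part of your plan goes through and would yield $g_2(\Delta S'_2)\le R_1$ and $g_2(2\Delta S_2)\ge 2g_2(\Delta S_2)=2(R_1-\Delta S_1)$.

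The gap is in the final contradiction. From these two inequalities, $\Delta S'_2>2\Delta S_2$ is impossible only if $2(R_1-\Delta S_1)>R_1$, i.e.\ only if $\Delta S_1<\tfrac12 R_1$. When $\Delta S_1\ge\tfrac12 R_1$, resource-1 counting excludes nothing (indeed $G_2$ received less than half of $R_1$ in the mechanism's step 2, so an allocation $\mathbf{A}'$ could in principle more than double $G_2$'s resource-1 budget), and your argument as written proves nothing in that regime, although the lemma still has to hold there. This is precisely where the invariant (\ref{eq:F2}) earns its keep, and it is the paper's first, easy case: with $\beta=\frac{\Delta S_1}{R_1}=\frac{\Delta S_2}{R_2}\ge\frac12$ one gets $\Delta S_{r_2}\ge\frac12 R_{r_2}\ge\frac12\Delta S'_{r_2}$ directly, since SI forces $\Delta S'_{r_2}\le R_{r_2}$ (agents of $G_{r_1}$ retain at least their step-1 share of resource $r_2$). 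Your "parallel check that resource-2 feasibility does not bind first" does not supply this missing case -- it concerns which resource is exhausted in the mechanism's run, which is already given by hypothesis, not the feasibility of $\mathbf{A}'$. So the fix is a one-line case split on $\beta\gtrless\frac12$, but without it the proof is incomplete; relatedly, the invariant is not really what drives your resource-1 counting (that branch only needs $\Delta S_1<\tfrac12 R_1$), its role is to transfer the complementary hypothesis to the other group and resource.
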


\begin{proof}
Let $\beta=\frac{\Delta S_1}{R_1}=\frac{\Delta S_2}{R_2}$.
If $\beta \ge \frac{1}{2}$, then
$\Delta S_i \ge \frac{1}{2} R_i \ge \frac{1}{2} \Delta S'_i$ for any $i\in\{1,2\}$.
Then it suffices to show $\Delta S_{r_2} \ge \frac{1}{2} \Delta S'_{r_2}$ when $\beta < \frac{1}{2}$.
Since $\mathbf{A}'$ satisfies SI, we have $\Delta S'_i \ge 0$ for each $i \in \{1,2\}$.
Similar as $\mathbf{A}^*$, we can interpret $\mathbf{A}'$ as the result of a two-step mechanism, where in step~1 every agent receives $\frac{1}{n}$ dominant share and in step~2 every agent receives its remaining resources in $\mathbf{A}'$.
Note that in step~2 of $\mathbf{A}'$ agents in $G_{r_2}$ can receive at most $R_{r_1}$ of resource~$r_1$.
Our goal is to show that when receiving at most $R_{r_1}$ of resource~$r_1$, they can receive at most $2\Delta S_{r_2}$ of resource~$r_2$.

To maximize the utilization of resource~$r_2$ by using a fixed amount of resource~$r_1$ for agents in $G_{r_2}$ subject to SI and EF, we should first give every agent $\frac{1}{n}$ dominant share and then always increase the share of agents in $G_{r_2}$ with the smallest fraction of resource~$r_1$, which is exactly what $\FTwo$ does.
Observe that the average $d_{i,r_1}$ of agents $i \in G_{r_2}$ with the smallest $A_{i,r_1}$ (i.e., $P_{r_2}$ in Algorithm \ref{alg:F2}) is increasing in step 2 of $\FTwo$, or equivalently $D_{r_2}$ in Algorithm \ref{alg:F2-step} is decreasing.
Moreover, in step 2 of $\FTwo$ agents in $G_{r_2}$ use more than $\frac{1}{2}R_{r_1}$ of resource $r_1$ since resource $r_1$ is used up and $\Delta S_{r_1} < \frac{1}{2}R_{r_1}$.
So even if we allocate all $R_{r_1}$ of resource $r_1$ to agents in $G_{r_2}$ in step~2 of $\FTwo$, they can use at most $2\Delta S_{r_2}$ of resource $r_2$ subject to EF.
Thus $\Delta S'_{r_2} \le 2\Delta S_{r_2}$.
\end{proof}

We are ready to show the claimed upper bound for SW.
\begin{lemma}
\label{lem:F2-upper-SW}
With $2$ resources, $\AR_{\SW}(\FTwo) \le \frac{4-2\alpha}{3-\alpha}$.
\end{lemma}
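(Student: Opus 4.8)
Denote by $\mathbf{A}$ the allocation returned by $\FTwo$ and by $\mathbf{A}^*$ an SI+EF allocation of maximum social welfare. Using the two‑step decomposition shared by both, write $\SW(\mathbf{A})=1+\Delta S_1+\Delta S_2$ and $\SW(\mathbf{A}^*)=1+\Delta S_1^*+\Delta S_2^*$, where $\Delta S_k^*\ge 0$ by SI. The plan is to assemble enough inequalities relating the four increments $\Delta S_k,\Delta S_k^*$ to each other and to $\alpha$, and then reduce the statement to a short constrained optimization. Throughout I assume resource $1$ is used up in $\mathbf{A}$; the case that resource $2$ is used up is handled the same way with the roles of $G_1,G_2$ and of $\alpha,1-\alpha$ interchanged, and (because $\alpha\le\frac12$ breaks the symmetry between the two resources) it yields the weaker ratio $\frac{1+2\alpha}{1+\alpha}\le\frac{4-2\alpha}{3-\alpha}$.

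First I would invoke the two structural lemmas already in hand. Lemma~\ref{lem:F2-one-side} (with $k=1$) gives $\Delta S_1^*\le\Delta S_1$, and Lemma~\ref{lem:F2-other-side}, applied with $\mathbf{A}'=\mathbf{A}^*$, gives $\Delta S_2^*\le 2\Delta S_2$; hence $\SW(\mathbf{A}^*)\le 1+\Delta S_1+2\Delta S_2$. This bound alone, combined with $\SW(\mathbf{A})\ge 1$, only yields a ratio of roughly $\tfrac32$, so further structure is essential.

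Next I would collect the feasibility and $\FTwo$‑specific facts that pin the remaining slack. (i) The $\FTwo$ invariant $\Delta S_1/\Delta S_2=R_1/R_2$ from Eq.~(\ref{eq:F2}). (ii) Since resource $1$ is used up in $\mathbf{A}$, the total dominant share of $G_1$ equals $(1-\alpha)+\Delta S_1$, so $G_2$'s resource‑$1$ consumption is exactly $\alpha-\Delta S_1$, and since $d_{i,1}<1$ for $i\in G_2$ this consumption is strictly below $G_2$'s resource‑$2$ consumption $\alpha+\Delta S_2$. (iii) Applying SI and resource feasibility to $\mathbf{A}^*$ (each agent in $G_k$ uses at least $\tfrac1n d_{i,r}$ of every resource $r$) gives $\Delta S_k^*\le R_k$ for $k=1,2$, hence $\SW(\mathbf{A}^*)\le 1+R_1+R_2$. (iv) The greedy rule of $\FTwo$ — as exploited inside the proof of Lemma~\ref{lem:F2-one-side} — makes $\FTwo$'s $G_2$‑allocation lie on the concave frontier of resource‑$2$ attainable by $G_2$ as a function of its resource‑$1$ budget, which is what drives the factor $2$ in step one and also caps how much resource $2$ any SI+EF allocation can route through $G_2$. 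Combining (i)--(iv) yields the crucial consequence that $\Delta S_2$ cannot be large while $\Delta S_1$ is small: because resource $1$ is the bottleneck in $\mathbf{A}$, the extra resource $2$ captured by $G_2$ in step $2$ is controlled by the amount of resource $1$ it can absorb, and through (i) this forces $\Delta S_1$ to be a nontrivial multiple of $\Delta S_2$ governed by $\alpha$.

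Finally I would feed these into the objective. Writing $s=\Delta S_1+\Delta S_2$ (so $\SW(\mathbf{A})=1+s$) and $t=\Delta S_2$, the ratio is at most $\min\!\bigl(\tfrac{1+s+t}{1+s},\ \tfrac{1+R_1+R_2}{1+s}\bigr)$, where the inequalities of the previous paragraph bound $t$ from above in terms of $s$ and $\alpha$, and bound $R_1+R_2$ in terms of $s$. A short case split — one regime in which the spare resource $2$ is scarce, so the first bound controls, and one in which it is plentiful, so the cap $1+R_1+R_2$ becomes active — shows that the constrained maximum of this expression is exactly $\frac{4-2\alpha}{3-\alpha}$, attained in the limit of an instance of the type built in Lemma~\ref{lem:F2-lower}, in which $G_1$ contains a single agent whose non‑dominant demand tends to $0$. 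The main obstacle is precisely this last step: because the naive combination of Lemmas~\ref{lem:F2-one-side} and~\ref{lem:F2-other-side} overshoots, one must carefully interleave the invariant~(\ref{eq:F2}), the exact‑depletion identity $\Delta S_1=\alpha-(\text{resource }1\text{ used by }G_2)$ together with the bound (resource $1$ used by $G_2) < \alpha+\Delta S_2$, and the cap $\SW(\mathbf{A}^*)\le 1+R_1+R_2$; pinning down which of these constraints are binding and carrying out the resulting small optimization to land on the constant $\frac{4-2\alpha}{3-\alpha}$ is the delicate part of the argument.
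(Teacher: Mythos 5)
Your plan stalls exactly where the paper's proof is simplest, and the reason is a misdiagnosis. You combine Lemma~\ref{lem:F2-one-side} and Lemma~\ref{lem:F2-other-side} only in the form $\SW(\mathbf{A}^*)\le 1+\Delta S_1+2\Delta S_2$, conclude that this ``overshoots'' (giving only about $\tfrac32$), and therefore bring in the invariant~(\ref{eq:F2}), the exact-depletion identity, the cap $\SW(\mathbf{A}^*)\le 1+R_1+R_2$, and a final constrained optimization. In fact the two lemmas together with the trivial cap $\Delta S^*_{r_2}\le R_{r_2}\le\max\{R_1,R_2\}\le 1-\alpha$ (your item (iii)) already yield the bound, provided you substitute in the other order: since $\Delta S^*_{r_1}\le\Delta S_{r_1}$ you may drop the $r_1$-increments from both numerator and denominator (assuming, harmlessly, $\Delta S^*_{r_2}\ge\Delta S_{r_2}$, as otherwise the ratio is at most $1$), obtaining $\AR_{\SW}(\FTwo)\le\frac{1+\Delta S^*_{r_2}}{1+\Delta S_{r_2}}$; then lower-bound the denominator by $1+\frac12\Delta S^*_{r_2}$ via Lemma~\ref{lem:F2-other-side}; finally use that $x\mapsto\frac{1+x}{1+x/2}$ is increasing and $\Delta S^*_{r_2}\le 1-\alpha$ to get $\frac{2-\alpha}{1+\frac{1-\alpha}{2}}=\frac{4-2\alpha}{3-\alpha}$. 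Your lossier route keeps $\Delta S_{r_1}$, replaces the numerator by $1+2\Delta S_{r_2}$ and caps $\Delta S_{r_2}$, which indeed only gives $\frac{3-2\alpha}{2-\alpha}$; the fix is to cap $\Delta S^*_{r_2}$, not $\Delta S_{r_2}$. Because you did not see this, the decisive step of your write-up --- ``a short case split shows that the constrained maximum is exactly $\frac{4-2\alpha}{3-\alpha}$'' --- is never carried out, and you yourself flag it as the main obstacle. As it stands the proposal is an incomplete sketch: the extra machinery you list is not needed for this lemma (condition~(\ref{eq:F2}) enters only inside the proof of Lemma~\ref{lem:F2-other-side}), and the quantitative conclusion is asserted rather than proved.

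Two smaller points. Your dispatch of the case in which resource~$2$ is used up is wrong as stated: interchanging $\alpha$ and $1-\alpha$ in $\frac{4-2\alpha}{3-\alpha}=\frac{2\left(1+(1-\alpha)\right)}{2+(1-\alpha)}$ gives $\frac{2+2\alpha}{2+\alpha}$, not $\frac{1+2\alpha}{1+\alpha}$, and the inequality you claim, $\frac{1+2\alpha}{1+\alpha}\le\frac{4-2\alpha}{3-\alpha}$, fails at $\alpha=\frac12$ (it reads $\frac43\le\frac65$). With the correct expression the case is indeed dominated, because there the cap is $\Delta S^*_{r_2}\le R_1\le\alpha\le 1-\alpha$. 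Also, your informal claim that the step-2 structure ``forces $\Delta S_1$ to be a nontrivial multiple of $\Delta S_2$'' is never made into an explicit inequality, so even the ingredients of your intended optimization are not pinned down.
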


\begin{proof}
Let $r_1 \in \{1,2\}$ be the resource that is used up in $\mathbf{A}$ and let $r_2=3-r_1$ represents the other.
According to Lemma \ref{lem:F2-one-side} and Lemma \ref{lem:F2-other-side}, we have $\Delta S_{r_1} \ge \Delta S_{r_1}^*$ and $\Delta S_{r_2} \ge \frac{1}{2}\Delta S_{r_2}^*$.
Then we have
\begin{equation*}
\begin{aligned}
\AR_{\SW}(\FTwo) 
\le \frac{1+\Delta S^*_1+\Delta S^*_2}{1+\Delta S_1+\Delta S_2} 
\le \frac{1+\Delta S^*_{r_2}}{1+\Delta S_{r_2}} 
\le \frac{1+\Delta S^*_{r_2}}{1+\frac{1}{2}\Delta S^*_{r_2}} 
\le \frac{2-\alpha}{1+\frac{1-\alpha}{2}} 
= \frac{4-2\alpha}{3-\alpha},
\end{aligned}
\end{equation*}
where the last inequality is due to $\max\{\Delta S^*_1, \Delta S^*_2\} \le \max\{R_1,R_2\} \le 1-\alpha$.
\end{proof}

Then we show the claimed upper bound for utilization, where we need to focus on the resource not fully used.
Lemma \ref{lem:F2-other-side} already provides a lower bound for one group on that resource.
The main task in the following lemma is to show a similar lower bound for the other group on that resource.

\begin{lemma}
\label{lem:F2-upper-uti}
With $2$ resources, $\AR_{\Util}(\FTwo) \le \frac{2}{1+\alpha}$.
\end{lemma}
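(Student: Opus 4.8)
The plan is to mirror the proof of Lemma~\ref{lem:F2-upper-SW} but track the resource that is \emph{not} fully used. Let $\mathbf A=\FTwo(\mathbf I)$, let $r_1$ be the resource used up in $\mathbf A$ and $r_2=3-r_1$, and let $\mathbf A^*$ be a utilization-maximizing SI/EF allocation. If $r_2$ is also used up in $\mathbf A$ then $\U(\mathbf A)=1\ge\U(\mathbf A^*)$ and we are done, so assume $\U(\mathbf A)$ is the amount of $r_2$ consumed in $\mathbf A$. Interpreting $\mathbf A$ and $\mathbf A^*$ as two-step allocations whose common first step gives every agent $\tfrac1n$ dominant share (legitimate for $\mathbf A^*$ by SI), the step-$1$ consumption of $r_2$ is $1-R_{r_2}$ in both; splitting the step-$2$ consumption between $G_{r_2}$ (dominant resource $r_2$) and $G_{r_1}$ gives
\[
\U(\mathbf A)=(1-R_{r_2})+\Delta(r_2,G_{r_1})+\Delta S_{r_2},
\qquad
\U(\mathbf A^*)\le (1-R_{r_2})+\Delta(r_2,G_{r_1})^*+\Delta S^*_{r_2},
\]
with $\Delta S_{r_2},\Delta S^*_{r_2}$ the step-$2$ dominant-share gains of $G_{r_2}$ and $\Delta(r_2,G_{r_1}),\Delta(r_2,G_{r_1})^*$ the step-$2$ amounts of $r_2$ picked up by $G_{r_1}$.

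I would then bound each step-$2$ term of $\mathbf A$ below by one half the corresponding term of $\mathbf A^*$. For $G_{r_2}$ this is exactly Lemma~\ref{lem:F2-other-side}, $\Delta S_{r_2}\ge\tfrac12\Delta S^*_{r_2}$. The main task is the companion bound $\Delta(r_2,G_{r_1})\ge\tfrac12\Delta(r_2,G_{r_1})^*$ for the group whose dominant resource ran out. The structural fact to exploit is that EF inside $G_{r_1}$ forces, in every SI/EF allocation, an agent's dominant share to be non-increasing in its $r_2$-demand, so growing the dominant shares of $G_{r_1}$ must proceed by first raising the agents currently holding the least $r_2$, as $\FTwo$ does; combined with a case split on $\beta:=\Delta S_1/R_1=\Delta S_2/R_2$ this should give the bound. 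For $\beta\ge\tfrac12$ it follows since $\Delta S_{r_1}=\beta R_{r_1}\ge\tfrac12 R_{r_1}\ge\tfrac12\Delta S^*_{r_1}$ (any SI/EF allocation routes at most the residual $R_{r_1}$ of $r_1$ into $G_{r_1}$ in step~$2$). For $\beta<\tfrac12$ one argues in the spirit of the proof of Lemma~\ref{lem:F2-other-side}: $\FTwo$ has already advanced $G_{r_1}$ nearly as far as a single phase permits before $r_1$ is exhausted, and because channeling residual $r_1$ into $G_{r_1}$ (conversion rate $\le 1$) is never better for utilization than channeling it into the $1{:}1$-efficient group $G_{r_2}$, the utilization maximizer $\mathbf A^*$ cannot gain more than a factor $2$ on $G_{r_1}$'s $r_2$-pickup; the comparison here has to be against $\mathbf A^*$ specifically, not an arbitrary SI/EF allocation.

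Granting both halves, write $U^*:=\U(\mathbf A^*)$, and since the total $r_2$ consumed in $\mathbf A^*$ is at least $U^*$,
\[
\U(\mathbf A)\ \ge\ (1-R_{r_2})+\tfrac12\bigl(\Delta(r_2,G_{r_1})^*+\Delta S^*_{r_2}\bigr)\ \ge\ (1-R_{r_2})+\tfrac12\bigl(U^*-(1-R_{r_2})\bigr)\ =\ \tfrac12(1-R_{r_2})+\tfrac12U^*.
\]
As $R_1\le\alpha\le 1-\alpha$ and $R_2\le 1-\alpha$ we get $R_{r_2}\le 1-\alpha$, hence $\U(\mathbf A)\ge\tfrac\alpha2+\tfrac{U^*}{2}$, and since $U^*\le 1$,
\[
\AR_{\Util}(\FTwo)=\frac{U^*}{\U(\mathbf A)}\le\frac{2U^*}{\alpha+U^*}\le\frac{2}{1+\alpha}.
\]
I expect the main obstacle to be the $\beta<\tfrac12$ case of the $G_{r_1}$ bound: against a general SI/EF allocation the factor-$2$ statement is simply false (such an allocation can pour all residual $r_1$ into $G_{r_1}$ while $\FTwo$, pinned by rule~(\ref{eq:F2}), barely uses step~$2$), so one must use that $\mathbf A^*$ is a maximizer and set up the accounting so that whatever extra $r_1$ it diverts to $G_{r_1}$ is charged against the $r_2$ it correspondingly fails to use.
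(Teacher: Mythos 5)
Your overall skeleton (split the consumption of the non-depleted resource $r_2$ into the common step-1 part plus the two groups' step-2 pickups, invoke Lemma~\ref{lem:F2-other-side} for $G_{r_2}$, and finish with $R_{r_2}\le 1-\alpha$ and $U^*\le 1$) is arithmetically fine, but the proof stands or falls with the companion bound $\Delta(r_2,G_{r_1})\ge\tfrac12\Delta(r_2,G_{r_1})^*$, and that step is both unproven and, as a statement about step-2 pickups, false. Your $\beta\ge\tfrac12$ justification is a non sequitur: $\Delta S_{r_1}\ge\tfrac12\Delta S^*_{r_1}$ compares dominant shares, not $r_2$-pickups, and the conversion rates differ drastically because $\FTwo$ spends its step-2 budget inside $G_{r_1}$ on the agents with the \emph{smallest} non-dominant demand $d_{i,r_2}$, while an SI/EF allocation may raise all of $G_{r_1}$ uniformly and collect $r_2$ at the average rate. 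Concretely, take $n=10$, one $G_2$ agent with $\mathbf d=(0.01,1)$, one $G_1$ agent with $\mathbf d=(1,0.01)$ and eight with $\mathbf d=(1,0.99)$: resource~1 is used up in $\mathbf A$ with $\beta\approx 0.99$, $\FTwo$'s step-2 resource-2 pickup via $G_1$ is about $10^{-3}$, yet the utilization maximizer (which exhausts both resources by raising all of $G_1$ equally) picks up about $0.087$ via $G_1$ -- the factor-2 claim fails by two orders of magnitude, even though the lemma itself is safe there because $\FTwo$'s utilization is already near $1$. For $\beta<\tfrac12$ you explicitly leave the argument open; the ``charge the diverted $r_1$ against the $r_2$ it fails to use'' accounting you gesture at is exactly the missing content, so the proposal has a genuine gap at its central step.

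The paper's proof avoids this comparison altogether. When $\beta\ge\tfrac12$ it never looks at $\mathbf A^*$: it bounds $\FTwo$'s utilization absolutely by $1-\tfrac{R_{r_2}}{2}\ge\tfrac{1+\alpha}{2}$ (your decomposition also yields this, since $\Delta S_{r_2}\ge\tfrac12 R_{r_2}$, so this case is easily salvaged). When $\beta<\tfrac12$ it \emph{discards} $\FTwo$'s own step-2 pickup by $G_{r_1}$ and instead compares $\mathbf A^*$'s \emph{total} holding of $r_2$ by $G_{r_1}$, call it $y^*$, with the step-1 holding $y$, proving $y^*\le 2y$ from the facts that, under EF within the group, evenly splitting the dominant resource is the most efficient way to consume $r_2$, and that the step-1 baseline already commits at least half of what that group can ever consume of its dominant resource ($1-\alpha\ge\tfrac12$ when $r_1=1$; the derived inequality $R_2\le R_1\le\alpha$ when $r_1=2$, itself a consequence of $\beta<\tfrac12$). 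These two ingredients -- the even-split conversion bound relative to the step-1 baseline, and the $R_2\le R_1\le\alpha$ derivation -- are what your write-up lacks, and without them (or a genuinely new maximizer-specific argument in place of your claim (ii)) the $\beta<\tfrac12$ case does not go through.
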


\begin{proof}
Let $\beta=\frac{\Delta S_1}{R_1}=\frac{\Delta S_2}{R_2}$.
We distinguish two cases when $\beta \ge \frac{1}{2}$ and when $\beta < \frac{1}{2}$.
For the first case, we have that
$\Delta S_i \ge \frac{1}{2} R_i $ for $i=1,2$.
Then at least $1-\frac{R_1}{2} \ge 1-\frac{\alpha}{2} \ge 1-\frac{1-\alpha}{2}$ of resource~1 is used and at least $1-\frac{R_2}{2} \ge 1-\frac{1-\alpha}{2}$ of resource~2 is used.
Therefore, 
\[\AR_{\Util}(\FTwo) \le \frac{1}{1-\frac{1-\alpha}{2}}=\frac{2}{1+\alpha}.\]

For the second case when $\beta < \frac{1}{2}$, we further distinguish two cases depending on which resource is used up in $\mathbf{A}$.
We now use $\mathbf{A}^*$ to represent the utilization-maximizing allocation satisfying SI and EF.
Following the same use of notations, in $\mathbf{A}^*$ agents in $G_1$ and $G_2$ receive $1-\alpha+\Delta S^*_1$ of resource~1 and $\alpha+\Delta S^*_2$ of resource~2 respectively.
Let $y_1=\sum_{i \in G_1} \frac{d_{i,2}}{n}$ be the amount of resource~2 received by agents in $G_1$ and $y_2=\sum_{i \in G_2} \frac{d_{i,1}}{n}$ be the amount of resource~1 received by agents in $G_2$ when every agent receives $\frac{1}{n}$ dominant share.

If resource~1 is used up in $\mathbf{A}$, then since $\mathbf{A^*}$ satisfies SI and EF, according to Lemma \ref{lem:F2-other-side}, we have that $\Delta S^*_2 \le 2\Delta S_2$.
Let $y_1^*$ be the amount of resource~2 received by $G_1$ in $\mathbf{A}^*$.
Next we show $y_1^* \le 2y_1$.
Notice that the most efficient way of using resource~1 to maximize the received amount of resource~2 for $G_1$ subject to EF is to evenly distribute resource~1.
Since agents in $G_1$ use $1-\alpha$ of resource~1 and $y_1$ of resource~2 when every agent gets $\frac{1}{n}$ dominant share and $1-\alpha \ge \frac{1}{2}$, we have $y_1^* \le 2y_1$.
Therefore, we have $\Delta S_2+y_1 \ge \frac{1}{2}(\Delta S^*_2+y_1^*)$ and $\AR_{\Util}(\FTwo)$ (determined by resource~2) is upper bounded by
\[
\frac{\alpha+\Delta S^*_2+y_1^*}{\alpha+\Delta S_2+y_1} 
\le \frac{\alpha+\Delta S^*_2+y_1^*}{\alpha+\frac{1}{2}(\Delta S^*_2+y_1^*)} 
\le \frac{1}{\alpha+\frac{1-\alpha}{2}}
=\frac{2}{1+\alpha},
\]
where the second inequality is due to $\Delta S^*_2+y_1^* \le 1-\alpha$.

If resource~2 is used up in $\mathbf{A}$, then according to Lemma \ref{lem:F2-other-side}, we have that $\Delta S^*_1 \le 2\Delta S_1$.
Let $y_2^*$ be the amount of resource~1 received by $G_2$ in $\mathbf{A}^*$.
Next we show $y_2^* \le 2y_2$.
In step~2 of $\mathbf{A}$, agents in $G_1$ receive more than $\frac{R_2}{2}$ of resource $2$ since $\frac{\Delta S_2}{R_2} = \beta < \frac{1}{2}$ and resource $2$ is used up, while they receive less than $\frac{R_1}{2}$ of their dominant resource $1$ as $\frac{\Delta S_1}{R_1} = \beta < \frac{1}{2}$.
Then from $\frac{R_1}{2} \ge \frac{R_2}{2}$ we have $R_2 \le R_1 \le \alpha$.
Again, notice that the most efficient way of using resource~2 to maximize the received amount of resource~1 for $G_2$ subject to EF is to evenly distribute resource~2.
Since agents in $G_2$ use $y_2$ of resource~$1$ and $\alpha$ of resource~2 when every agent gets $\frac{1}{n}$ dominant share and $R_2 \le \alpha$, we have $y_2^* \le 2y_2$.
Therefore, we have $\Delta S_1+y_2 \ge \frac{1}{2}(\Delta S^*_1+y_2^*)$ and $\AR_{\Util}(\FTwo)$ (determined by resource~1) is upper bounded by
\[
\frac{1-\alpha+\Delta S^*_1+y_2^*}{1-\alpha+\Delta S_1+y_2}
\le \frac{1-\alpha+\Delta S^*_1+y_2^*}{1-\alpha+\frac{1}{2}(\Delta S^*_1+y_2^*)}
\le \frac{1}{1-\frac{\alpha}{2}} 
\le \frac{1}{1-\frac{1-\alpha}{2}}
=\frac{2}{1+\alpha},
\]
where the second inequality is due to $\Delta S^*_1+y_2^* \le \alpha$.
\end{proof}

Finally, we show the running time and properties of $\FTwo$.
\begin{lemma}
\label{lem:F2-time-property}
With $2$ resources, mechanism $\FTwo$ satisfies SI, EF, and PO, and can be implemented in $O(n^2)$ time.
\end{lemma}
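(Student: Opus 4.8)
The plan is to establish the three properties and the running‑time bound in turn, in increasing order of difficulty.

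\textbf{SI and PO} are immediate. After step~1 every agent holds a dominant share of exactly $\frac1n$, and step~2 only ever adds resources to agents in $P_1\cup P_2$, so the dominant share of every agent stays $\ge\frac1n$ and SI holds. The \texttt{while}-loop exits precisely when $c_1=0$ or $c_2=0$, i.e.\ when some resource is exhausted; since $\FTwo$ is non‑wasteful by construction, the characterization from Section~\ref{sec:prelim} (a non‑wasteful allocation is PO iff some resource is used up) gives PO.

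For \textbf{EF} I would show that throughout step~2 three invariants hold, which together give EF. (i) \emph{Within each group no agent envies another of the same group.} Within $G_k$, mechanism $\FTwo$ behaves exactly like $\FOne$ on $G_2$: it raises the non‑dominant resource of the agents of $P_k$ (those with the currently smallest amount of it) in lock‑step, $P_k$ only grows, so any agent outside $P_k$ still holds its step‑1 bundle and, because $A_{i,3-k}$ of an untouched agent is $\tfrac1n d_{i,3-k}$ while $A_{j,3-k}\ge\tfrac1n d_{j,3-k}$ for $j\in P_k$ and $P_k$ is the argmin, has weakly larger normalized non‑dominant demand than every agent in $P_k$; this is precisely the configuration in which the within‑$G_2$ argument of Lemma~\ref{lem:F1-4-properties} rules out envy, so I would invoke it verbatim (mirroring resources $1\leftrightarrow2$ for $G_1$). (ii) $\max_{j\in G_2}A_{j,1}\le\min_{i\in G_1}A_{i,1}$: an agent in $G_2\setminus P_2$ is untouched, so $A_{j,1}=\tfrac1n d_{j,1}<\tfrac1n\le A_{i,1}$ for every $i\in G_1$; and if $P_2=G_2$ then all of $G_2$ shares a common value $t=A_{j,1}$, and since $\sum_{i\in G_1}A_{i,1}\ge1-\alpha$ forces $\sum_{j\in G_2}A_{j,1}\le\alpha$, we get $t\le\tfrac1n$. (iii) the symmetric statement $\max_{j\in G_1}A_{j,2}\le\min_{i\in G_2}A_{i,2}$, which is where it matters that the cap $\delta_k$ in \texttt{CalcStep} is taken over $N\setminus P_k$ rather than $G_k\setminus P_k$: this prevents the agents of $P_1$ from ever pushing $A_{i,2}$ above $\min_{i\in G_2}A_{i,2}$ (and symmetrically). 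Finally, for $i\in G_1,\,j\in G_2$, $u_i(\mathbf{A}_j)=\min\{A_{j,1},A_{j,2}/d_{i,2}\}\le A_{j,1}\le A_{i,1}=u_i(\mathbf{A}_i)$ by (ii); the symmetric computation with (iii) handles $i\in G_2,\,j\in G_1$; together with (i) this is EF.

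For the \textbf{running time}, step~1 costs $O(n)$; each iteration of step~2 recomputes $P_1,P_2$, the quantities $\delta_k,\overline{\delta_k},D_k$, the rescaled steps, and updates at most $n$ allocation vectors, all in $O(n)$. To bound the number of iterations, note that exactly one of $\delta_1^*,\delta_2^*$ is rescaled down, so the other equals $\min\{\delta_k,\overline{\delta_k}\}$: if this equals $\overline{\delta_k}$ a resource is exhausted and the loop ends, otherwise it equals $\delta_k$ and at least one new agent enters $P_1\cup P_2$. Since $|P_1|+|P_2|$ is non‑decreasing and at most $n$, there are at most $n$ iterations and the total time is $O(n^2)$ — the same round‑counting argument as in Lemma~\ref{lem:F1-time}.

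The delicate step I expect to spend the most care on is the joint handling of invariant (iii) and \texttt{CalcStep}: one must check that the rescaling enforcing condition~\eqref{eq:F2} together with the $N\setminus P_k$ cap never stalls the loop (e.g.\ never drives both steps to $0$) and never breaks the ``all agents currently in $P_k$ hold a common non‑dominant amount'' property that the within‑group argument and invariants (ii)–(iii) rely on. The degenerate ties — an agent's non‑dominant allocation exactly meeting another's level, or a group's minimum dominant share meeting the level of $P_{3-k}$ — are where the bookkeeping is heaviest and must be pinned down (e.g.\ via a fixed tie‑breaking order) to make the ``at least one new agent enters $P_1\cup P_2$'' claim airtight.
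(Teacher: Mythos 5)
Your proposal is correct and follows essentially the same route as the paper's proof (which itself just mirrors the argument for $\FOne$): SI and PO are immediate, EF is argued via no within-group envy plus the fact that no agent's non-dominant resource ever exceeds the other group's level of that resource, and the $O(n^2)$ bound comes from $|P_1\cup P_2|$ growing in each round of step 2. Your version merely spells out the cross-group invariants and the termination bookkeeping (including the $N\setminus P_k$ cap and the $P_k=G_k$ corner case, where feasibility forces resource exhaustion) in more detail than the paper's brief sketch, which is fine.
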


\begin{proof}
The proof is very similar to the proof for $\FOne$.
For the properties, SI and PO are clearly satisfied since all agents have dominant share at least $\frac{1}{n}$ and the mechanism stops only when one resource is used up.
EF is satisfied in step 1 as all agents have the same dominant share $\frac{1}{n}$.
Note that the mechanism stops before any agent receiving more than $\frac{1}{n}$ of the non-dominant resource while all agents have at least $\frac{1}{n}$ of the dominant resource, so there is no envy between $G_1$ and $G_2$.
Within each group, in step 2 the mechanism only allocates resources to agents who have the smallest fraction of the non-dominant resource, so no envy will occur.

For the running time, notice that $|P_1 \cup P_2|$ is increasing in each round of step 2, so the number of rounds of step 2 is at most $n$.
Since each round of step 2 can be implemented in $O(n)$ time, $\FTwo$ can be implemented in $O(n^2)$ time.
\end{proof}


However, $\FTwo$ does not satisfy SP as the agent with the minimum $d_{i,1}$ (or minimum $d_{i,2}$) could influence the ratio $\frac{R_1}{R_2}$ by modifying its demand vector to get more resources in step 2, as shown in the following example.

\myparagraph{Example 2.}
Consider an instance with two agents who have demand vectors $\mathbf{d}_1=(1,\frac{1}{2})$ and $\mathbf{d}_2=(\frac{1}{4},1)$.
According to $\FTwo$, in step 1 agent 1 gets $(\frac{1}{2},\frac{1}{4})$ and agent 2 gets $(\frac{1}{8},\frac{1}{2})$.
Then the remaining resources is $(\frac{3}{8},\frac{1}{4})$ and the increasing speed ratio is $\frac{3}{2}$.
In step 2, agent 1 gets $(\frac{3}{14},\frac{3}{28})$ and agent 2 gets $(\frac{1}{28},\frac{1}{7})$, and resource~2 is used up.
Overall agent 2 gets $(\frac{9}{56},\frac{9}{14})$.
However, if agent 2 reports another demand vector $\mathbf{d}'_2=(\frac{1}{2},1)$, then both agents will get the same dominant share $\frac{2}{3}$ under $\FTwo$.
In particular, agent 2 will get $(\frac{1}{3},\frac{2}{3})$, which is strictly better than $(\frac{9}{56},\frac{9}{14})$.
Therefore, $\FTwo$ is not SP.
\medskip

Fortunately, we can make $\FTwo$ satisfy SP with a small modification.
In the following we propose a slightly different mechanism $\FTwos$ that replaces the condition (\ref{eq:F2}) by the following condition:
\begin{equation}
\label{eq:F2star}
\frac{\Delta S_1}{\Delta S_2}=\frac{R^*_1}{R^*_2}=\frac{R_1+\frac{1}{n}d_{i^*,1}}{R_2+\frac{1}{n}d_{j^*,2}},
\end{equation}
where $i^*$ is an agent in $G_2$ with the minimum $d_{i,1}$ and $j^*$ is an agent in $G_1$ with the minimum $d_{i,2}$.
That is, the ratio between $\Delta S_1$ and $\Delta S_2$ is proportional to the ratio between the remaining amounts of two resources when all agents except $i^*$ and $j^*$ get $\frac{1}{n}$ dominant share.
Intuitively, for agent $i^*$, this modification prevents it from increasing $d_{i^*,1}$ to influence $\frac{R^*_1}{R^*_2}$, unless $d_{i^*,1}$ becomes larger than the second smallest $d_{i,1}$, for which case we can show that $i^*$ cannot benefit.

We show that $\FTwos$ satisfies all four properties including SP, and its \fratio is very close to that of $\FTwo$.

\begin{restatable}{theorem}{FTWOS}
\label{thm:n-2-F2s}
With $2$ resources, mechanism $\FTwos$ can be implemented in polynomial time, satisfies SI, EF, PO, and SP, and has
\[\AR_{\SW}(\FTwos) \in \left[\frac{4-2\alpha}{3-\alpha}, \frac{4-2\alpha}{3-\alpha-\frac{1}{n}}\right]
\textrm{\quad{and}\quad}
\AR_{\Util}(\FTwos) \in \left[\frac{2}{1+\alpha}, \frac{2}{1+\alpha-\frac{1}{n}}\right].
\]
\end{restatable}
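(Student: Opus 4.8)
Since $\FTwos$ differs from $\FTwo$ only in that CalcStep (Algorithm~\ref{alg:F2-step}) targets the ratio $R^*_1/R^*_2$ of~(\ref{eq:F2star}) instead of $R_1/R_2$, the plan is to piggyback on the $\FTwo$ analysis wherever the exact value of the ratio is immaterial, and to do genuinely new work only for SP and for the extra $\frac{1}{n}$ terms. First, SI, EF, PO and the $O(n^2)$ running time go through exactly as in Lemma~\ref{lem:F2-time-property}: after step~1 every agent has dominant share $\frac{1}{n}$; step~2 still water-fills the non-dominant resource within each group and halts the instant some resource is exhausted; and $|P_1\cup P_2|$ still grows by at least one per round. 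For cross-group envy-freeness, and crucially for SP below, I would re-verify the structural fact that $\FTwos$, like $\FTwo$, never lets the non-dominant fraction of any agent exceed $\frac{1}{n}$: if the water level inside $G_1$ (resp.\ $G_2$) ever reached $\frac{1}{n}$, then all agents of $G_1$ (resp.\ $G_2$) would hold exactly $\frac{1}{n}$ of resource~$2$ (resp.\ resource~$1$), which together with the $\ge\frac{1}{n}$ dominant share held by every agent of the other group already uses up that resource, so step~2 must stop first.

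The substantive new work is SP, which I would organise by the type of deviation. A deviation that misreports the dominant resource puts the agent into the opposite group, where by the fact just recalled it obtains at most $\frac{1}{n}$ of the resource that is its true dominant resource, hence true utility at most $\frac{1}{n}$, no better than the SI guarantee it already enjoys truthfully; so we may assume the reported dominant resource is correct. For a within-group deviation by an agent $i_0\notin\{i^*,j^*\}$, I would mimic the PO-based argument of Lemma~\ref{lem:F1-4-properties}: a profitable deviation forces $i_0$ to receive strictly more of its dominant resource than truthfully, hence to sit at the water level of its group in the manipulated run, and then reinterpreting the manipulated allocation on the true demands exhibits an allocation Pareto-dominating the truthful one, a contradiction. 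The extra care relative to $\FOne$ is that step~2 of $\FTwos$ also enlarges the allocation of the \emph{other} group, so one must check the other group is not made worse off; this should follow by tracking the (small, signed) effect of $i_0$'s report on the target ratio $R^*_1/R^*_2$ and on which resource becomes the binding constraint. Finally, a within-group deviation by $i^*$ (the case of $j^*$ is symmetric): definition~(\ref{eq:F2star}) is designed precisely so that $R^*_1=R_1+\frac{1}{n} d_{i^*,1}$ and $R^*_2$ do not depend on $d_{i^*,1}$ as long as $i^*$ stays the agent of $G_2$ with the smallest first coordinate, so any such report leaves the execution of step~2 unchanged and cannot help; and if $i^*$ instead reports a first coordinate exceeding the second-smallest value in $G_2$, it begins step~2 strictly above $G_2$'s water level, so its final first coordinate is at most that water level ($\le\frac{1}{n}$), no more than the $\ge\frac{1}{n}$ dominant share it gets truthfully.

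For the \fratio, the lower bounds $\AR_{\SW}(\FTwos)\ge\frac{4-2\alpha}{3-\alpha}$ and $\AR_{\Util}(\FTwos)\ge\frac{2}{1+\alpha}$ come from the same family of instances used for $\FTwo$ and are established in Lemma~\ref{lem:F2-lower} below, which handles $\FTwo$ and $\FTwos$ together. For the upper bounds I would rerun Lemmas~\ref{lem:F2-one-side} to~\ref{lem:F2-upper-uti} with $R_1,R_2$ replaced throughout by $R^*_1,R^*_2$: Lemma~\ref{lem:F2-one-side} is unaffected since it uses only that step~2 water-fills each group, while in the remaining lemmas the target ratio enters only through estimates of the shape $\max\{\Delta S^*_1,\Delta S^*_2\}\le\max\{R_1,R_2\}\le 1-\alpha$ together with the ``at least half of the optimum'' bound. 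Since $0\le R^*_k-R_k\le\frac{1}{n}$, these relax to $\le 1-\alpha+\frac{1}{n}$ and weaken the halving estimate by at most $\frac{1}{2n}$; propagating the corrections turns the denominators $3-\alpha$ and $1+\alpha$ of Theorem~\ref{thm:n-2-F2} into $3-\alpha-\frac{1}{n}$ and $1+\alpha-\frac{1}{n}$, exactly the claimed upper endpoints.

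The main obstacle is clearly SP: the interplay between the two-sided step~2 of $\FTwos$ and the Pareto-domination argument for non-reference agents, and the analysis of the reference agents $i^*,j^*$, where one must both exploit the invariance of $R^*_1/R^*_2$ under their admissible manipulations and separately rule out the deviation that promotes a different agent to reference status. Everything else is a mechanical adaptation of the $\FTwo$ arguments with an extra $\frac{1}{n}$ slack.
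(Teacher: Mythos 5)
Your plan for SI, EF, PO, the running time, the lower bounds (via the shared instance of Lemma~\ref{lem:F2-lower}), and the general shape of the upper bounds matches the paper, but the SP argument has a genuine gap exactly where the theorem is hard. For a within-group deviation by a non-reference agent $i_0\in G_1$ you propose to reuse the Pareto-domination argument of $\FOne$, adding that one "must check the other group is not made worse off" and that this "should follow by tracking the (small, signed) effect" of the report on $R^*_1/R^*_2$. That check fails: unlike in $\FOne$, a non-reference agent \emph{does} move the target ratio, since $R^*_2=1-\frac{n_2}{n}-\frac{1}{n}\sum_{i\in G_1}d_{i,2}+\frac{1}{n}\min_{i\in G_1}d_{i,2}$ depends on $d_{i_0,2}$ whenever $i_0$ is not the minimizer. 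By inflating $d'_{i_0,2}$ the agent lowers $R^*_2$, tilts $\Delta S_1/\Delta S_2=R^*_1/R^*_2$ in favour of its own group, and makes every agent of $G_2$ strictly worse off; the manipulated allocation therefore does \emph{not} Pareto-dominate the truthful one, and no contradiction with PO is available. The actual content of the paper's Lemma~\ref{lem:F2s-SP} is a quantitative accounting argument that your sketch omits: a profitable deviation forces $d'_{i_0,2}$ above $\max\{d_{i_0,2},\min_{j\in G_1-i_0}d_{j,2}\}$, and with $\delta=\frac{1}{n}\bigl(d'_{i_0,2}-\max\{d_{i_0,2},\min_{j\in G_1-i_0}d_{j,2}\}\bigr)$ one shows simultaneously $\Delta S'_2>\Delta S_2-\delta$ (from the two ratio conditions and $\Delta S'_1>\Delta S_1$) and $\Delta S'_2\le\Delta S_2-\delta$ (because the group's water level rises to $x'\ge x+\delta$, so the new reference agent $j^*$ alone absorbs an extra $\delta$ of each resource), a contradiction. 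This handles reference and non-reference agents uniformly; your separate treatment of $i^*,j^*$ via "the execution of step~2 is unchanged" is also not quite right, since changing $d_{j^*,2}$ while staying minimal leaves $R^*_1/R^*_2$ fixed but changes $j^*$'s starting level and its dominant-share gain per unit of fill, so the within-group execution and the stopping point do change and an argument is still needed (the paper gets it for free from the same $\delta$-argument, which shows any profitable report must \emph{increase} the non-dominant demand past the others' minimum).

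A smaller remark on the upper bounds: replacing $R_k$ by $R^*_k$ and "weakening the halving estimate by $\frac{1}{2n}$" is not quite how the paper proceeds, because Lemma~\ref{lem:F2-other-side} genuinely fails for $\FTwos$. The paper's Lemma~\ref{lem:F2star-upper} instead reinterprets $\FTwos$ as a three-step process in which the reference agent $i^*$ receives its $\frac{1}{n}$ dominant share in a separate intermediate step, yielding the shifted halving bound $\Delta S^*_{r_2}+\frac{1}{n}\le 2\bigl(\Delta S_{r_2}+\frac{1}{n}\bigr)$, and it is this additive-$\frac{1}{n}$ version (together with $\max\{\Delta S^*_1,\Delta S^*_2\}\le 1-\alpha$) that produces the denominators $3-\alpha-\frac{1}{n}$ and $1+\alpha-\frac{1}{n}$; so this adaptation, while in the spirit you describe, also needs a concrete new idea rather than a mechanical substitution.
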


We prove Theorem \ref{thm:n-2-F2s} via the following Lemmas \ref{lem:F2s-SI,EF,PO} to \ref{lem:F2-lower}.
We first show that $\FTwos$ satisfies SP.

\begin{lemma}
  \label{lem:F2s-SP}
  With $2$ resources, mechanism $\FTwos$ satisfies SP.
  \end{lemma}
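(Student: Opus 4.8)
The plan is to mirror the SP argument for $\FOne$ (Lemma~\ref{lem:F1-4-properties}) but to handle the extra wrinkle introduced by the modified ratio condition~(\ref{eq:F2star}). First I would observe that no agent wants to switch groups, since in the final allocation of $\FTwos$ the fraction of the non-dominant resource that any agent receives is at most $\frac{1}{n}$, exactly as in $\FOne$ and $\FTwo$; so an agent can never gain by pretending its dominant resource is the other one. It then suffices to consider deviations within the same group, and by symmetry I would fix an agent $i_0 \in G_2$ and argue it cannot benefit by reporting $\mathbf{d}'_{i_0}$ with the same dominant resource $r_1$.

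The key case split is on whether $i_0 = i^*$, i.e.\ whether $i_0$ is (one of) the agent(s) with the minimum $d_{i,1}$ in $G_2$, since only such an agent can influence the ratio $\frac{R_1^*}{R_2^*}$ used by the mechanism. \textbf{Case A: $i_0$ is not the unique minimizer of $d_{i,1}$ in $G_2$} (under the true report). Then changing $d_{i_0,1}$ does not change $R_1^*$ as long as the reported value stays $\ge$ the second-smallest $d_{i,1}$; more importantly, $i_0 \notin P_2$ during the relevant rounds, so the quantity $\Delta S_2$, and hence the total amount of resource~$r_2$ allotted to $G_2$, is unaffected, and $i_0$'s own share is determined exactly as in $\FOne$'s argument: it is the ``smallest-fraction'' agents in $G_2$ who get the increases, and if $i_0$ manipulated to receive strictly more of resource~$r_1$ it would have to be the minimizer in the manipulated run while receiving strictly more of \emph{both} resources than truthfully, with everyone else getting at least as much — contradicting PO of the truthful allocation. \textbf{Case B: $i_0 = i^*$.} Here I would split further: (i) if $i_0$ reports $d'_{i_0,1}$ larger than the second-smallest $d_{i,1}$ in $G_2$, then in the manipulated run $i_0 \notin P_2$ for all rounds in which it is not among the smallest-fraction agents, and again the PO-contradiction argument of $\FOne$ applies, since $i_0$ receiving more utility forces it to be a smallest-fraction agent receiving strictly more of both resources; (ii) if $d'_{i_0,1}$ stays $\le$ the second-smallest $d_{i,1}$, then $i_0$ is still the relevant minimizer and the only effect of the lie is to change $R_1^* = R_1 + \frac{1}{n}d_{i^*,1}$ through the term $\frac{1}{n}d_{i^*,1}$; here I would use the fact that $R_1^*$ is computed with $d_{i^*,1}$ replaced by the \emph{reported} value, so reporting a smaller $d'_{i_0,1}$ can only decrease $R_1^*$ and hence decrease $\Delta S_1/\Delta S_2$, which (combined with the non-wastefulness and the normalization $d_{i_0,1}$ scaling its actual received bundle) leaves $i_0$ no better off; and reporting a larger $d'_{i_0,1}$ (still below the threshold) forces $i_0$ to consume a larger fraction of $r_1$ per unit of dominant share, which can be shown not to help.

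The main obstacle I expect is Case B(ii): making precise that the modified ratio rule~(\ref{eq:F2star}) exactly neutralizes the incentive of $i^*$ to misreport $d_{i^*,1}$ in the range where it remains the minimizer. The delicate point is that $i^*$'s true utility is measured by its true normalized demand while the mechanism's allocation responds to the reported one, so I need to carefully track how $i^*$'s final dominant share (equivalently $A_{i_0,1}$) depends monotonically on the reported $d'_{i_0,1}$ through both $R_1^*$ and through the per-round update $\mathbf{A}_i \leftarrow \mathbf{A}_i + \frac{\delta_2^*}{d_{i,1}}\mathbf{d}_i$, and then conclude that truthful reporting maximizes it. I would package this monotonicity as a short sub-claim and verify it by comparing the two runs round by round, using that decreasing $d'_{i_0,1}$ scales down $R_1^*$ proportionally to a $\frac{1}{n}$-sized term while simultaneously scaling up the $\frac{1}{d_{i_0,1}}$ factor, and checking the net effect on $A_{i_0,1}$ is non-positive. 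Everything else (no group switching, Case A, Case B(i)) reduces to the already-established PO-based argument for $\FOne$.
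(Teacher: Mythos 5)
Your case analysis misidentifies where the difficulty lies, and the cases you dismiss as ``reducing to the $\FOne$ PO argument'' are exactly the ones that do not. The factual error: $R_1^*=R_1+\frac{1}{n}d_{i^*,1}$ equals the remaining amount of resource~1 when every agent of $G_2$ \emph{except the minimizer} receives $\frac{1}{n}$ dominant share, so it contains the reported $d_{i,1}$ of every non-minimizer in $G_2$. Hence in your Case~A a non-minimizer $i_0\in G_2$ who inflates $d_{i_0,1}$ \emph{does} decrease $R_1^*$ (by $\frac{1}{n}(d'_{i_0,1}-d_{i_0,1})$), lowering $\frac{\Delta S_1}{\Delta S_2}=\frac{R_1^*}{R_2^*}$ and shifting step-2 resources toward $i_0$'s own group; the same happens in Case~B(i) when the minimizer over-reports past the second-smallest value. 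In these cases $\Delta S_2$ is \emph{not} unaffected, and the $\FOne$-style PO contradiction is unavailable: that argument needs every other agent to receive at least as much as under truthful reporting, but here the agents of the \emph{other} group may receive strictly less, and the manipulated allocation is still Pareto optimal (some resource is exhausted), so no contradiction arises. By contrast, your Case~B(ii), which you flag as the main obstacle, is precisely the situation that the modification (\ref{eq:F2star}) renders trivial: as long as the minimizer remains the minimizer, its report does not enter $R_1^*$ at all, so the ratio is unchanged and there is essentially nothing to prove there.

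What is missing is a quantitative argument for the genuinely dangerous manipulations, and this is the heart of the paper's proof. There one sets $\delta=\frac{1}{n}\bigl(d'_{i_0,2}-\max\{d_{i_0,2},\min_{j\in G_1\setminus\{i_0\}}d_{j,2}\}\bigr)\ge 0$ as the resulting decrease of $R_2^*$ (stated for a manipulator in $G_1$; the $G_2$ case is symmetric) and derives two incompatible bounds: from the ratio rule together with $\Delta S'_1>\Delta S_1$ one gets $\Delta S'_2>\Delta S_2-\delta$, while a careful accounting---the manipulator must end at the new minimum level $x'\ge x+\delta$ of its non-dominant resource, and the agent with the (new) minimum $d_{j,2}$ gains at least $\delta$ of both resources---yields $\Delta S'_2\le\Delta S_2-\delta$. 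Your proposal contains no analogue of this step, so as written it does not establish SP for $\FTwos$; the monotonicity sub-claim you sketch for B(ii) addresses a case where the rule already neutralizes the report, while the cases you wave through with PO are exactly those in which the ratio can be tilted and a $\delta$-type trade-off argument is required.
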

  
  \begin{proof}
  For a contradiction, suppose some agent $i_0 \in G_1$, whose true demand vector is $\mathbf{d}_{i_0}=(1,d_{i_0,2})$, reports a false demand vector $\mathbf{d}'_{i_0}$ and reaches a higher utility.
  Notice that under $\FTwos$ every agent can get at least $\frac{1}{n}$ of the dominant resource and at most $\frac{1}{n}$ of the non-dominant resource, so no agent can benefit by changing the group they belong to. 
  Then we can assume that $\mathbf{d}'_{i_0}=(1,d'_{i_0,2})$.
  Let $\mathbf{A}$ be the truthful allocation and let $\mathbf{A}'$ be the manipulated allocation.
  Let $\Delta S'_1$ and $\Delta S'_2$ be the sum of increased dominant shares of agents in $G_1$ and $G_2$, respectively, at step 2 of $\FTwos$ when $i_0$ reports $(1,d'_{i_0,2})$.
  Denote $x=\min_{i \in G_1} A_{i,2}$ and $x'=\min_{i \in G_1} A'_{i,2}$.
  
  We first show $\Delta S'_1 > \Delta S_1$ and $d'_{i_0,2} > d_{i_0,2}$.
  Since agent $i_0$'s utility is increased in $\mathbf{A}'$, we have $A'_{i_0,1}>A_{i_0,1}$ and $A'_{i_0,2}>A_{i_0,2}$.
  From $A'_{i_0,1}>A_{i_0,1}\ge\frac{1}{n}$ we have that agent $i_0$'s allocation is increased in step 2, so $A'_{i_0,2}=x'$.
  By the definition of $x$ we have $A_{i_0,2} \ge x$.
  Putting all these together we have~$x'=A'_{i_0,2}>A_{i_0,2} \ge x$.
  This implies that except $i_0$ all other agents in $G_1$ should receive at least the same amount of both resources.
  Since agent $i_0$'s utility is increased, we conclude that in $\mathbf{A}'$ agents in $G_1$ receive more amount of both resources, and consequently agents in $G_2$ receive less amount of both resources. 
  That is, $\Delta S'_1 > \Delta S_1$ and $\Delta S'_2 < \Delta S_2$, which leads to $\frac{\Delta S'_1}{\Delta S'_2} > \frac{\Delta S_1}{\Delta S_2}$.
  According to the increasing speed condition $\frac{\Delta S_1}{\Delta S_2}=\frac{R^*_1}{R^*_2}$, $\frac{\Delta S'_1}{\Delta S'_2} > \frac{\Delta S_1}{\Delta S_2}$ implies that $R^*_2$ is decreased when agent $i_0$ reports $(1,d'_{i_0,2})$.
  It follows that $d'_{i_0,2} > d_{i_0,2}$.
  In particular, agent $i_0$ is not the agent in $G_1$ with the minimum $d_{i,2}$ when it reports $(1,d'_{i_0,2})$.

  Let $$\delta=\frac{1}{n}(d'_{i_0,2} - \max\{d_{i_0,2},\min_{j\in G_1-i_0} d_{j,2}\}) \ge 0$$ be the influence on $R^*_2$ when agent $i_0$ reports $(1,d'_{i_0,2})$.
  In the following we show two contradictory results $\Delta S'_2 > \Delta S_2 -\delta$ and  $\Delta S'_2 \le \Delta S_2 -\delta$, which finish our proof.
  We start with $\Delta S'_2 > \Delta S_2 -\delta$.
  For the manipulated instance, we have
  \begin{equation}
  \label{eq:F2-SP-1}
  \frac{\Delta S'_1}{\Delta S'_2} 
  =\frac{R^*_1}{R^*_2-\delta}
  \Rightarrow
  \frac{\Delta S'_2}{R^*_2-\delta} = \frac{\Delta S'_1}{R^*_1}.
  \end{equation}
  For the original instance, we have
  \begin{equation}
  \label{eq:F2-SP-2}
  \frac{\Delta S_1}{\Delta S_2} = \frac{R^*_1}{R^*_2}
  \Rightarrow
  \frac{\Delta S_1}{R^*_1} = \frac{\Delta S_2}{R^*_2}.
  \end{equation}
  Combining Equations (\ref{eq:F2-SP-1}) and (\ref{eq:F2-SP-2}), we have
  \begin{equation}
  \label{eq:F2-SP-3}
  \frac{\Delta S'_2}{R^*_2-\delta} = \frac{\Delta S'_1}{R^*_1}
  >\frac{\Delta S_1}{R^*_1} = \frac{\Delta S_2}{R^*_2}
  >\frac{\Delta S_2-\delta}{R^*_2-\delta},
  \end{equation}
  where the first inequality follows by $\Delta S'_1>\Delta S_1$ and the second follows by $\Delta S_2<R_2^*$ and $\delta<R_2^*$.
  From (\ref{eq:F2-SP-3}) we have $\Delta S'_2 > \Delta S_2 -\delta$.

  We continue with $\Delta S'_2 \le \Delta S_2 -\delta$.
  Recall that we have shown $x'=A'_{i_0,2}>A_{i_0,2} \ge x$, and   
  $A'_{j,1} \ge A_{j,1}$ and $A'_{j,2} \ge A_{j,2}$ for all $j \in G_1$.
  In order to show $\Delta S'_2 \le \Delta S_2 -\delta$, we strengthen these two results by showing that $x' \ge x+\delta$, and 
  \begin{equation}
  \label{eq:F2-SP-4}
  \sum_{j \in G_1} (A'_{j,1} - A_{j,1}) \ge \delta \text{ and }
  \sum_{j \in G_1} (A'_{j,2} - A_{j,2}) \ge \delta.
  \end{equation}
  We first show $x' \ge x+\delta$.
  Since agent $i_0$'s utility is increased by reporting $(1,d'_{i_0,2})$, we have $A'_{i_0,1}>A_{i_0,1}\ge \frac{1}{n}$ and
  \[
  A'_{i_0,2} = A'_{i_0,1} \cdot d'_{i_0,2} \ge A'_{i_0,1} \cdot (d_{i_0,2}+n\delta)>A_{i_0,1} \cdot (d_{i_0,2}+n\delta) \ge A_{i_0,2} +\delta.
  \]
  The first inequality is from the definition of $\delta$.
  Then $x'=A'_{i_0,2}\ge A_{i_0,2} +\delta\ge x+\delta$.
  
  Next we show claims in (\ref{eq:F2-SP-4}).
  Let $j^* \in G_1$ be one of the agents $j$ in $G_1$ with the minimum $d_{j,2}$ when $i_0$ reports $(1,d'_{i_0,2})$.
  Recall that $j^* \neq i_0$.
  We now prove $A'_{j^*,2}-A_{j^*,2} \ge \delta$, and then
  it follows that $A'_{j^*,1}-A_{j^*,1} \ge \delta$
  since $d_{j^*,1}=1 \ge d_{j^*,2}$.
  By definition, $j^*$ must be one of the agents in $G_1$ who have the minimum amount of resource~2 in $\mathbf{A}'$, i.e., $A'_{j^*,2}=x'$.
  Then we differentiate two cases when $A_{j^*,2}=x$ and when $A_{j^*,2}>x$ in $\mathbf{A}$ to prove $A'_{j^*,2}-A_{j^*,2} \ge \delta$.
  In the first case, it can be proved by $A'_{j^*,2}=x'\geq x+\delta=A_{j^*,2}+\delta$.
  In the second case, $j^*$ does not receive any resource at step 2, so $A_{j^*,2}=\frac{1}{n}d_{j^*,2}$. 
  By the definition of $\delta$, we have $d'_{i_0,2} \geq d_{j^*,2}+n\delta$.
  Then
  \[
    A'_{j^*,2}=x'=A'_{i_0,2}=A'_{i_0,1}d'_{i_0,2} \geq A'_{i_0,1}(d_{j^*,2}+n\delta) \geq \frac{1}{n}(d_{j^*,2}+n\delta)=A_{j^*,2}+\delta.
  \]
  To sum up, in $\mathbf{A}'$ all agents in $G_1$ receive at least the same amount of resources that they receive in $\mathbf{A}$ and specifically for $j^*$ we have $A'_{j^*,1}- A_{j^*,1} \ge\delta$ and $A'_{j^*,2}-A_{j^*,2} \ge \delta$,
  so we have claims in (\ref{eq:F2-SP-4}).

  Finally, based on claims in (\ref{eq:F2-SP-4}), we show $\Delta S'_2 \le \Delta S_2 -\delta$. 
  If in $\mathbf{A}$ resource~2 is used up, then $\Delta S'_2 \le \Delta S_2 -\delta$ because $\sum_{j \in G_1} (A'_{j,2} - A_{j,2}) \ge \delta$.
  If in $\mathbf{A}$ resource~1 is used up, then according to
  $\sum_{j \in G_1} (A'_{j,1} - A_{j,1}) \ge \delta$, we have
  $\sum_{j \in G_2} A'_{j,1} \le \sum_{j \in G_2} A_{j,1} - \delta$, and hence
  $\Delta S'_2 \le \Delta S_2 -\delta$ as $d_{j,1} \le d_{j,2}=1$ for $j \in G_2$.
  Therefore, we have $\Delta S'_2 \le \Delta S_2 -\delta$, no matter which resource is used up in $\mathbf{A}$.
  This finishes the proof.
  \end{proof}

The proof of other three properties and the running time is the same as $\FTwo$.

\begin{lemma}
\label{lem:F2s-SI,EF,PO}
$\FTwos$ satisfies SI, EF, and PO, and can be implemented in $O(n^2)$ time.
\end{lemma}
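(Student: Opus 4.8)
The plan is to adapt the proof of Lemma~\ref{lem:F2-time-property} almost verbatim, the key point being that the only way in which $\FTwos$ differs from $\FTwo$ --- the increasing-speed condition, (\ref{eq:F2star}) in place of (\ref{eq:F2}) --- affects only the \emph{relative} rates $\delta^*_1,\delta^*_2$ at which the two groups are grown in step~2, and therefore plays no role in any of the arguments below. In particular, $\FTwos$ still runs in two stages: step~1 gives every agent exactly $\frac1n$ dominant share, and step~2 repeatedly picks $P_1\subseteq G_1$ and $P_2\subseteq G_2$ (the agents with the currently smallest fraction of their non-dominant resource), increases that resource by a common amount within each group, and halts as soon as one resource is exhausted. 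Note also that $R^*_1=R_1+\frac1n d_{i^*,1}>0$ and $R^*_2=R_2+\frac1n d_{j^*,2}>0$ always hold, so the ratio in (\ref{eq:F2star}) is well defined. Given this, SI holds because every agent already has at least $\frac1n$ dominant share after step~1 and step~2 only adds resources, and PO holds because $\FTwos$ is non-wasteful and terminates only when some resource is fully used up, so by the characterization in Section~\ref{sec:prelim} the returned allocation is PO.

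For EF: after step~1 all agents share the common dominant share $\frac1n$, so no envy exists. In step~2, within each group only the agents with the currently smallest fraction of the non-dominant resource are increased, and by the same amount, so these agents remain tied while the untouched agents keep a weakly smaller bundle; hence no intra-group envy arises. For inter-group envy we argue, exactly as for $\FTwo$, that $\FTwos$ halts before any agent receives more than $\frac1n$ of its non-dominant resource. Indeed, every $j\in G_2$ starts with $A_{j,1}=\frac{d_{j,1}}{n}\le\frac1n$ and enters $P_2$ once the common level of $P_2$ reaches $A_{j,1}$; hence as soon as that common level reaches $\frac1n$ we have $P_2=G_2$ and $\sum_{i\in G_2}A_{i,1}=\frac{n_2}{n}$, which together with $\sum_{i\in G_1}A_{i,1}\ge\frac{n_1}{n}$ forces resource~$1$ to be already exhausted; the argument is symmetric for resource~$2$ and $G_1$. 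Since every agent therefore always holds at least $\frac1n$ of its dominant resource and at most $\frac1n$ of its non-dominant resource, no agent envies an agent in the other group.

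For the running time: $|P_1\cup P_2|$ is nondecreasing over the rounds of step~2, and each round either strictly increases it (when the step size is capped by $\delta_k$) or triggers termination (when a resource runs out), so there are $O(n)$ rounds; each round, including the step-size computation (the analogue of Algorithm~\ref{alg:F2-step} with $R_k$ replaced by $R^*_k$), takes $O(n)$ time, giving an $O(n^2)$ bound.

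The only part requiring genuine care --- and the main obstacle --- is the inter-group EF claim, i.e.\ verifying that termination really does occur no later than the moment at which an agent would exceed $\frac1n$ of its non-dominant resource. This rests only on the monotone growth of $P_1$ and $P_2$ together with the $\frac1n$-floor on dominant shares established in step~1, and both of these are insensitive to replacing (\ref{eq:F2}) by (\ref{eq:F2star}); so the whole argument transfers from $\FTwo$ without change.
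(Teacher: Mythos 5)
Your proof is correct and follows essentially the same route as the paper, which simply observes that the argument for $\FTwo$ (Lemma~\ref{lem:F2-time-property}) carries over unchanged since the switch from condition (\ref{eq:F2}) to (\ref{eq:F2star}) only alters the relative increasing speeds, not the structure of step~1/step~2. Your added verification that the mechanism halts before any agent exceeds $\frac{1}{n}$ of its non-dominant resource is a correct elaboration of a step the paper leaves implicit, but it is not a different approach.
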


\begin{proof}
The proof is the same as the proof for $\FTwo$ (Lemma \ref{lem:F2-time-property}).
\end{proof}

Next, we show that the upper bound for the \fratio{s} of $\FTwos$ is very close to that of $\FTwo$.
The proof is similar to the proof for $\FTwo$ (Lemmas \ref{lem:F2-one-side} to \ref{lem:F2-upper-uti}).

\begin{lemma}
\label{lem:F2star-upper}
With $2$ resources, 
\[\AR_{\SW}(\FTwos) \le \frac{4-2\alpha}{3-\alpha-\frac{1}{n}}
\textrm{\quad{and}\quad}
\AR_{\Util}(\FTwos) \le \frac{2}{1+\alpha-\frac{1}{n}}.\]
\end{lemma}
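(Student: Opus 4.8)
The plan is to re-run, essentially verbatim, the four-step argument used to bound the \fratio of $\FTwo$ (Lemmas~\ref{lem:F2-one-side}--\ref{lem:F2-upper-uti}), with one systematic change: wherever the increasing-speed condition is invoked through the quantities $R_1,R_2$, substitute the corresponding $\FTwos$ quantities $R^*_1,R^*_2$, and control the discrepancy via the elementary bounds $R_k\le R^*_k\le R_k+\frac1n$ (immediate from $R^*_k=R_k+\frac1n d_{i^*,k}$, resp.\ $R_k+\frac1n d_{j^*,k}$, with the relevant normalized demand in $(0,1]$). As in the $\FTwo$ analysis we may assume $R_1,R_2>0$ (otherwise one resource is exhausted already after step~1, and $\FTwos$ returns the step-1 allocation, which is optimal subject to SI). Fix an instance; let $\mathbf A$ be the allocation of $\FTwos$, let $r_1$ be the resource exhausted in $\mathbf A$ and $r_2=3-r_1$, let $\mathbf A^*$ be the $\SW$- (resp.\ utilization-) maximizing SI,EF allocation, and adopt the $\Delta S_k,\Delta S^*_k$ notation of the $\FTwo$ proof.

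The analog of Lemma~\ref{lem:F2-one-side} needs no change: its proof only uses that, within the group $G_{r_1}$ whose dominant resource is exhausted, $\FTwos$ — exactly like $\FTwo$ — keeps raising the agents of $G_{r_2}$ with the least of their non-dominant resource $r_1$, which is the EF-optimal way to turn a given budget of resource $r_1$ into dominant share; the increasing-speed condition is never used. Hence $\Delta S_{r_1}\ge\Delta S^*_{r_1}$, and since $\SW(\mathbf A^*)\ge\SW(\mathbf A)$ also $\Delta S_{r_2}\le\Delta S^*_{r_2}$.

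The real work is in the analog of Lemma~\ref{lem:F2-other-side}. Set $\beta=\Delta S_1/R^*_1=\Delta S_2/R^*_2$. If $\beta\ge\frac12$, then $\Delta S_i\ge\frac12 R^*_i\ge\frac12 R_i\ge\frac12\Delta S'_i$ for every SI,EF allocation $\mathbf A'$, exactly as before, and this branch actually yields the stronger constants $\frac{4-2\alpha}{3-\alpha}$ and $\frac{2}{1+\alpha}$. If $\beta<\frac12$, then $\Delta S_{r_1}<\frac12 R^*_{r_1}\le\frac12 R_{r_1}+\frac1{2n}$, so the agents of $G_{r_2}$ consume $z=R_{r_1}-\Delta S_{r_1}>\frac12 R_{r_1}-\frac1{2n}$ of resource $r_1$ in step~2; pushing all of $R_{r_1}$ through the same greedy (hence concave, $0$-at-$0$) conversion process gives at most $\frac{R_{r_1}}{z}\,\Delta S_{r_2}$ dominant share, and the bound $\frac{R_{r_1}}{z}\le2$ of the $\FTwo$ proof weakens only to $\frac{R_{r_1}}{z}\le 2+\frac{2/n}{R_{r_1}-1/n}$. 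When $R_{r_1}$ is bounded away from $0$ this extra term is $O(\frac1n)$, so $\Delta S^*_{r_2}\le 2\Delta S_{r_2}+O(\frac1n)$ — the correct replacement for ``$\Delta S'_{r_2}\le 2\Delta S_{r_2}$''. The step I expect to be the main obstacle is the remaining regime $R_{r_1}=\Theta(\frac1n)$, where that multiplicative factor is not a priori $O(\frac1n)$: there one argues directly, using that $\Delta S_{r_1}=\beta R^*_{r_1}$ and $\Delta S_{r_2}=\beta R^*_{r_2}$ force $\Delta S_{r_1},\Delta S_{r_2}$ — and by the previous paragraph $\Delta S^*_{r_1},\Delta S^*_{r_2}$ — to be $O(\frac1n)$ as well, so $\SW(\mathbf A)$ and $\SW(\mathbf A^*)$ (and likewise the two utilizations) differ by $O(\frac1n)$ and the ratio is $1+O(\frac1n)$, comfortably inside the claimed bound.

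Finally, feeding these adapted lemmas into Lemmas~\ref{lem:F2-upper-SW} and~\ref{lem:F2-upper-uti} finishes the proof. For social welfare, the chain becomes $\AR_{\SW}(\FTwos)\le\frac{1+\Delta S^*_{r_2}}{1+\Delta S_{r_2}}\le\frac{1+\Delta S^*_{r_2}}{1+\frac12\Delta S^*_{r_2}-\frac1{2n}}\le\frac{2-\alpha}{\frac{3-\alpha}{2}-\frac1{2n}}=\frac{4-2\alpha}{3-\alpha-\frac1n}$, using $\Delta S^*_{r_2}\le1-\alpha$ and the monotonicity of $t\mapsto\frac{1+t}{1+t/2-1/(2n)}$; for utilization one repeats the same two-case split on $\beta$ with every bounding ``$R_i$'' upgraded to ``$R^*_i$'' and every ``$2\Delta S$'' loosened by the same $O(\frac1n)$, arriving at $\frac{1}{\alpha+\frac{1-\alpha}{2}-\frac1{2n}}=\frac{2}{1+\alpha-\frac1n}$. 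Since every $\beta\ge\frac12$ branch gives strictly better constants, the extremal instances live on the $\beta<\frac12$ branch, matching the lower bounds shown later in Lemma~\ref{lem:F2-lower}.
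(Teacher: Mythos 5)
There is a genuine gap in your $\beta<\frac12$ branch, and it sits exactly where the bound is tight. Your multiplicative argument gives $\Delta S^*_{r_2}\le \frac{R_{r_1}}{z}\,\Delta S_{r_2}$ with $\frac{R_{r_1}}{z}\le \frac{2R_{r_1}}{R_{r_1}-1/n}$, i.e.\ an error term $\bigl(\frac{R_{r_1}}{z}-2\bigr)\Delta S_{r_2}\le \frac{\Delta S_{r_2}}{n\,R_{r_1}}\cdot\frac{2R_{r_1}}{R_{r_1}-1/n}$-type quantity; since agents of $G_{r_2}$ have $d_{i,r_1}\le 1$ (possibly tiny), $\Delta S_{r_2}$ can vastly exceed $z$ and $R_{r_1}$, so this error is \emph{not} bounded by $\frac1n$ even when $R_{r_1}$ is a constant --- you only get $\Delta S^*_{r_2}\le 2\Delta S_{r_2}+O(1/(nR_{r_1}))$ with a hidden constant depending on the instance, which does not yield the exact bound $\frac{4-2\alpha}{3-\alpha-\frac1n}$ that requires precisely $\Delta S_{r_2}\ge\frac12\Delta S^*_{r_2}-\frac1{2n}$. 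Worse, your fallback for the regime $R_{r_1}=\Theta(\frac1n)$ is false: from $\Delta S_{r_2}=\beta R^*_{r_2}$ you conclude $\Delta S_{r_2}=O(\frac1n)$, but only $R^*_{r_1}$ is small there, not $R^*_{r_2}$, and $\beta$ can be any constant below $\frac12$; the tight instances of Lemma~\ref{lem:F2-lower} have exactly $R_1<\frac1n$, $R_2\approx 1-\alpha$, $\Delta S_2\approx\frac{1-\alpha}2$ and a social-welfare gap of order $\frac{1-\alpha}2$, so the ratio in that regime is $\approx\frac{4-2\alpha}{3-\alpha}$, not $1+O(\frac1n)$. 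In other words, the case you dismiss as negligible is the worst case.

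The paper closes this hole with a different accounting device that your sketch is missing: it reinterprets $\FTwos$ as a three-step mechanism in which the agent $i^*\in G_{r_2}$ with minimum $d_{i,r_1}$ receives its $\frac1n$ dominant share in a separate step $2'$. Counting that $\frac1n$ of dominant share and its $\frac{d_{i^*,r_1}}{n}$ of resource $r_1$ as part of the conversion process, the group $G_{r_2}$ turns at least $\frac{R^*_{r_1}}{2}$ of resource $r_1$ into $\Delta S_{r_2}+\frac1n$ of dominant share, while at most $R^*_{r_1}$ is available to it in $\mathbf A^*$ beyond step $1'$; since the extended process starts with the best conversion rate ($i^*$) and the rate is non-increasing, this yields the uniform additive bound $\Delta S^*_{r_2}+\frac1n\le 2\bigl(\Delta S_{r_2}+\frac1n\bigr)$, valid in particular when $R_{r_1}=\Theta(\frac1n)$, and the final ratio computation (the chain you wrote) then goes through verbatim. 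Your utilization sketch has the same deficiency, and additionally glosses over the case where resource~2 is exhausted, which in the paper needs the extra facts $R_2\le\alpha+\frac1n$, $y_2^*\le 2y_2+\frac1n$ and a separate treatment of $\alpha=\frac12$ versus $\alpha<\frac12$.
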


\begin{proof}
In this proof, denote by $\mathbf{A}$ the allocation of $\FTwos$ and $\mathbf{A}^*$ the SW maximization (or utilization maximization) satisfying SI and EF.
Denote in the step 2 of $\FTwos$ the sum of increased dominant shares of agents in $G_1$ and $G_2$  by $\Delta S_1$ and $\Delta S_2$, respectively. 
Let $R_1$ and $R_2$ be the remaining amount of two resources after the step 1 of $\FTwos$.
Let $R_1^*=R_1+\frac{1}{n}d_{i^*,1}$ and $R_2^*=R_2+\frac{1}{n}d_{j^*,2}$,
where $i^* \in G_2$ is the agent $i$ in $G_2$ with the minimum $d_{i,1}$ and $j^* \in G_1$ is the agent $j$ in $G_1$ with the minimum $d_{j,2}$.
Denote the sum of dominant shares of agents in $G_1$ and $G_2$ in $\mathbf{A}^*$ by $(1-\alpha)+\Delta S^*_1$ and $\alpha+\Delta S^*_2$, respectively.
Let $\beta=\frac{\Delta S_1}{R_1^*}=\frac{\Delta S_2}{R_2^*}$.

We first consider SW.
Let $r_1 \in \{1,2\}$ be the resource that is used up in $\mathbf{A}$ and let $r_2=3-r_1$ represents the other.
Recall that in the proof of Lemma \ref{lem:F2-one-side} we do not need the property $\frac{\Delta S_1}{\Delta S_2}=\frac{R_1}{R_2}$ of $\FTwo$, thus the result in Lemma \ref{lem:F2-one-side} also holds for $\FTwos$, i.e., $\Delta S_{r_1} \ge \Delta S^*_{r_1}$.
For resource~$r_2$, if $\beta \ge \frac{1}{2}$, we have
$\Delta S_{r_2} \ge \frac{1}{2} R^*_{r_2} \ge \frac{1}{2} R_{r_2}\ge \frac{1}{2} \Delta S^*_{r_2}$.
If $\beta < \frac{1}{2}$,
recall that in Lemma \ref{lem:F2-other-side} we show that for $\FTwo$ we have $\Delta S^*_{r_2} \le 2\Delta S_{r_2}$.
However, this does not hold for $\FTwos$.
To get a similar bound for $\FTwos$, we interpret $\FTwos$ as a mechanism consisting of 3 steps:
In step $1'$ all agents except $i^*$ (the agent in $G_{r_2}$ with the minimum $d_{i,{r_1}}$) receive $\frac{1}{n}$ dominant share; In step $2'$ agent $i^*$ receive $\frac{1}{n}$ dominant share; The step $3'$ is the same as the original step 2.
Now we compare $\Delta S_{r_2}+\frac{1}{n}$ and $\Delta S^*_{r_2}+\frac{1}{n}$, where the additional $\frac{1}{n}$ can be imagined as the amount of resource~$r_2$ received by $i^*$ in step $2'$.
Since $\beta < \frac{1}{2}$, in step $3'$ agents in $G_{r_1}$ receive at most $\frac{R_{r_1}^*}{2}$ of resource~$r_1$.
Then, at least $\frac{d_{i^*,1}}{n}+(\frac{R_{r_1}^*}{2}-\frac{d_{i^*,1}}{n})$ of resource~$r_1$ is allocated to agents in $G_{r_2}$ in step $2'$ and step $3'$.
In other words, agents in $G_{r_2}$ receive at least $\frac{R_{r_1}^*}{2}$ of resource~${r_1}$ and $\Delta S_{r_2}+\frac{1}{n}$ of resource~${r_2}$ in step $2'$ and step $3'$.
Then even if we allocate all $R_{r_1}^*$ of resource~${r_1}$ to agents in $G_{r_2}$ in step $2'$ and step $3'$, they can use at most $2(\Delta S_{r_2}+\frac{1}{n})$ of resource~${r_2}$ since $i^* \in G_{r_2}$ is the agent in $G_2$ with the minimum $d_{i,{r_1}}$ and $D_k$ in Algorithm \ref{alg:F2-step} is decreasing.
It follows that 
\[\Delta S^*_{r_2}+\frac{1}{n} \le 2(\Delta S_{r_2}+\frac{1}{n}).\]
Therefore, $\AR_{\SW}(\FTwos)$ is upper bounded by
\begin{equation*}
    \begin{aligned}
\frac{1+\Delta S^*_{r_1}+\Delta S^*_{r_2}}{1+\Delta S_{r_1}+\Delta S_{r_2}}
\le \frac{1+\Delta S^*_{r_2}}{1+\Delta S_{r_2}}
&=\frac{1-\frac{1}{n}+\Delta S^*_{r_2}+\frac{1}{n}}{1-\frac{1}{n}+\Delta S_{r_2}+\frac{1}{n}} \\
&\le \frac{1-\frac{1}{n}+\Delta S^*_{r_2}+\frac{1}{n}}{1-\frac{1}{n}+\frac{\Delta S^*_{r_2}+\frac{1}{n}}{2}} \\
&\le \frac{1-\frac{1}{n}+(1-\alpha+\frac{1}{n})}{1-\frac{1}{n}+\frac{1-\alpha+\frac{1}{n}}{2}} \\
&=\frac{4-2\alpha}{3-\alpha-\frac{1}{n}},
    \end{aligned}
\end{equation*}
where the last inequality is due to $\max\{\Delta S^*_1, \Delta S^*_2\} \le \max\{R_1,R_2\} \le 1-\alpha$.

Next we consider utilization.
We distinguish two cases when $\beta \ge \frac{1}{2}$ and when $\beta < \frac{1}{2}$.
For the first case, we have that
$\Delta S_i \ge \frac{R^*_i}{2}  \ge \frac{R_i}{2} $ for any $i \in \{1,2\}$.
Then at least $1-\frac{R_1}{2} \ge 1-\frac{\alpha}{2} \ge 1-\frac{1-\alpha}{2}$ of resource~1 is used and at least $1-\frac{R_2}{2} \ge 1-\frac{1-\alpha}{2}$ of resource~2 is used.
Therefore, 
\[
\AR_{\Util}(\FTwos) \le \frac{1}{1-\frac{1-\alpha}{2}} \le \frac{2}{1+\alpha-\frac{1}{n}}.
\]

For the second case when $\beta < \frac{1}{2}$, we further distinguish two cases when resource~1 is used up and when resource~2 is used up in $\mathbf{A}$.
Let $y_1=\sum_{i \in G_1} \frac{d_{i,2}}{n}$ be the amount of resource~2 received by agents in $G_1$ and $y_2=\sum_{i \in G_2} \frac{d_{i,1}}{n}$ be the amount of resource~1 received by agents in $G_2$ when every agent receives $\frac{1}{n}$ dominant share.

If resource~1 is used up, as shown in the above we have that $\Delta S^*_2+\frac{1}{n} \le 2(\Delta S_2+\frac{1}{n})$.
Let $y_1^*$ be the amount of resource~2 received by $G_1$ in $\mathbf{A}^*$.
Since $1-\alpha \ge \frac{1}{2}$, as shown in Lemma \ref{lem:F2-upper-uti} we have $y_1^* \le 2y_1$.
Therefore, $\AR_{\Util}(\FTwos)$ (determined by resource~2) is upper bounded by
\begin{equation*}
    \begin{aligned}
\frac{\alpha-\frac{1}{n}+\Delta S^*_2+\frac{1}{n}+y_1^*}{\alpha-\frac{1}{n}+\Delta S_2+\frac{1}{n}+y_1}
&\le \frac{\alpha-\frac{1}{n}+\Delta S^*_2+\frac{1}{n}+y_1^*}{\alpha-\frac{1}{n}+\frac{\Delta S^*_2+\frac{1}{n}+y_1^*}{2}} \\
&\le \frac{\alpha-\frac{1}{n}+1-\alpha+\frac{1}{n}}{\alpha-\frac{1}{n}+\frac{1-\alpha+\frac{1}{n}}{2}}\\
&=\frac{2}{1+\alpha-\frac{1}{n}}.
    \end{aligned}
\end{equation*}

If resource~2 is used up, as shown in the above  we have that $\Delta S^*_1+\frac{1}{n} \le 2(\Delta S_1+\frac{1}{n})$.
If $\alpha=\frac{1}{2}$, the situation is the same as the case when resource~1 is used up. It suffices to show the case when $\alpha<\frac{1}{2}$, which means $(1-\alpha)-\alpha\geq \frac{1}{n}$.
Let $y_2^*$ be the amount of resource~1 received by $G_2$ in $\mathbf{A}^*$.
Reusing the technique in Lemma \ref{lem:F2-upper-uti}, we have $R_2 \le \alpha + \frac{1}{n}$ and $y_2^* \le 2y_2+\frac{1}{n}$.
Therefore, $\AR_{\Util}(\FTwos)$ (determined by resource~1) is upper bounded by
\begin{equation*}
    \begin{aligned}
\frac{1-\alpha-\frac{1}{n}+\Delta S^*_1+\frac{1}{n}+y_2^*}{1-\alpha-\frac{1}{n}+\Delta S_1+\frac{1}{n}+y_2}
&\le \frac{1-\alpha-\frac{1}{n}+\Delta S^*_1+\frac{1}{n}+y_2^*}{1-\alpha-\frac{1}{n}+\frac{\Delta S^*_1+\frac{1}{n}+y_2^*}{2}}\\
&\le \frac{1-\alpha-\frac{1}{n}+\alpha+\frac{1}{n}}{1-\alpha-\frac{3}{2n}+\frac{\alpha+\frac{1}{n}}{2}}\\
&=\frac{1}{1-(\frac{\alpha}{2}+\frac{1}{2n})-\frac{1}{2n}}\\
&\le \frac{1}{1-\frac{1-\alpha}{2}-\frac{1}{2n}} \\
&= \frac{2}{1+\alpha-\frac{1}{n}}.
    \end{aligned}
\end{equation*}
where the last inequality follows from $(1-\alpha)-\alpha\geq \frac{1}{n}$.
\end{proof}

Finally we show the lower bounds for both $\FTwo$ and $\FTwos$.
We will build an instance where after step 1 the remaining resource is $C=(\epsilon,1-\alpha-\epsilon)$ with $\epsilon \to 0$. In step 2 the optimal allocation can allocate all the remaining resource to some agent $i^*$ in $G_2$   and get SW close to~$2-\alpha$ and utilization close to~1, while for $\FTwo$, because of the condition (\ref{eq:F2}), we can only give about half of the remaining resource 1 to agent~$i^*$ and the other half to agents in $G_1$ such that $\frac{\Delta S_1}{R_1}=\frac{\Delta S_2}{R_2}\approx\frac{1}{2}$, where the SW is about $1+\frac{1-\alpha}{2}=\frac{1}{2}(3-\alpha)$ and the utilization is about~$\frac{1+\alpha}{2}$.

\begin{lemma}
\label{lem:F2-lower}
With $2$ resources,
\[\AR_{\SW}(\FTwo) \ge \frac{4-2\alpha}{3-\alpha}, \quad \AR_{\SW}(\FTwos) \ge \frac{4-2\alpha}{3-\alpha},\]
and
\[
\AR_{\Util}(\FTwo) \ge \frac{2}{1+\alpha}, \quad \AR_{\Util}(\FTwos) \ge \frac{2}{1+\alpha}.\]
\end{lemma}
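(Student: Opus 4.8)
The plan is to construct a single family of instances, parameterized by a large $n$ and a vanishing $\varepsilon$, that simultaneously witnesses all four lower bounds, exactly following the intuition sketched just before the lemma: after step~1 the leftover resource vector should be essentially $(0, 1-\alpha)$, forcing the increasing-speed condition~(\ref{eq:F2}) (or~(\ref{eq:F2star})) to split the scarce resource~1 roughly in half between $G_1$ and a single heavy agent $i^*\in G_2$, whereas the SI,EF optimum dumps everything onto $i^*$.

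Concretely, I would take $G_2$ to consist of $n\alpha$ agents, of which one special agent $i^*$ has demand vector $(\varepsilon, 1)$ and the remaining $n\alpha-1$ agents have demand vector $(1-\varepsilon, 1)$; and $G_1$ to consist of $n(1-\alpha)$ agents all with demand vector $(1, 1-\varepsilon)$, with $\varepsilon = \frac1n$ and $n$ chosen large enough that $n\alpha\ge 2$. First I would compute the leftover after step~1: agents in $G_1$ consume $(1-\alpha)$ of resource~1 and $(1-\alpha)(1-\varepsilon)$ of resource~2, while agents in $G_2$ consume $\frac1n\bigl((n\alpha-1)(1-\varepsilon)+\varepsilon\bigr)$ of resource~1 and $\alpha$ of resource~2, so $R_1 = O(1/n)$ and $R_2 = 1-\alpha - O(1/n)$; for $\FTwos$ the perturbed quantities $R_1^* = R_1 + \frac1n d_{i^*,1}$ and $R_2^* = R_2 + \frac1n d_{j^*,2}$ differ only by $O(1/n)$, so the same asymptotics hold. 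Then I would describe the SI,EF optimum $\mathbf{A}^*$: keep every agent except $i^*$ at dominant share $\frac1n$ and inflate $i^*$'s bundle to $(x\varepsilon, x)$ until resource~1 is exhausted, giving $x = (1-\alpha)(1-\frac1n) + \frac1n \to 1-\alpha$; one checks EF because $i^*$ still holds only $x\varepsilon = O(1/n)$ of resource~1, which is below every $G_2$ agent's share, and SI is immediate. This yields $\SW(\mathbf{A}^*) \to 2-\alpha$ and, since resource~1 is used up while only $\alpha + O(1/n)$ of resource~2 is touched, $\U(\mathbf{A}^*) \to 1$ — wait, rather $\U(\mathbf{A}^*)\to \min\{1, \alpha + \cdots\}$; here the binding resource is resource~2 at value close to $\alpha$? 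No: in $\mathbf{A}^*$ agents in $G_1$ already use $(1-\alpha)(1-\varepsilon)$ of resource~2 and $i^*$ adds $x\to 1-\alpha$, so resource~2 usage $\to 1$ as well, hence $\U(\mathbf{A}^*)\to 1$. I would state this carefully.

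Next I would analyze what $\FTwo$ (resp.\ $\FTwos$) actually does on this instance. Because $R_1 \approx 0$ and $R_2\approx 1-\alpha$, the constraint $\frac{\Delta S_1}{\Delta S_2} = \frac{R_1}{R_2}$ (resp.\ with $R_i^*$) forces $\Delta S_1 \le \beta R_1 = O(1/n)$, and within $G_2$ step~2 first pours resource~1 into $i^*$ (the unique smallest-$A_{i,1}$ agent) since $d_{i^*,1}=\varepsilon$ is tiny; the process halts when resource~1 runs out, which by the speed condition happens at $\beta \to \tfrac12$, so $\Delta S_2 \to \tfrac{R_2}{2} \to \tfrac{1-\alpha}{2}$. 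Hence $\SW(\mathbf{A}) \to 1 + \tfrac{1-\alpha}{2} = \tfrac{3-\alpha}{2}$ and, since resource~1 is fully used while resource~2 sits at $\alpha + \tfrac{1-\alpha}{2}\cdot 1 + o(1) = \tfrac{1+\alpha}{2} + o(1)$ usage (the increased resource~2 consumed by $G_1$ is only $O(1/n)$), $\U(\mathbf{A}) \to \tfrac{1+\alpha}{2}$. Dividing gives $\AR_\SW \ge \tfrac{2-\alpha}{(3-\alpha)/2} = \tfrac{4-2\alpha}{3-\alpha}$ and $\AR_\Util \ge \tfrac{1}{(1+\alpha)/2} = \tfrac{2}{1+\alpha}$ in the limit, for both mechanisms. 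The main obstacle I anticipate is bookkeeping the $O(1/n)$ error terms precisely enough — especially verifying that in $\FTwo$'s step~2 the bottleneck is genuinely "resource~1 runs out" rather than "a second agent enters $P_2$" before $\beta$ reaches $\tfrac12$ (this is why $i^*$ is made dramatically lighter than the other $G_2$ agents via $d_{i^*,1}=\varepsilon$ vs.\ $1-\varepsilon$), and confirming the same halting behavior for $\FTwos$ despite the $R_i^*$ perturbation; once those are pinned down, all four inequalities drop out of the same computation by taking $n\to\infty$.
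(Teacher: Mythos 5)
There is a genuine gap, and it starts with the leftover computation for your instance. With $G_1$ agents demanding $(1,1-\varepsilon)$, step~1 consumes $(1-\alpha)(1-\varepsilon)$ of resource~2 from $G_1$ \emph{plus} $\alpha$ from $G_2$, i.e. $1-(1-\alpha)\varepsilon$ in total, so $R_2=(1-\alpha)\varepsilon=O(1/n)$ — not $1-\alpha-O(1/n)$ as you assert. Hence after step~1 \emph{both} resources are nearly exhausted; since SI forces every agent to keep at least $\frac1n$ dominant share, any SI allocation (including your intended optimum, whose agent $i^*$ needs a full unit of resource~2 per unit of dominant share) can add only $O(1/n)$ to the social welfare, so on this instance the ratio tends to $1$ and nothing is witnessed. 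To make the leftover essentially $(0,1-\alpha)$ you must give $G_1$ a \emph{tiny} resource-2 demand; the paper uses $(1,\varepsilon)$ with $\varepsilon=1/n^2$.

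Even after that repair, your claim that $\FTwo$ halts at $\beta\approx\frac12$ (so $\Delta S_2\approx\frac{R_2}{2}$) does not follow from the construction. Resource~1 runs out when $\Delta S_1+d_{i^*,1}\Delta S_2=R_1$ with $\Delta S_1=\frac{R_1}{R_2}\Delta S_2$, i.e. $\Delta S_2=R_1/\bigl(\frac{R_1}{R_2}+d_{i^*,1}\bigr)$, and this equals $\approx\frac{R_2}{2}$ only if $d_{i^*,1}\approx\frac{R_1}{R_2}$. With your parameters ($d_{i^*,1}=\frac1n$, while $R_1\approx\frac{1+\alpha}{n}$ and $R_2\approx1-\alpha$, so $\frac{R_1}{R_2}\approx\frac{1+\alpha}{n(1-\alpha)}$) one gets $\Delta S_2\approx\frac{1-\alpha^2}{2}$, hence only $\AR_{\SW}\ge\frac{2(2-\alpha)}{3-\alpha^2}$ and a correspondingly weaker utilization bound — strictly below the targets $\frac{4-2\alpha}{3-\alpha}$ and $\frac{2}{1+\alpha}$. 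This calibration is exactly why the paper sets $d_{i^*,1}=\frac{1}{n(1-\alpha)}$ and $\varepsilon=1/n^2$, making $d_{i^*,1}$ match $\frac{R_1}{R_2}$ so that half of the scarce resource~1 goes to $i^*$; and for $\FTwos$ the paper does not argue ``the perturbation is $O(1/n)$'' but shows $\frac{R_1^*}{R_2^*}\ge\frac{R_1}{R_2}$, so the special agent receives no more than under $\FTwo$ and the same bound carries over. Your proposal needs both the corrected $G_1$ demands and this calibration (plus the monotonicity argument for $\FTwos$) before the four inequalities follow.
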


\begin{proof}
We first study SW.
We build an instance with \mpr $\alpha$ and $n$ agents as follows.
The first group $G_1$ consists of $n(1-\alpha)$ agents who have the same demand vector $(1,\varepsilon)$, where $\varepsilon=\frac{1}{n^2}$.
The second group $G_2$ consists of $n\alpha$ agents, where except for one special agent $i^*$ whose demand vector is $(\frac{1}{n(1-\alpha)},1)$, all other agents have the same demand vector $(1-\varepsilon,1)$.
The idea is that under $\FTwo$ and $\FTwos$ the special agent $i^*$ can get only about $\frac{1-\alpha}{2}$ of resource~2 and the SW is about $1+\frac{1-\alpha}{2}$ while there exists an allocation that satisfies SI and EF, and has SW about $2-\alpha-\frac{2}{n}$ by giving roughly $1-\alpha$ dominant share to the special agent $i^*$.

Formally, we first give $\frac{1}{n}$ dominant share to every agent except for agent $i^*$.
Then the remaining amount of two resources are $R_1^0=\alpha-(n\alpha-1)\frac{1-\frac{1}{n^2}}{n}\ge \frac{1}{n}$ and $R_2^0=(1-\alpha)(1-\frac{1}{n^2})+\frac{1}{n} \ge 1-\alpha$.
 We can give agent $i^*$ the bundle $(\frac{1-\frac{1}{n^2}}{n},(1-\frac{1}{n^2})(1-\alpha))$ and allocate remaining resources evenly to agents in $G_1$.
The SW is lower bounded by
\[
1-\frac{1}{n}+(1-\frac{1}{n^2})(1-\alpha) \ge 2-\alpha-\frac{2}{n}.
\]
It is easy to verify that the above allocation, denoted by $\mathbf{A}^*$, satisfies SI and EF.

Under $\FTwo$, the remaining resources after step 1 are
\[
\begin{aligned}
&R_1=\alpha-\frac{\frac{1}{n(1-\alpha)}+(n\alpha-1)(1-\frac{1}{n^2})}{n} < \frac{1}{n} \\
& R_2=(1-\alpha)(1-\frac{1}{n^2}) < 1-\alpha.
\end{aligned}
\]
If in step 2 we give the special agent $i^*$ a bundle $(\frac{1}{2n},\frac{1-\alpha}{2})$, then the increased dominant shares for two groups $\Delta S_1$ and $\Delta S_2$ satisfy that
$\Delta S_1 \le R_1-\frac{1}{2n}<\frac{R_1}{2}$ and $\Delta S_2=\frac{1-\alpha}{2}>\frac{R_2}{2}$.
This means in step 2 the dominant share of the special agent $i^*$ is increased by at most $\frac{1-\alpha}{2}$. Then the SW under $\FTwo$ is upper bounded by $1+R_1+\frac{1-\alpha}{2} \le 1+\frac{1-\alpha}{2}+\frac{1}{n}$ and we have
\[
\AR_{\SW}(\FTwo) \ge 
\frac{2-\alpha-\frac{2}{n}}{1+\frac{1-\alpha}{2}+\frac{1}{n}} \overset{n \to \infty}{\longrightarrow} \frac{2-\alpha}{1+\frac{1-\alpha}{2}}.
\]

For $\FTwos$, we have
\[
\frac{R^*_1}{R^*_2}=\frac{R_1+\frac{1}{n^2(1-\alpha)}}{R_2+\frac{1}{n^2}} \ge \frac{R_1+\frac{1}{n^2}}{R_2+\frac{1}{n^2}} \ge \frac{R_1}{R_2},
\]
where the last inequality follows by $R_1<R_2$.
This means that under $\FTwos$ the special agent gets less resources in step 2 than under $\FTwo$, i.e., $\Delta S^*_2 \le \Delta S_2$.
Using the same argument for $\FTwo$ we get that~$\AR_{\SW}(\FTwos)$ is also lower bounded by $\frac{2-\alpha}{1+\frac{1-\alpha}{2}}$.

For utilization, we use the same instance in the above.
In $\mathbf{A}^*$, when we give agent $i^*$ the bundle $(\frac{1-\frac{1}{n^2}}{n},(1-\frac{1}{n^2})(1-\alpha))$ and every other agent $\frac{1}{n}$ dominant share, the remaining amount of resource~1 is at most $R_1^0 \le \frac{\alpha}{n^2}+\frac{1}{n} \le \frac{2}{n}$ and the remaining amount of resource~2 is at most $R_2^0-(1-\frac{1}{n^2})(1-\alpha)=\frac{1}{n}$.
Thus, utilization of $\mathbf{A}^*$ is at least $1-\frac{2}{n}$.

Under $\FTwo$, the remaining resources after step 1 are $R_1<\frac{1}{n}$ and $R_2=(1-\alpha)(1-\frac{1}{n^2})$, and we have shown that $\Delta S_2 \le \frac{1-\alpha}{2}$.
Since $R_1<\frac{1}{n}$, agents in $G_1$ receive at most $\frac{1}{n}$ of resource~2 in step 2.
Thus, at least $R_2-\frac{1-\alpha}{2}-\frac{1}{n} \ge \frac{1-\alpha}{2}-\frac{2}{n}$ of resource~2 is not used under $\FTwo$.
Then 
\[
\AR_{\Util}(\FTwo) \ge 
\frac{1-\frac{2}{n}}{1-\frac{1-\alpha}{2}-\frac{2}{n}} \overset{n \to \infty}{\longrightarrow} \frac{1}{1-\frac{1-\alpha}{2}}.
\]

Under $\FTwos$, we have shown that $\Delta S^*_2 \le \Delta S_2 \le \frac{1-\alpha}{2}$.
Then using the same argument for $\FTwo$, at least $R_2-\frac{1-\alpha}{2}-\frac{1}{n} \ge \frac{1-\alpha}{2}-\frac{2}{n}$ of resource~2 is not used under $\FTwos$ and we get the same lower bound $\frac{1}{1-\frac{1-\alpha}{2}}$ for $\FTwos$.
\end{proof}

\myparagraph{Example 1 (continued).}
We compare $\FTwo$ and $\FTwos$ for the instance in Example 1.
Step 1 is the same as before and we have $\frac{R_1}{R_2}=\frac{\frac{4}{15}}{\frac{7}{15}}=\frac{4}{7}$ and $\frac{R_1^*}{R_2^*}=\frac{\frac{4}{15}+\frac{1}{15}}{\frac{7}{15}+\frac{1}{15}}=\frac{5}{8}$.
In step 2, under $\FTwo$, we increase the allocation of agent $2$ by $(\frac{16}{81},\frac{16}{405})$, and that of agent $3$ by $(\frac{28}{405},\frac{28}{81})$ such that resource $r_1$ is used up.
Notice that $\Delta S_1=\frac{16}{81}$ and $\Delta S_2=\frac{28}{81}$ satisfy $\frac{\Delta S_1}{\Delta S_1}=\frac{R_1}{R_2}$.
Under $\FTwos$, we increase the allocation of agent $2$ by $(\frac{20}{99},\frac{4}{99})$, and that of agent $3$ by $(\frac{32}{495},\frac{32}{99})$.
Notice that $\Delta S_1=\frac{20}{99}$ and $\Delta S_2=\frac{32}{99}$ satisfy $\frac{\Delta S_1}{\Delta S_1}=\frac{R_1^*}{R_2^*}$.
These two corresponds to Figure \ref{fig:example_sup23} and \ref{fig:example_sup24}.
The SW under $\FTwo$ and $\FTwos$ is $\approx 1.54$ and $\approx 1.53$ respectively, which is larger than $1.36$ under DRF and $1.47$ under $\FOne$.
\medskip

Figure \ref{fig:compare-ratio} shows \fratio{s} of DRF, $\FOne$, $\FTwo$, and $\FTwos$ (when $n \to \infty$) as a function of $\alpha$.
Notice that all three new mechanisms have better \fratio than DRF for any $\alpha \in (0,\frac{1}{2})$.
Among new mechanisms, $\FOne$ has better \fratio than $\FTwo$ ($\FTwos$) when $\alpha$ is close to 0 while $\FTwo$ ($\FTwos$) has better \fratio than $\FOne$ when $\alpha$ is close to $0.5$.
    Note that we can combine these two mechanisms  to achieve a better \fratio, which will be further discussed in Section~\ref{sec:PoSP}.

\begin{figure}[t]
    \centering
    \includegraphics[width=0.7\textwidth]{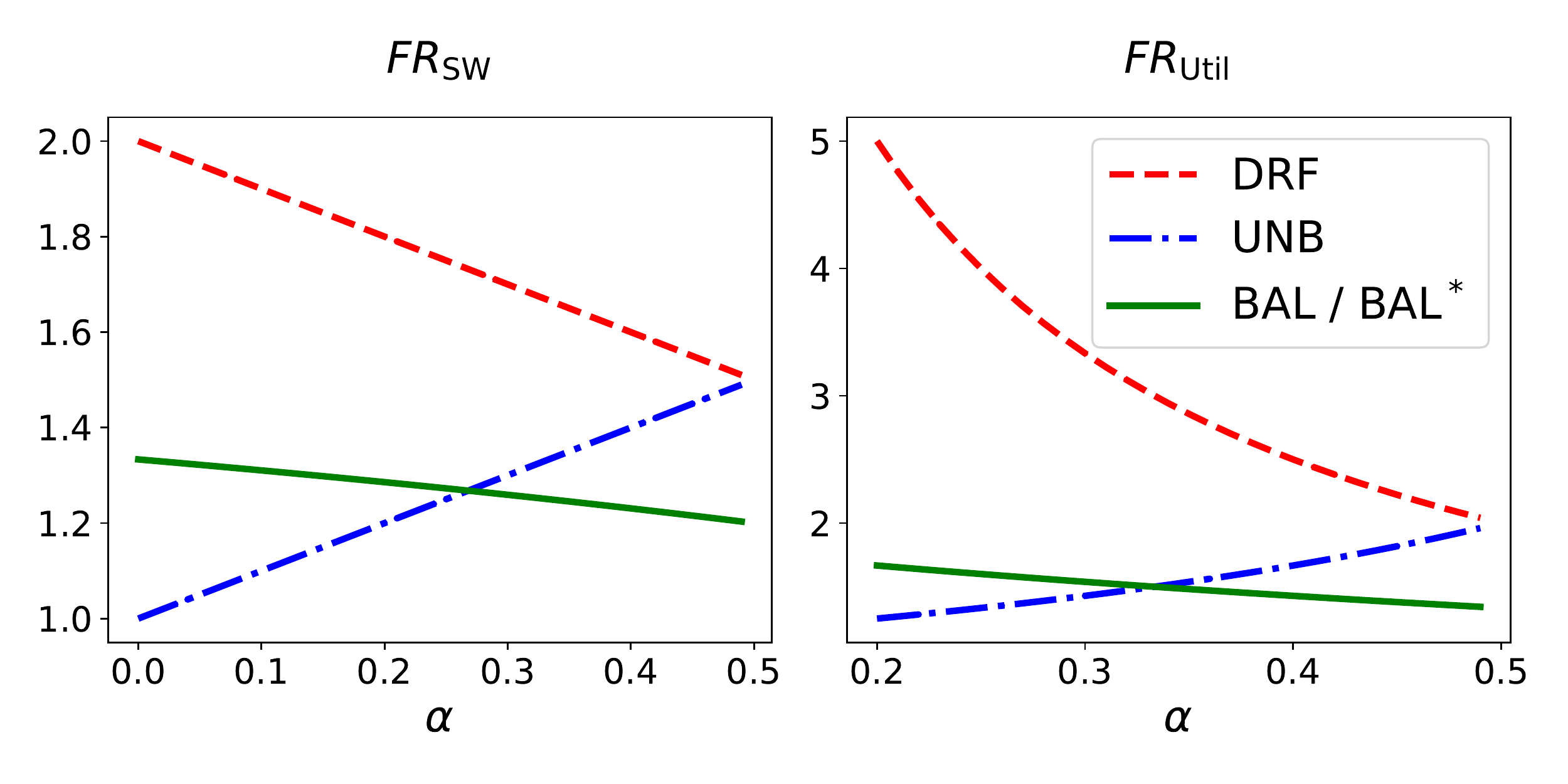}
    \caption{\Fratio of mechanisms as a function of $\alpha$. As $\AR_{\Util}(\DR) \to \infty$ when $\alpha \to 0$, for better visualization, we only show $\AR_{\Util}$ for $\alpha \in [0.2,0.5]$.}
    \label{fig:compare-ratio}
\end{figure}




\subsection{Experimental evaluation}
\label{sec:exp-m=2}

The above analysis of \fratio shows that our mechanism $\FOne$ and $\FTwos$ have better performance than DRF from the worst-case perspective.
In this section, we compare the performance of DRF, $\FOne$ and $\FTwos$ when $m=2$ using both synthetic instances and real-world instances based on Google cluster-usage traces~\cite{reiss2011google}.
Our results are shown in Figure~\ref{fig:compare-exp}, where we plot the ratio between the optimal allocation (satisfying SI and EF) and the allocation under compared mechanisms.
Our results match well with the above \fratio{s} and show that both $\FOne$ and $\FTwos$ achieve better social welfare and utilization than DRF.

\myparagraph{Random instances with different $\alpha$.}
First we compare mechanisms on random instances with fixed $n=100$ and different $\alpha \in \{0.05,0.10,\dots,0.50\}$.
For each $\alpha$, we average over 1000 instances to get the data point.
To control the value of $\alpha$, we choose $n(1-\alpha)$ agents and set $d_{i,1}=1$ for them, and for the remaining agents we set $d_{i,2}=1$.
The other entries of the demand vectors are sampled uniformly from $\{0.01,0.02,\dots,1.00\}$.

The result is shown in the first row of Figure \ref{fig:compare-exp}.
For SW, $\FTwos$ is very close to the optimal solution (the ratio is close to 1) and $\FTwos$ is always better than DRF for different values of $\alpha$.
$\FOne$ also outperforms DRF for most values of $\alpha$ except when $\alpha \in [0.45,0.5]$.
Comparing $\FOne$ and $\FTwos$, similarly to the crossing point of their theoretical \fratio{s} in Figure \ref{fig:compare-ratio}, their performance on random instances also cross when $\alpha \approx 0.25$ in Figure \ref{fig:compare-exp}, confirming that when $\alpha \to 0$, $\FOne$ is better than $\FTwos$, and when $\alpha \to 0.5$, $\FTwos$ is better than $\FOne$.
When $\alpha \ge 0.2$, the performance trend of three mechanisms matches well with the \fratio.
More precisely, when $\alpha$ increases, $\FTwos$ and DRF perform better while $\FOne$ performs worse.
The comparison of three mechanisms in utilization is almost the same as in SW.

\myparagraph{Instances generated from Google trace.}
Next we test mechanisms on instances that are generated according to the real demands of tasks from the Google traces.
The Google traces record the demands for CPU and memory of each submitted task.
We normalize these demands to get a pool of normalized demand vectors.
Then we generate instances by randomly sampling demand vectors from this pool.
We compare mechanisms on instances with different number $n \in \{10,20,\dots,100\}$ of agents.
For each $n$, we average over 1000 instances to get the data point.

The result is shown in the second row of Figure \ref{fig:compare-exp}.
For both SW and utilization, $\FOne$ and $\FTwos$ outperform DRF and the improvements are more than $10\%$.
The performance of $\FOne$ and $\FTwos$ are very close, because in the demand vector pool more agents (about $67\%$) have CPU as the dominant resource and hence the generated instances have $\alpha$ close to $0.33$.
Notice that the \fratio{s} of $\FOne$ and $\FTwos$ are indeed very close when $\alpha=0.33$ (see Figure \ref{fig:compare-ratio}).

\begin{figure}[t]
    \centering
    \includegraphics[width=0.7\textwidth]{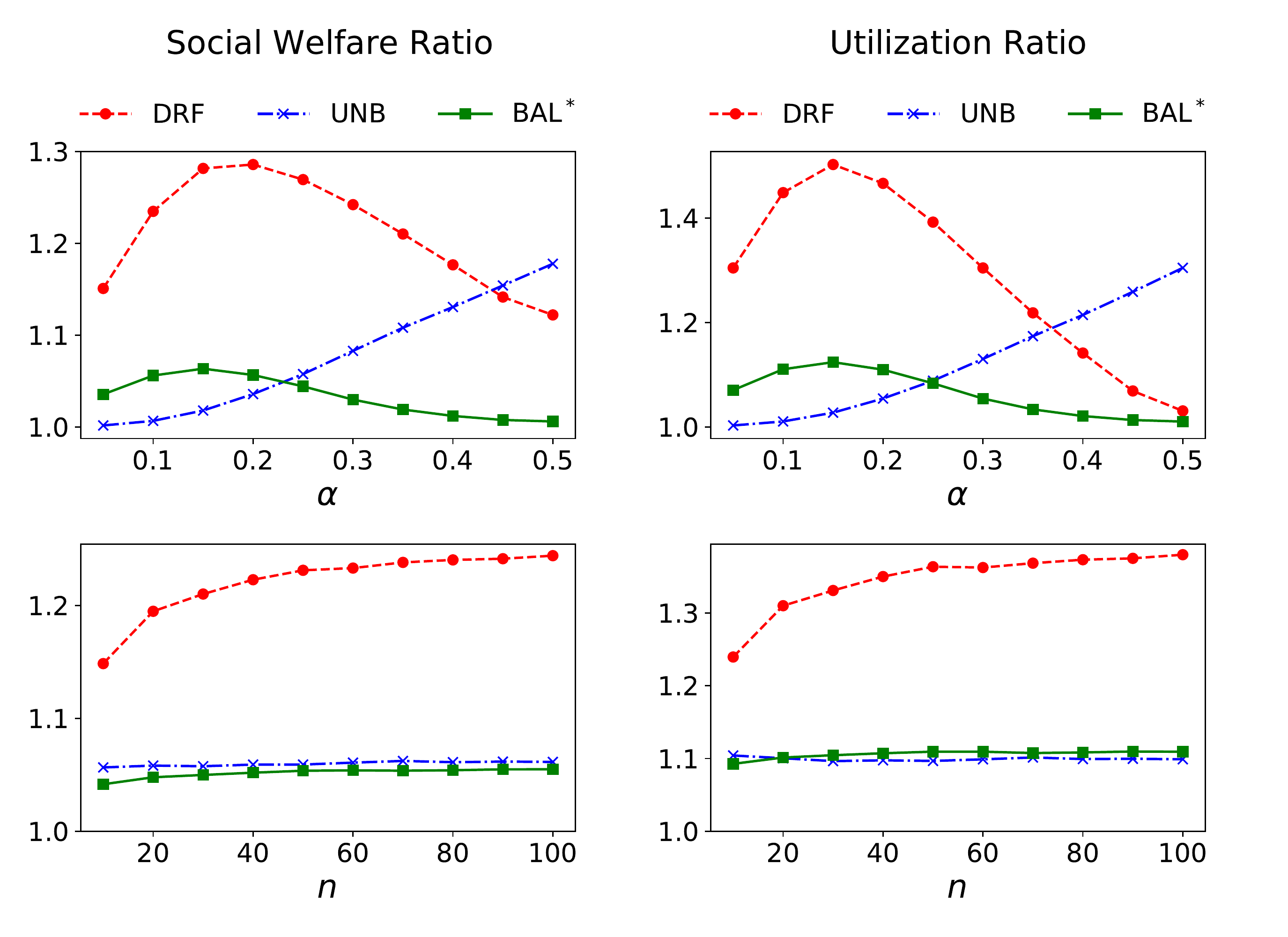}
    \caption{Performance ratio between the optimal allocation and allocations under DRF, $\FOne$, $\FTwos$ on synthetic instances with different $\alpha$ (1st row) and real-world instances with different $n$ (2nd row).
     }
    \label{fig:compare-exp}
\end{figure}

%

\section{Multiple Types of Resources}
We move to the general case with $m \ge 2$ types of resources.

\subsection{A family of mechanisms}
We start by presenting a large family of mechanisms that satisfy the four desired properties SI, EF, PO, and SP, which includes DRF as a special case.
This is in response to the question asked in \cite{Ghodsi2011} that \emph{``whether DRF is the only possible strategy-proof policy for multi-resource fairness, given other desirable properties such as Pareto efficiency''}.
Although many mechanisms based on DRF have been proposed for different settings and there are characterizations of mechanisms satisfying desirable properties under Leontief preferences \cite{Nicolo2004,Friedman2011,Li2013}, to the best of our knowledge, there is no work that directly answers this question.

We call a function $g$ that maps vectors $\mathbf{v} \in [0,1]^m$ to $\mathbb{R}$ $\emph{monotone}$, if it satisfies that for any two vectors $\mathbf{v}_1,\mathbf{v}_2$ with $\mathbf{v}_1>\mathbf{v}_2$, $g(\mathbf{v}_1)>g(\mathbf{v}_2)$.
Here $\mathbf{v}_1>\mathbf{v}_2$ means $\mathbf{v}_1$ is element-wise strictly larger than $\mathbf{v}_2$.
Denote $\mathcal{G}$ the set of all monotone functions.

Now we define a family of mechanisms $\mathcal{F}$ based on monotone functions.
For each monotone function $g \in \mathcal{G}$, we define a mechanism $f_g \in \mathcal{F}$ as follows.
The mechanism contains two steps that have the same flavor as $\FOne$.
In step 1, every agent receives $\frac{1}{n}$ dominant share.
In step 2, we increase the allocation for agents that have the minimum value of $g(\mathbf{A}_i)$ till some resource is used up.
The specific implementation of $F_g$ is shown in Algorithm \ref{alg:F-family}. 

For  mechanism $F_g\in\mathcal{F}$ which can run Steps \ref{Fgstep1} and \ref{Fgstep2} efficiently (i.e., in polynomial time), as $|P|$ is increasing in each round of step 2, so the number of rounds of step 2 is at most $n$, which means this mechanism $F_g$ can be implemented in polynomial time.

We show that all mechanisms in $\mathcal{F}$ satisfy the four desired properties.

\begin{algorithm}[t]
    \DontPrintSemicolon
    $\mathbf{C} \leftarrow (c_1, \ldots, c_m) = (1,\dots,1)$  \tcp*{remaining resources}
    \For{$1 \le i \le n$}
    {
      $\mathbf{A}_i \leftarrow \frac{1}{n} \mathbf{d}_i$ \tcp*{every agent receives $\frac{1}{n}$ dominant share initially}
      $\mathbf{C} \leftarrow \mathbf{C}-\mathbf{A}_i$
    }
    \While{$\forall i, c_i > 0$}
    {
      $P \leftarrow \arg\min_{i \in N}\{g(\mathbf{A}_i)\}$ \tcp*{the set of agents with smallest value of $g()$}
      \If{$P \neq N$}
      {
        $t \leftarrow \min_{i \in N \setminus P} g(\mathbf{A}_i)$ \tcp*{the second smallest value of $g()$}
      }
      \Else
      {
        $t \leftarrow g(1,\dots,1)$
      }
      find vector $\boldsymbol{\delta_0}$ such that $g((1+\delta_{0,i}) \cdot \mathbf{A}_{i})=t$ for all $i\in P$ \tcp*{increasing step when $t$ is reached} \label{Fgstep1}
      \ForEach {$j \in R$}
      {
          find vector $\boldsymbol{\delta_j}$ such that $g((1+\delta_{j,i}) \cdot \mathbf{A}_{i})=g((1+\delta_{j,i^{\prime}}) \cdot \mathbf{A}_{i^{\prime}})$ for all $i,i^{\prime}\in P$ and $\sum_{i\in P} \delta_{j,i}\cdot A_{i,j}=c_j$ \label{Fgstep2}
        \tcp*{increasing step when resource $j$ is used up}
      }
      $\boldsymbol{\delta^*} \leftarrow  \min\{\boldsymbol{\delta_0}, \min_{j \in R} \boldsymbol{\delta_j}\}$ \\
      \ForEach {$i \in P$}
        {
          $\mathbf{A}_i \leftarrow (1+\delta^*_i) \cdot \mathbf{A}_i$;
          \tcp*{increase allocations for agents in $P$}
          $\mathbf{C} \leftarrow \mathbf{C} - \delta^*_i \cdot \mathbf{A}_i$;
        }
    }
    \Return $\mathbf{A}$
    \caption{$F_g(\mathbf{d}_1,\mathbf{d}_2,\dots,\mathbf{d}_n)$}
    \label{alg:F-family}
    \end{algorithm}



\begin{restatable}{theorem}{FG}
\label{thm:F-family}
For any $m$, every mechanism $F_g \in \mathcal{F}$ satisfies SI, EF, PO, and SP.
\end{restatable}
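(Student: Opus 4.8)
The plan is to first extract a structural description of the output of $F_g$ directly from Algorithm~\ref{alg:F-family}, and then verify the four properties one by one. After Step~1 every agent $i$ holds $\frac{1}{n}\mathbf{d}_i$, and Step~2 only rescales, by positive factors, the allocations of the agents currently attaining $\min_{i}g(\mathbf{A}_i)$, always bringing those agents to a common $g$-value; since $|P|$ strictly grows each round, every agent that is ever ``boosted'' ends with one and the same $g$-value $\tau^{*}$, and the loop stops exactly when some resource is exhausted. Thus $F_g$ outputs $\mathbf{A}_i=a_i\mathbf{d}_i$ with $a_i\ge\frac{1}{n}$, where $a_i=\frac{1}{n}$ when $g(\frac{1}{n}\mathbf{d}_i)\ge\tau^{*}$ and $g(\mathbf{A}_i)=\tau^{*}$ otherwise (if no agent is boosted the output is just the uniform $\frac{1}{n}$ allocation and everything below is trivial). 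I would also record that $\lambda\mapsto g(\lambda\mathbf{d}_i)$ is strictly increasing (monotonicity together with $\mathbf{d}_i>\mathbf{0}$), so the factor $\lambda_i(\tau)$ solving $g(\lambda_i(\tau)\mathbf{d}_i)=\tau$ is well defined and increasing in $\tau$. Granting this description, SI and PO are immediate: each $a_i\ge\frac{1}{n}$, and a non-wasteful allocation that exhausts some resource is PO by the remark in Section~\ref{sec:prelim}.

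For EF I would argue by contradiction. Suppose $i$ envies $j$, so $u_i(\mathbf{A}_j)>u_i(\mathbf{A}_i)=a_i$; setting $y=u_i(\mathbf{A}_j)$, Leontief utilities give $A_{jr}\ge y\,d_{ir}>a_i d_{ir}=A_{ir}$ for every $r$ (using $d_{ir}>0$), hence $\mathbf{A}_j>\mathbf{A}_i$ coordinatewise and $g(\mathbf{A}_j)>g(\mathbf{A}_i)$ by monotonicity. If $j$ was not boosted then $A_{j,r_i^{*}}=\frac{1}{n} d_{j,r_i^{*}}\le\frac{1}{n}\le A_{i,r_i^{*}}$, contradicting $\mathbf{A}_j>\mathbf{A}_i$; if $j$ was boosted then $g(\mathbf{A}_j)=\tau^{*}$, whereas $g(\mathbf{A}_i)$ equals $\tau^{*}$ if $i$ is boosted and equals $g(\frac{1}{n}\mathbf{d}_i)\ge\tau^{*}$ otherwise, so $g(\mathbf{A}_i)\ge\tau^{*}=g(\mathbf{A}_j)$, again a contradiction. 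Hence $F_g$ is EF.

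SP is where the real work is. Fix agent $i$ and a report $\mathbf{d}'_i$; the mechanism gives it $\mathbf{A}'_i=a'_i\mathbf{d}'_i$ and its true utility is $u'=a'_i\min_r(d'_{ir}/d_{ir})$, so the bundle it actually uses, $u'\mathbf{d}_i$, satisfies $u'\mathbf{d}_i\le\mathbf{A}'_i$ coordinatewise and therefore $g(u'\mathbf{d}_i)\le g(\mathbf{A}'_i)$. If $i$ is not boosted in $F_g(\mathbf{I}')$ then $a'_i=\frac{1}{n}$ and $u'\le\frac{1}{n}\le a_i$, so I may assume $i$ is boosted there, i.e.\ $g(\mathbf{A}'_i)=\tau'$, the threshold of $F_g(\mathbf{I}')$. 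Now split on $\tau'$ versus $\tau^{*}$. If $\tau'<\tau^{*}$, then $g(u'\mathbf{d}_i)\le\tau'<\tau^{*}\le g(a_i\mathbf{d}_i)$, and since $u'\mathbf{d}_i$ and $a_i\mathbf{d}_i$ lie on the ray through $\mathbf{d}_i$, along which $g$ is strictly increasing, this forces $u'<a_i$. If $\tau'\ge\tau^{*}$, I would form the allocation $\tilde{\mathbf{A}}$ on the \emph{true} instance $\mathbf{I}$ that gives $i$ the bundle $u'\mathbf{d}_i$ and every other $k$ its $F_g(\mathbf{I}')$-bundle; it is feasible because $u'\mathbf{d}_i\le\mathbf{A}'_i$, and each $k\ne i$ there has utility $\max(\frac{1}{n},\lambda_k(\tau'))\ge\max(\frac{1}{n},\lambda_k(\tau^{*}))=u_k(\mathbf{A}_k)$, so if $u'>a_i$ then $\tilde{\mathbf{A}}$ Pareto-dominates the PO allocation $\mathbf{A}=F_g(\mathbf{I})$, a contradiction. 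In either case $u'\le a_i=u_i(\mathbf{A}_i)$, so no profitable deviation exists.

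The hard part will be the SP argument, and specifically the two ingredients that make it close: the threshold description of $F_g$ combined with the monotonicity of $\tau\mapsto\lambda_k(\tau)$, and the dichotomy on $\tau'$ versus $\tau^{*}$ -- the case $\tau'\ge\tau^{*}$ does not seem to yield to a direct inequality and appears to genuinely require the Pareto-domination contradiction against PO. A minor point to be careful about is that the step $g(u'\mathbf{d}_i)\le g(\mathbf{A}'_i)$ uses weak (not merely strict) monotonicity of $g$; this is automatic once $g$ is continuous, which is in any case needed for Steps~\ref{Fgstep1}--\ref{Fgstep2} of Algorithm~\ref{alg:F-family} to be well posed.
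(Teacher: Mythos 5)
Your proposal is correct and, at its core, is the same argument as the paper's: the same threshold description of the output of $F_g$, EF by comparing the dominant-resource coordinate of an unboosted envied agent (resp.\ the common threshold value of boosted agents), and SP by exhibiting a feasible allocation that Pareto-dominates the truthful outcome, contradicting PO. The organizational difference is only in SP: you split on $\tau'$ versus $\tau^*$ and reason about the bundle $u'\mathbf{d}_i$ the manipulator actually uses, whereas the paper argues directly that a profitable report forces $\mathbf{A}'_{i}>\mathbf{A}_{i}$ coordinatewise, hence that the final threshold strictly rises, hence that every truthful agent's bundle weakly rises along its own ray, giving the PO contradiction in one pass. One caveat deserves attention: your step $g(u'\mathbf{d}_i)\le g(\mathbf{A}'_i)$ compares bundles on two different rays that are only \emph{weakly} ordered (there is a tie in the coordinate attaining $\min_r d'_{ir}/d_{ir}$), and the paper's definition of monotone only yields strict inequalities under strict coordinatewise domination; strict monotonicity alone does not imply the weak form (one can construct strictly monotone, discontinuous $g$ violating it). You flag this and appeal to continuity, which is a defensible reading since Steps~\ref{Fgstep1} and~\ref{Fgstep2} of Algorithm~\ref{alg:F-family} are only well posed with continuity along rays, but the extra assumption is avoidable and then your proof matches the paper's hypotheses exactly: in the case $\tau'<\tau^*$, suppose $u'>a_i$; then $A'_{ir}\ge u'd_{ir}>a_id_{ir}$ for every $r$, so strict monotonicity gives $g(\mathbf{A}'_i)>g(a_i\mathbf{d}_i)\ge\tau^*>\tau'=g(\mathbf{A}'_i)$, a contradiction, so $u'\le a_i$. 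Similarly, the map $\lambda_k(\tau)$ need not be well defined without ray-continuity, but the only fact you use—that a weakly higher threshold gives every truthful agent a weakly larger bundle—follows directly from strict monotonicity restricted to each agent's own ray, as in the paper.
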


\begin{proof}
SI and PO are clearly satisfied.
Next we show EF.
Suppose there exists a mechanism $F_g \in \mathcal{F}$ that is not EF.
Let $i$ and $i'$ be two agents such that $i$ envies $i'$ in an allocation $\mathbf{A}$ produced by $F_g$, i.e., $u_i(\mathbf{A}_{i'})>u_i(\mathbf{A}_{i})$.
Then we have $\mathbf{A}_{i'} > \mathbf{A}_i$ and hence $g(\mathbf{A}_{i'})>g(\mathbf{A}_i)$.
Let $t=\min_{j\in N} g(\mathbf{A}_j)$ and $P=\{j \in N \mid g(\mathbf{A}_j)=t\}$.
Since $g(\mathbf{A}_{i'})>g(\mathbf{A}_i)$, we have $i' \not \in P$ and then $\mathbf{A}_{i'}=\frac{1}{n} \mathbf{d}_{i'}$.
Let $k^*$ be the dominant resource of agent $i$, we have
\[
A_{i,k^*} \ge \frac{1}{n} \ge \frac{1}{n} d_{i',k^*} = A_{i',k^*},
\]
which contradicts with $\mathbf{A}_{i'} > \mathbf{A}_i$.
This finishes the proof for EF.

Finally we show SP.
Suppose there exists a mechanism $F_g \in \mathcal{F}$ that is not SP.
Let $i^*$ be the agent who can benefit by reporting a false demand vector $\mathbf{d}'_{i^*}$ instead of the true demand vector $\mathbf{d}_{i^*}$ in an instance $\mathbf{I}$.
Denote the truthful outcome by $\mathbf{A}$ and the manipulated outcome by $\mathbf{A}'$.
Let $t=\min_{i\in N} g(\mathbf{A}_i)$ and $P=\{i \in N \mid g(\mathbf{A}_i)=t\}$.
Let $t'$ and $P'$ be the corresponding notations for $\mathbf{A}'$.
Note that for any agent $i \in N \setminus \{i^*\}$,
since the allocation is non-wasteful, we have
$\mathbf{A}'_i=\lambda \mathbf{A}_i$ for some $\lambda \ge 0$ and hence
\[
u_i(\mathbf{A}'_i)>u_i(\mathbf{A}_i) \Leftrightarrow \mathbf{A}'_{i} > \mathbf{A}_{i} \Leftrightarrow g(\mathbf{A}'_i)>g(\mathbf{A}_i),
\] 
where $\mathbf{A}'_{i} > \mathbf{A}_{i}$ means that $A'_{i,j}>A_{i,j}$ for all $j \in R$. 
For agent $i^*$, the second part of the above formula only holds the direction from left to right, i.e., 
\[
u_{i^*}(\mathbf{A}'_{i^*})>u_{i^*}(\mathbf{A}_i) \Leftrightarrow \mathbf{A}'_{{i^*}} > \mathbf{A}_{{i^*}} \Rightarrow g(\mathbf{A}'_{i^*})>g(\mathbf{A}_{i^*}).
\]
Then from $u_{i^*}(\mathbf{A}'_{i^*})>u_{i^*}(\mathbf{A}_{i^*})$ we get $g(\mathbf{A}'_{i^*})>g(\mathbf{A}_{i^*})$ and $\mathbf{A}'_{i^*} > \mathbf{A}_{i^*}$.
Since $\mathbf{A}'_{i^*} > \mathbf{A}_{i^*} \ge \frac{1}{n} \mathbf{d}_{i^*}$, it must be the case that $i^* \in P'$ and $t \le g(\mathbf{A}_{i^*}) < g(\mathbf{A}'_{i^*}) =t'$.
Consequently, 
for any $i \in P$ we have $g(\mathbf{A}'_{i}) \ge t' \ge t = g(\mathbf{A}_{i})$, and for any $i \in N \setminus P$ we have $g(\mathbf{A}'_{i}) \ge g(\frac{1}{n}\mathbf{d}_i) = g(\mathbf{A}_{i})$.
Thus, we have $g(\mathbf{A}'_{i}) \ge g(\mathbf{A}_{i}) \Leftrightarrow \mathbf{A}'_{i} \ge \mathbf{A}_{i}$ for all $i \in N \setminus \{i^*\}$ and $\mathbf{A}'_{i^*} > \mathbf{A}_{i^*}$.
This contradicts with that $\mathbf{A}$ is PO.
This finishes the proof for SP.
\end{proof}

With the large family of mechanisms at hand, the next question is to check if there exists any mechanism from $\mathcal{F}$ that can achieve better efficiency than DRF. Unfortunately, as we will see in the next part, all mechanisms from $\mathcal{F}$ will have the same  approximation guarantee for general $m$. This means from a worst-case analysis point of view, no mechanism has a provable better SW or utilization than DRF. Thus a more fine-grained analysis is needed to find better mechanisms.
In the next section, we analyze a special mechanism from $\mathcal{F}$, which can be seen as a generalization of $\FOne$, by considering two parameters.

\subsection{Generalization of $\FOne$}
\label{sec:general-F1}

Similar to the case with $2$ resources, we first partition all agents into $m$ groups $G_i (i \in [m])$ according to their dominant resources and choose an arbitrary group (say $G_1$) as a special group.
Then, we let
$\alpha \coloneqq 1-\frac{|G_1|}{n}$
be the fraction of agents not in $G_1$, and let
$\beta \coloneqq \sum_{i \in N  \setminus G_1} \frac{d_{i,1}}{n\alpha}$
be the average demand of agents not in $G_1$ for resource $r_1$.

$\FOne$ can be generalized as follows.
In step 1, each agent gets $\frac{1}{n}$ dominant share. In step 2, we increase the allocation of agents who have the smallest fraction of resource $r_1$ in the same speed for resource $r_1$, till some resource is used up.
With slight abuse of notation, we still call this generalized mechanism $\FOne$.
Note that this mechanism is equivalent to the mechanism from the family $\mathcal{F}$ with monotone function $g(\mathbf{v})=\mathbf{v}_1$.
We prove the \fratio of $\FOne$ and DRF parameterized by $\alpha$ and $\beta$ in the following theorem.

\begin{figure}[t]
\centering
    \includegraphics[scale=0.4]{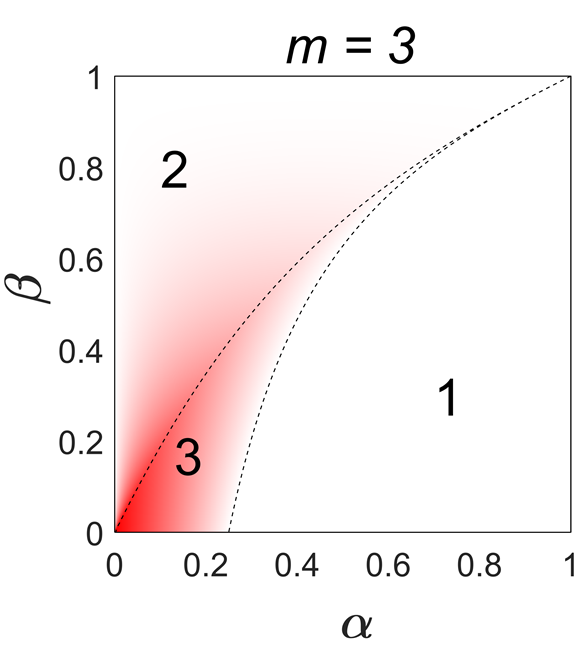}
   \quad
    \includegraphics[scale=0.4]{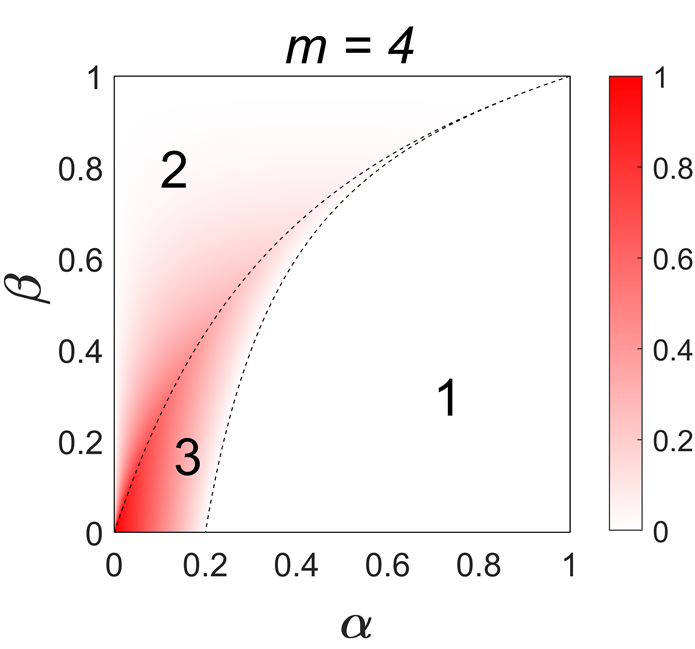}
  \caption{The difference of $\AR_{\SW}(\DR)$ over $\AR_{\SW}(\FOne)$.
  }
  \label{fig:mresbound}
\end{figure}

\begin{restatable}{theorem}{GFONE}
\label{thm:n-m-F1}
With $m \ge 3$ resources, mechanism $\FOne$ can be implemented in polynomial time, satisfies SI, EF, PO, and SP, has
$\AR_{\Util}(\FOne)=\AR_{\Util}(\DR)=\infty$, and
\[
\AR_{\SW}(\FOne)  = \max\left\{m-\alpha\beta-(1-\alpha),\frac{m-\alpha\beta}{1 + \frac{1-\beta}{\beta} \alpha}\right\},
\]
compared to
\[
\AR_{\SW}(\DR) = \max\left\{m-\alpha\beta-(1-\alpha), (m-\alpha\beta)(1-\alpha(1-\beta))\right\}.
\]
\end{restatable}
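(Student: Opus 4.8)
The plan is to prove Theorem~\ref{thm:n-m-F1} by handling the four claims in order: the two easy properties (polynomial time, SI/EF/PO/SP), the utilization claim, and then the main work, the social welfare \fratio formulas for $\FOne$ and DRF. For the properties, I would note that $\FOne$ is exactly $F_g$ with $g(\mathbf{v})=\mathbf{v}_1$, which is monotone, so Theorem~\ref{thm:F-family} immediately gives SI, EF, PO, SP; polynomial-time implementability follows from the discussion after Algorithm~\ref{alg:F-family} (each of the at most $n$ rounds of step~2 solves a one-dimensional equation in $\delta^*$, which is trivial here because $g$ is linear). For $\AR_{\Util}=\infty$: any SI allocation puts at least $\frac{1}{n}$ dominant share on some agent whose dominant resource is $r_j$ for a third resource $r_j\notin\{r_1,r_2\}$ — wait, more simply, with $m\ge 3$ there is a resource $r_3$ used by essentially no agent as dominant, so one can build an instance where an SI+EF allocation gets positive utilization on $r_3$ while DRF (and $\FOne$) leave $r_3$ almost empty; I would mirror the $\alpha\to 0$ construction from Lemma~\ref{lem:n-2-DRF} with a dummy third resource. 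This gives the ratio $\to\infty$ for both mechanisms.

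For the social welfare bounds, the structure mirrors the $m=2$ analysis but now parameterized by $\alpha$ (fraction outside $G_1$) and $\beta$ (average $d_{i,1}$ over agents outside $G_1$). I would first establish a clean description of both mechanisms' outcomes: after step~1 every agent has $\frac{1}{n}$ dominant share, consuming $\frac{n_1}{n}+\frac{\alpha\beta \cdot n}{n}\cdot$(suitably) $=1-\alpha+\alpha\beta$ of resource~$r_1$ in aggregate across agents whose dominant-or-not use of $r_1$ matters — so resource~$r_1$ has $\alpha(1-\beta)$ units remaining after step~1 consumed by the $G_1$ agents' dominant shares plus the non-$G_1$ agents' $r_1$-shares; here I need to be careful that for $r\neq r_1,r_2$ nothing binds, so the binding resource is $r_1$ or $r_2$. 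For the \emph{upper bound} on $\AR_{\SW}(\FOne)$: split on which resource is exhausted. If $r_1$ is used up, then as in Case~1 of Lemma~\ref{lem:F1-upper}, $\FOne$ forces every agent to exactly $\frac{1}{n}$ of $r_1$, and by the same EF argument this is SW-optimal among SI+EF allocations, giving ratio~$1$ — but wait, with $m\ge3$ the other group structure is richer, so the clean "ratio 1" may degrade; I expect instead a ratio like $m-\alpha\beta-(1-\alpha)$ coming from the fact that an SI+EF optimum can push one $G_1$-agent's dominant share up to nearly $1$ while the mechanism's SW is near $n_1\cdot\frac1n + \alpha = (1-\alpha)+\alpha$, hmm — I will compute $\SW(\FOne)\ge 1$ always and compare against the SI+EF optimum whose SW is at most $1 + (\text{one agent's extra dominant share})$ bounded using the remaining-resource budget $\alpha(1-\beta)$ on $r_1$ distributed to minimize wasted $r_2$. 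If $r_2$ is used up in $\FOne$, the room for improvement on $r_1$ is at most $\alpha(1-\beta)$-ish, and I'd bound the SW gain by translating that $r_1$-slack into dominant-share gain for a single best $G_1$-agent (demand $d_{i,1}=1$, tiny $d_{i,2}$), yielding the $\frac{m-\alpha\beta}{1+\frac{1-\beta}{\beta}\alpha}$ branch after optimizing over how the slack is split. For the \emph{lower bounds}, I would build two families of worst-case instances, one activating each branch of the max: in the first, give $G_1$-agents demand $(1,\varepsilon,\ldots)$ and one special non-$G_1$ agent able to absorb almost all remaining resources; in the second, calibrate $\beta$ so that the ratio $\frac{\Delta S_1}{\Delta S_2}$-type tradeoff is tight.

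For $\AR_{\SW}(\DR)$: DRF equalizes dominant shares at $x^*=1/\max_r\sum_i d_{ir}$, which for the relevant instances equals $1/(n(1-\alpha)+n\alpha\beta)=\frac{1}{n(1-\alpha(1-\beta))}$ when $r_1$ is the bottleneck, so $\SW(\DR)=n x^* = \frac{1}{1-\alpha(1-\beta)}$. Comparing to the SI+EF optimum (same optimum as for $\FOne$, since the benchmark doesn't depend on the mechanism) gives the extra factor $(1-\alpha(1-\beta))$ relative to the $\FOne$ second branch; the first branch $m-\alpha\beta-(1-\alpha)$ is common to both because it corresponds to an instance where $r_2$ (not $r_1$) is the bottleneck and neither mechanism can do better than the trivial bound. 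I would phrase all of these as: (i) upper bound $\SW(\mathbf{A}^*)$ among SI+EF allocations by a closed form in $\alpha,\beta,m$ using a resource-budget argument (this is the reusable lemma), (ii) lower bound $\SW(f(\mathbf{I}))$ for $f\in\{\FOne,\DR\}$ on each candidate worst-case instance, (iii) take maxima.

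\textbf{Main obstacle.} The hard part will be step~(i): correctly characterizing the SW-maximizing SI+EF allocation when $m\ge 3$ and pinning down its value as the stated closed form — in particular verifying that concentrating slack on a single cleverly-chosen $G_1$-agent is optimal, and that the EF constraint (not just SI) is what caps the gain, so that the benchmark really is $\max\{m-\alpha\beta-(1-\alpha),\ (m-\alpha\beta)(1-\alpha(1-\beta))^{-1}\cdot(1-\alpha(1-\beta))\}$ consistently for both mechanisms. A secondary subtlety is matching the parameterized lower-bound instances to the two branches of the $\max$ exactly, including the $\varepsilon\to 0$, $n\to\infty$ limits, and checking EF for those instances as was done delicately in Lemmas~\ref{lem:n-2-DRF} and~\ref{lem:F1-lower}.
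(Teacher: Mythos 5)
The easy parts of your plan (reducing the properties and running time to Theorem~\ref{thm:F-family} with $g(\mathbf{v})=\mathbf{v}_1$) match the paper, but the core of the theorem is not correctly set up. For the social-welfare bounds the paper splits on whether some resource \emph{other than} $r_1$ is exhausted by $\FOne$ versus only $r_1$, and your sketch assigns the two branches to the wrong cases and misses the arguments that actually produce them. When only resource~1 is used up, the mechanism's SW is not ``near $1$'': the key step is that, because $\FOne$ always feeds the agents outside $G_1$ with the smallest resource-1 fraction, their total dominant share per unit of resource~1 consumed is at least $1/\beta$, so $\SW(\FOne)\ge 1-\alpha+\frac{\alpha}{\beta}$, and combining with the SI-only benchmark bound $m-\alpha\beta$ gives the branch $\frac{m-\alpha\beta}{1+\frac{1-\beta}{\beta}\alpha}$; this averaging/monotonicity argument is absent from your plan. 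When another resource (say $r_2$) is exhausted, the benchmark's advantage does not come from ``translating $r_1$-slack into a single $G_1$-agent's gain'': it comes from the $m-2$ groups $G_3,\dots,G_m$, whose members an SI+EF allocation can push to dominant share close to $1$ while $\FOne$ leaves them at $\frac{1}{n}$ — that is why this branch equals $m-\alpha\beta-(1-\alpha)$ and grows with $m$. Relatedly, your remark that ``for $r\neq r_1,r_2$ nothing binds'' is a two-resource artifact and is false for $m\ge 3$, where $G_3,\dots,G_m$ are in general nonempty. Finally, your ``main obstacle'' aims at a closed-form characterization of the SW-maximizing SI+EF allocation; this is unnecessary and not what the closed forms mean. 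The paper only upper-bounds the benchmark using SI (agents in $G_1$ must consume their step-1 amount of the exhausted resource, agents outside $G_1$ must consume $\alpha\beta$ of resource~1) and then matches with explicit instances, for which alone EF has to be checked.

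The utilization claim is also not salvageable as sketched. A ``dummy'' third resource that no agent has as dominant cannot drive $\AR_{\Util}$ to $\infty$: in any non-wasteful allocation its usage is $\sum_i s_i d_{i,3}$ for the dominant shares $s_i$, so if all agents demand it at comparable rates the ratio of its utilizations is controlled by the (bounded) ratio of total dominant shares. Moreover, mirroring the Lemma~\ref{lem:n-2-DRF} instance does not hurt $\FOne$: the generalized $\FOne$ favors exactly the non-$G_1$ agent with the smallest $d_{i,1}$, i.e.\ the special agent of that construction, so $\FOne$ performs well there. The paper instead builds an instance (the same one as its first SW case) containing, for each $k\in\{3,\dots,m\}$, an agent whose dominant resource is $k$ with negligible demands elsewhere; resource~2 is exhausted almost immediately after step~1, so $\FOne$ and DRF leave resources $3,\dots,m$ at usage $O(\frac{1}{n})$, while a feasible SI+EF allocation gives each of those agents dominant share $1-\frac{1}{n}$ and uses every resource to $1-o(1)$, which is what yields $\AR_{\Util}(\FOne)=\AR_{\Util}(\DR)=\infty$ for fixed $\alpha,\beta$.
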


\begin{proof}
Since $\FOne$ is equivalent to the mechanism from the family $\mathcal{F}$ with monotone function $g(\mathbf{v})=\mathbf{v}_1$, according to Theorem \ref{thm:F-family}, we have that $\FOne$ satisfies SI, EF, PO, and SP and can be implemented in polynomial time. 
Then it suffices to show the \fratio{s}.
Let $\mathbf{A}$ be the allocation under $\FOne$.
We differentiate two cases according to whether there exists a resource other than resource~1 that is used up in $\mathbf{A}$.

We first consider the case when there is a resource other than resource~1 that is used up in $\mathbf{A}$. 
Assume this resource is resource~2.
Denote $x=\sum_{i \in G_1} A_{i,2}=\sum_{i \in G_1} \frac{d_{i,2}}{n}$. Note that $\sum_{i \in G_1} A_{i,1}=\sum_{i \in G_1} \frac{1}{n}=1-\alpha$ and $x \le \sum_{i \in G_1} A_{i,1}=1-\alpha$.
Then
\begin{align*}
\SW(A)& =\sum_{i \in G_1} A_{i,1}+\sum_{j \in \{2,\dots,m\}} \sum_{i \in G_j} A_{i,j} \\
& \ge 1-\alpha + \sum_{j \in \{2,\dots,m\}} \sum_{i \in G_j} A_{i,2} \\
& =1-\alpha+1-x.
\end{align*}
On the other hand, in any allocation satisfying SI, agents in $G_1$ receive at least $x$ of resource 2 and agents outside $G_1$ receive at least $\alpha\beta$ of resource 1, so the SW of the optimal allocation satisfying SI is upper bounded by $m-x-\alpha\beta$, and we have
\[
\begin{aligned}
\AR_{\SW}(\FOne) & \le \frac{m-x-\alpha\beta}{1-\alpha+1-x}\\
& \le \frac{m-(1-\alpha)-\alpha\beta}{1-\alpha+1-(1-\alpha)}\\
& =m-(1-\alpha)-\alpha\beta,
\end{aligned}
\]
where the second inequality holds since $x \le 1 - \alpha$ and $m-\alpha\beta \ge 2 -\alpha\beta \ge 2-\alpha$.

To show the corresponding lower bound~$\AR_{\SW}(\FOne)\ge m-(1-\alpha)-\alpha\beta$, consider the following instance. We set one parameter $\varepsilon$ that is close to $0$.
Group~$G_1$ consists of $n(1-\alpha)$ agents, where all agents have the demand vector $(1,1-\varepsilon,\varepsilon,\dots,\varepsilon)$ except one special agent $i_1^*$ whose demand vector is $(1,\varepsilon,\varepsilon,\dots,\varepsilon)$.
Group~$G_2$ consists of $n\alpha-(m-2)$ agents, where all agents have the demand vector $(\beta^{\prime},1,\varepsilon,\dots,\varepsilon)$ except one special agent $i_2^*$ who has the demand vector $(\varepsilon^2,1,\varepsilon,\dots,\varepsilon)$.
Each of the remaining $m-2$ agents has a different dominant resource for the remaining $m-2$ resources and they demand $\varepsilon$ for all non-dominant resources.
Here, we set $\varepsilon=\frac{1}{n^2}$ and $\beta^{\prime}$ such that the average demand of agents not in $G_1$ for resource~1 is $\beta$, i.e., $\beta^{\prime}(1-\frac{m-1}{n\alpha})+\frac{\varepsilon^2+(m-2)\varepsilon}{n\alpha}=\beta$. 
Without loss of generality, we assume $\varepsilon<\frac{\beta^{\prime}}{n}$, which can be reached by setting $n$ large enough.
In step $2$ of $\FOne$, we increase the allocation of the special agent $i_2^*$ in $G_2$ till resource $2$ is used up. 
Since there is less than $\varepsilon(1-\alpha)+\frac{m-2}{n}$ of resource $2$ after step $1$, the SW under $\FOne$ is upper bounded by
\[
1+\frac{1-\alpha}{n^2}+\frac{m-2}{n} \overset{n \to \infty}{\longrightarrow} 1.
\]
On the other hand, we can build an allocation $A^*$ satisfying SI and EF as follows.
For agents not in $G_1 \cup G_2$, each gets $1-\frac{1}{n}$ dominant share.
Each agent in $G_2$ and $G_1 \setminus \{i_1^*\}$ gets $\frac{1}{n}$ dominant share.
Let us consider the remaining resource, which will be given to the special agent $i_1^*$ in $G_1$.
The remaining amount of resource 1 is at least $\alpha-\alpha\beta-\varepsilon(m-2)$, since only agents not in $G_1 \cup G_2$ get more than $\frac{1}{n}$ dominant share and they get at most $\varepsilon(m-2)$ of resource 1.
The remaining amount of resource 2 is at least $\frac{1}{n} \ge \varepsilon$ as no agent receives more than $\frac{1}{n}$ of resource 2.
The remaining amount of resource $k$ for $k \in [3,m]$ is at least $\frac{1}{n}-\frac{m-2}{n^2}-(n-1-(m-2))\frac{\varepsilon}{n} \ge \frac{1}{n}-\frac{m-1}{n^2} \ge \varepsilon$.
Since the demand vector of agent $i_1^*$ is $(1,\varepsilon,\varepsilon,\dots,\varepsilon)$, the bottleneck resource is resource 1 and agent $i_1^*$ can get at least $\alpha-\alpha\beta-\varepsilon(m-2)$ dominant share from the remaining resource.
Then the SW of this allocation is at least
\[
\begin{aligned}
& \alpha-\alpha\beta-\varepsilon(m-2)+\frac{n-1-(m-2)}{n}+(m-2)\frac{n-1}{n} \\
=& m-(1-\alpha)-\alpha\beta -(\frac{m-2}{n^2}+\frac{2m-3}{n}),
\end{aligned}
\]
which approaches to $m-(1-\alpha)-\alpha\beta$ when $n \to \infty$.
So $\AR_{\SW}(\FOne) \ge m-(1-\alpha)-\alpha\beta$.

The above instance also shows that $\AR_{\Util}(\FOne)=\infty$.
Indeed, under $\FOne$ the usage of resource $k$ for any $k \in [3,m]$ is at most $\frac{1}{n}+n\varepsilon \le \frac{2}{n}$.
In $A^*$, resource 1 is used up; resource 2 is used at least $\frac{(n-1-(m-2))(1-\varepsilon)}{n} \ge 1-\varepsilon-\frac{m}{n}$; resource $k$ for any $k \in [3,m]$ is used at least $1-\frac{1}{n}$.
So 
\[
\AR_{\Util}(\FOne) \ge \frac{1-\varepsilon-\frac{m}{n}}{\frac{2}{n}} \overset{n \to \infty}{\longrightarrow} \infty.
\]

For the second case when only resource~1 is used up,
since $\FOne$ always increases the allocations of agents with the smallest fraction of resource~1, we have that
\[
\begin{aligned}
\frac{\sum_{j \in \{2,\dots,m\}} \sum_{i \in G_j} A_{i,j}}{\sum_{j \in \{2,\dots,m\}} \sum_{i \in G_j} A_{i,1}} &\ge \frac{\sum_{j \in \{2,\dots,m\}} \sum_{i \in G_j} d_{i,j}}{\sum_{j \in \{2,\dots,m\}} \sum_{i \in G_j} d_{i,1}} \\
&=\frac{n\alpha}{n\alpha\beta}\\
&=\frac{1}{\beta}.
\end{aligned}
\]
Since resource 1 is used up, we have $\sum_{j \in \{2,\dots,m\}} \sum_{i \in G_j} A_{i,1}=\alpha$, 
then 
\[\sum_{j \in \{2,\dots,m\}} \sum_{i \in G_j} A_{i,j} \ge \frac{\alpha}{\beta}.\]
We can bound SW as
\begin{align*}
\SW(A)& =\sum_{i \in G_1} A_{i,1}+\sum_{j \in \{2,\dots,m\}} \sum_{i \in G_j} A_{i,j} \\
& \ge 1-\alpha + \frac{\alpha}{\beta}.
\end{align*}
On the other hand, in any allocation satisfying SI, agents not in $G_1$ receive at least $\alpha\beta$ of resource 1, so the SW of the optimal allocation satisfying SI is upper bounded by $m-\alpha\beta$, and we have
\[
\AR_{\SW}(\FOne) \le \frac{m-\alpha\beta}{ 1-\alpha + \frac{\alpha}{\beta}}=\frac{m-\alpha\beta}{1 + \frac{1-\beta}{\beta} \alpha}.
\]

To show the corresponding lower bound~$\AR_{\SW}(\FOne) \ge \frac{m-\alpha\beta}{1 + \frac{1-\beta}{\beta} \alpha}$, consider the following instance.
In $G_1$, all $n\alpha$ agents have the same demand vector $(1,\varepsilon,\dots,\varepsilon)$, where $\varepsilon$ is close to~0.
For each resource $j \in [2,m]$, $G_j$ consists of $\frac{n\alpha}{m-1}$ agents, where one agent $i^*_j$ has a special demand vector $(\frac{\beta^{\prime}}{\sqrt{n}},\varepsilon^2,\dots,\varepsilon^2,1,\varepsilon^2,\dots,\varepsilon^2)$, and the remaining agents have the same demand vector $(\beta^{\prime},\varepsilon,\dots,\varepsilon,1,\varepsilon,\dots,\varepsilon)$. Here, we set $\varepsilon=\frac{1}{n}$ and $\beta^{\prime}$ such that $\frac{m-1}{n\alpha}(\frac{\beta^{\prime}}{\sqrt{n}}+\beta^{\prime}(\frac{n\alpha}{m-1}-1))=\beta$.
The idea is that in the SW-maximizing allocation $A^*$ satisfying SI and EF, for each $j \in [2,m]$, agent $i^*_j$ should get almost all remaining resource~$j$ after ensuring the SI condition
and get the first resource close to $\frac{\beta^{\prime}}{\sqrt{n}}$, which will be larger than that ($\frac{\beta^{\prime}}{n}$) for other agents in $G_j$ who get $\frac{1}{n}$ dominant share.
Notice that this does not violate EF since agent $i^*_j$ receives at most $\varepsilon^2$ of resource $k$ for $k \neq 1,j$ while other agents in $G_j$ receive at least $\frac{\varepsilon}{n}=\varepsilon^2$ of resource $k$.
Then, all remaining resource~1 is assigned to $G_1$, which leads to an allocation with $\SW$ close to $m-\alpha\beta$.
However, under $\FOne$, we will equalize the amount of resource 1 received by all agents in $G_j$ for all $j \in [2,m]$, and when resource 1 is used up, all these agents get at most $\frac{m-1}{\sqrt{n}\beta^{\prime}}+\frac{\alpha}{\beta}$ dominant share totally.
Since agents in $G_1$ have dominant share $1-\alpha$, we have
\[
\AR_{\SW}(\FOne) \ge \frac{m-\alpha\beta}{1-\alpha+\frac{m-1}{\sqrt{n}\beta^{\prime}}+\frac{\alpha}{\beta}} \overset{n \to \infty}{\longrightarrow} \frac{m-\alpha\beta}{1 + \frac{1-\beta}{\beta} \alpha}.
\]

Combining the two cases, we have $\AR_{\SW}(\FOne) = \max\{m-(1-\alpha)-\alpha\beta,\frac{m-\alpha\beta}{1 + \frac{1-\beta}{\beta} \alpha}\}$.
Similarly, we can get the \fratio for DRF.
The lower bounds can be proved using the same instances for $\FOne$.
For upper bounds, the only difference is in the second case when only resource~1 is used up, where we need to upper bound the SW under DRF by $\frac{1}{1-\alpha+\alpha\beta}=\frac{1}{1-\alpha(1-\beta)}$,
so $\AR_{\SW}(\DR) \le (m-\alpha\beta)(1-\alpha(1-\beta))$.
\end{proof}

In particular, one can show that for any $0 < \alpha, \beta < 1$, $1-\alpha(1-\beta) > \frac{1}{1 + \frac{1-\beta}{\beta} \alpha}$. This means $\AR_{\SW}(\FOne)$ is always weakly better than $\AR_{\SW}(\DR)$.
Figure \ref{fig:mresbound} visualizes the difference between $\AR_{\SW}(\FOne)$ and $\AR_{\SW}(\DR)$ with $m=3,4$, together with the ranges of $\alpha$ and $\beta$ for the following three scenarios about choices in the formula of $\AR_{\SW}(\FOne)$ and $\AR_{\SW}(\DR)$:
\begin{enumerate}
\item Both maximums achieve with the first term, where $\AR_{\SW}(\FOne)=\AR_{\SW}(\DR)$.
\item Both maximums achieve with the second term, where $\AR_{\SW}(\FOne)<\AR_{\SW}(\DR)$.
\item $\AR_{\SW}(\FOne)$ achieves the first term and $\AR_{\SW}(\DR)$ achieves the second, where $\AR_{\SW}(\FOne)<\AR_{\SW}(\DR)$ and the difference is larger than that in case~2.
\end{enumerate}

\subsection{Experimental evaluation}

\begin{figure*}
    \centering
    \includegraphics[width=14cm]{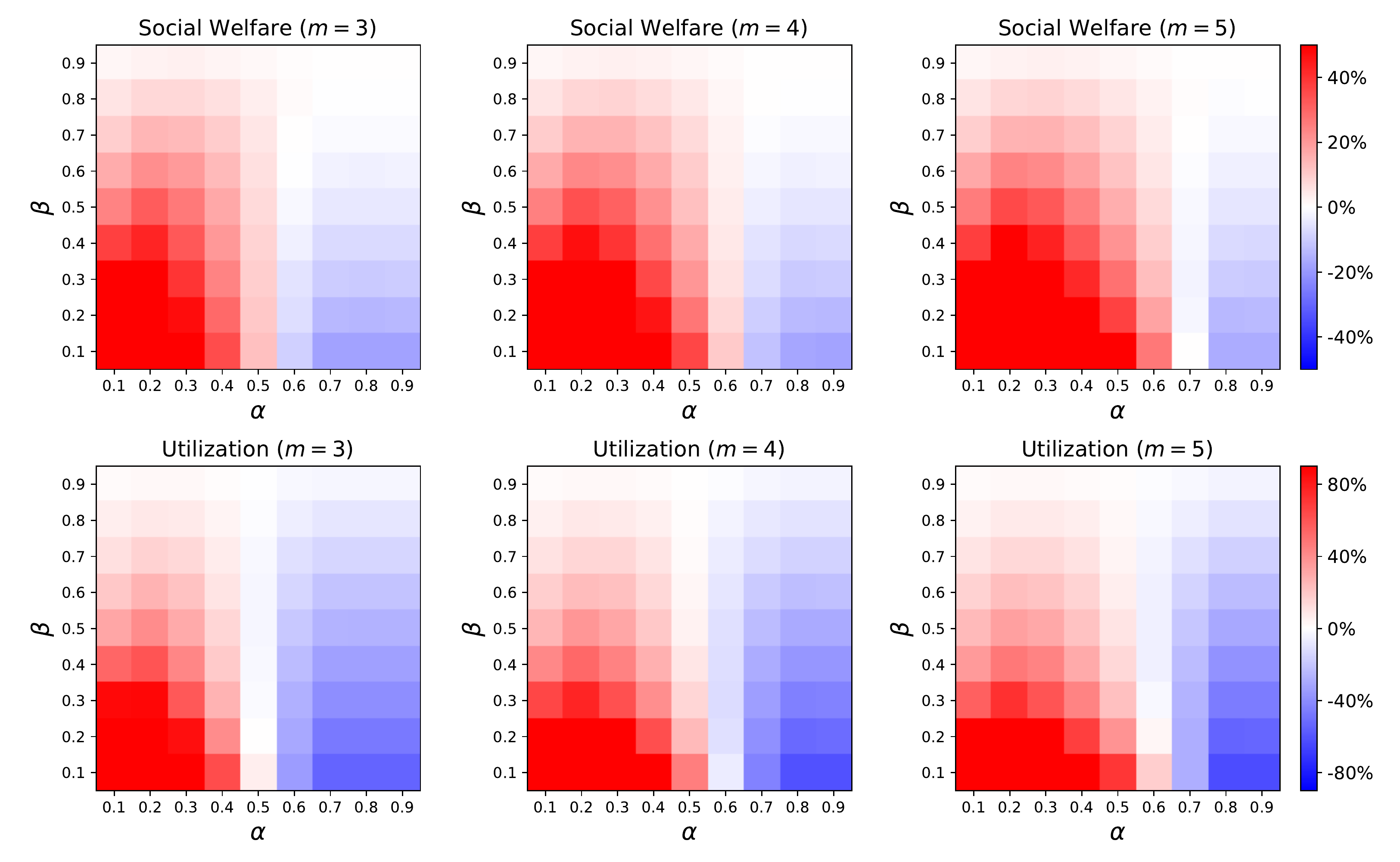}
    \caption{Improvement of $\FOne$ over DRF for $m=3,4,5$ and different values of $\alpha,\beta \in \{0.1,0.2,\dots,0.9\}$.}
    \label{fig:compare-F1G}
\end{figure*}

We compare $\FOne$ and DRF on random instances for $m=3,4,5$ and different values of $\alpha,\beta \in \{0.1,0.2,\dots,0.9\}$.
The instances are generated similarly as in Section \ref{sec:exp-m=2} with $n=100$ agents.
To control the value of $\alpha$, we choose $n(1-\alpha)$ agents and set their dominant resource as resource $1$ ($d_{i,1}=1$).
The dominant resource for the remaining $n\alpha$ agents are randomly chosen from the remaining resources.
To control the value of $\beta$, the demand for non-dominant resource of all agents are chosen from the distribution $\chi=(1-\beta)\mathcal{U}_1+\beta\mathcal{U}_2$, where $\mathcal{U}_1$ is the uniform distribution over $G \cap [0,\beta]$, $\mathcal{U}_2$ is the uniform distribution over $G \cap (\beta,1]$, where $G=\{0.01,0.02,\dots,1.00\}$.
For each configuration, we randomly generate 1000 problem instances and take the average of the results.

Figure \ref{fig:compare-F1G} shows the comparison of $\FOne$ and DRF.
For a wide range of $\alpha$ and $\beta$, $\FOne$ outperforms DRF in terms of both SW and utilization.
DRF has a better performance when $\alpha$ is close to 1, i.e., when the size of $G_1$ is small.
Notice that in practice we can choose the largest group as $G_1$ such that $\alpha < 1-\frac{1}{m}$.
In Figure \ref{fig:compare-F1G}, to give a fair visualization for positive and negative data, we do not differentiate the cases when $\FOne$ improves DRF by more than $40\%$ (resp. $80\%$) for social welfare (resp. utilization).
In fact, for social welfare, $\FOne$ increases that of DRF by $40\%$ to $100\%$ when $\alpha,\beta \le 0.3$, while in the worst case $\FOne$ is at most $20\%$ worse than DRF.
The difference in terms of utilization is even larger: $\FOne$ can increase that of DRF by more than $200\%$ in the best case. But in the worst case $\FOne$ is at most $70\%$ worse than DRF.

From these experiment results, whether $\FOne$ could outperform DRF in terms of social welfare and utilization depends on the structure of the problem instance. In particular, when $\alpha$ and $\beta$ are small, replacing DRF with $\FOne$ can significantly increase the social welfare and utilization of the solution.

\section{Price of Strategyproofness} \label{sec:PoSP}
At last, we investigate the efficiency loss caused by incentive constraints. In~\cite{Parkes2015}, it is shown that any mechanism that satisfies one of SI, EF and SP can only guarantee at most $1/m$ of the social welfare. However, the benchmark used in~\cite{Parkes2015} is the optimal social welfare among all allocations.
Recall that the \fratio defined in this paper is benchmarked against the best social welfare (resp. utilization) among all allocations that satisfy SI and EF. In other words, we are working entirely in the domain of fair allocations.
Moreover, note that any optimal allocation satisfying SI and EF must also satisfy PO since in our problem PO means at least one resource is used up.
Therefore, the lower bound of the \fratio characterizes the efficiency loss caused by strategyproofness.
Accordingly, we can define the price of strategyproofness as follows.

\begin{definition}
The \emph{Price of Strategyproofness (Price of SP)} for social welfare (resp. utilization) is defined as the best \fratio for social welfare (resp. utilization) among all mechanisms which satisfy SI, EF, PO and SP, i.e.,
\[
\textrm{Price of SP} = \min_{f \textrm{ is SI,EF,PO,SP}} \AR_{\SW}(f).
\]
\end{definition}





We now study the price of SP and start with the case with two resources.
Recall that $\AR_{\SW}(\FOne)$ is increasing with $\alpha$ while $\AR_{\SW}(\FTwos)$ is decreasing with $\alpha$ (see Figure \ref{fig:compare-ratio}).
By choosing the better one from $\FOne$ and $\FTwos$ for each value of $\alpha$, we get a new mechanism with a better \fratio than both $\FOne$ and $\FTwos$ and show that the price of SP is at most $3-\sqrt{3}$ for SW and at most  $\frac{3}{2}$ for utilization.

\begin{restatable}{theorem}{TWORSP}
\label{thm:n-2-SP}
With $2$ resources and $n\ge2$ agents, the price of SP is at most $3-\sqrt{3}+\frac{1}{2n} \overset{n \to \infty}{\longrightarrow}  3-\sqrt{3}$ for SW and at most $\frac{3}{2-\frac{1}{n}} \overset{n \to \infty}{\longrightarrow} \frac{3}{2}$ for utilization.
\end{restatable}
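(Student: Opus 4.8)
The plan is to build, for each $n$, a mechanism that combines $\FOne$ and $\FTwos$ by switching between them according to the \mpr $\alpha$ of the input, and to choose the switching threshold so that the resulting mechanism keeps all four properties while attaining the stated \fratio. Note that the Price of SP for social welfare and the Price of SP for utilization are two separate minimizations, so we are allowed to use two different thresholds, i.e.\ two different mechanisms, one for each bound. Concretely, for a threshold $\theta\in(0,\tfrac12]$ let $f_\theta$ be the mechanism that, on an instance whose \mpr is $\alpha$, runs $\FOne$ if $\alpha\le\theta$ and runs $\FTwos$ if $\alpha>\theta$. Both branches run in polynomial time, so $f_\theta$ does too.

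\textbf{Properties.} First I would check that $f_\theta$ inherits SI, EF, and PO: these are conditions on the output allocation on every instance, and each of $\FOne$ (Lemma~\ref{lem:F1-4-properties}) and $\FTwos$ (Lemma~\ref{lem:F2s-SI,EF,PO}) guarantees them on every instance. The delicate property is SP. Since an agent can only change its own dominant resource, it can change the selected branch only by changing its own group, which splits the analysis into two cases. (i) A misreport that keeps the agent in its own group leaves $\alpha$ and hence the selected branch unchanged, so the truthful and manipulated allocations are both produced by the same one of $\FOne$, $\FTwos$, and SP follows from the SP of that branch (Lemmas~\ref{lem:F1-4-properties} and \ref{lem:F2s-SP}). (ii) A misreport that moves the agent to the other group may flip which branch runs, and in the boundary case may relabel which resource is $r_1$; here I would invoke the structural fact, already established inside the proofs of $\FOne$ and $\FTwos$, that under either mechanism and on any instance every agent receives at most $\tfrac1n$ of the resource it did not report as dominant. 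Hence after such a misreport the agent receives at most $\tfrac1n$ of its true dominant resource and so true utility at most $\tfrac1n$, which is no better than the $\ge\tfrac1n$ it obtains by truthful reporting (SI). Thus $f_\theta$ is SP.

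\textbf{The \fratio bounds.} For social welfare, $\AR_{\SW}(\FOne)=1+\alpha$ (Theorem~\ref{thm:n-2-F1}) is increasing in $\alpha$, while $\AR_{\SW}(\FTwos)\le\frac{4-2\alpha}{3-\alpha-1/n}$ (Theorem~\ref{thm:n-2-F2s}) is decreasing in $\alpha$, as a one-line derivative check using $1/n<1$ shows. Therefore $\AR_{\SW}(f_\theta)\le\max\{\,1+\theta,\ \tfrac{4-2\theta}{3-\theta-1/n}\,\}$, and taking $\theta=\theta^*(n)$ to be the smaller root of $\theta^2-(4-\tfrac1n)\theta+(1+\tfrac1n)=0$, namely
\[
\theta^*(n)=2-\frac{1}{2n}-\frac12\sqrt{12-\frac{12}{n}+\frac{1}{n^2}}\,,
\]
equalizes the two terms, so $\AR_{\SW}(f_{\theta^*})\le 1+\theta^*(n)$. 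A short estimate — squaring the inequality $2\sqrt3-\tfrac2n\le\sqrt{12-12/n+1/n^2}$, valid for $n\ge2$ — then yields $1+\theta^*(n)\le 3-\sqrt3+\tfrac1{2n}$, and one checks $\theta^*(n)\le\tfrac12$, so $f_{\theta^*}$ is well defined. For utilization the computation is cleaner: $\AR_{\Util}(\FOne)=\tfrac1{1-\alpha}$ is increasing and $\AR_{\Util}(\FTwos)\le\tfrac{2}{1+\alpha-1/n}$ is decreasing, and with $\theta'=\tfrac13+\tfrac1{3n}$ the two quantities $\tfrac1{1-\theta'}$ and $\tfrac2{1+\theta'-1/n}$ coincide with the exact value $\tfrac{3}{2-1/n}$, giving $\AR_{\Util}(f_{\theta'})\le\tfrac{3}{2-1/n}$.

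\textbf{Main obstacle.} Everything apart from case (ii) of the SP argument is bookkeeping plus the two elementary monotonicity and root computations above; the point that needs real care is that the ``at most $\tfrac1n$ of the non-dominant resource'' property is genuinely robust to a group switch, including the boundary situation where the switch pushes $\alpha$ to $\tfrac12$ and forces a relabelling of the two resources.
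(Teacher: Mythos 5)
Your proposal is correct and follows essentially the same route as the paper: a hybrid mechanism that runs $\FOne$ or $\FTwos$ depending on whether $\alpha$ is below a threshold, with SP secured by the same observation that a group-switching misreport yields at most $\frac{1}{n}$ of the agent's true dominant resource while truthfulness guarantees at least $\frac{1}{n}$, and the bounds obtained from the monotone per-$\alpha$ \fratio{s} of the two branches (the paper uses the utilization threshold $\frac{1}{3}+\frac{1}{3n}$ exactly as you do, and the explicit SW threshold $2-\sqrt{3}+\frac{1}{2n}$ instead of your exact equalization root $\theta^*(n)$, an immaterial difference). Your auxiliary estimates check out, including $\theta^*(n)\le 2-\sqrt{3}+\frac{1}{2n}$, which reduces to $8\sqrt{3}-12\ge\frac{3}{n}$ and holds for all $n\ge 2$.
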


\begin{proof}
We first study SW.
We can combine $\FOne$ and $\FTwos$ to get a hybrid mechanism that satisfies SI, EF, PO, and SP, and has \fratio for SW at most $3-\sqrt{3}+\frac{1}{2n}$.
For any input instance, the hybrid mechanism applies one of $\FOne$ and $\FTwos$ according to the value of $\alpha$.
If $\alpha \le 2-\sqrt{3}+\frac{1}{2n}$, then $\FOne$ is applied; otherwise, $\FTwos$ is applied.
SI, EF, and PO are clearly satisfied by this hybrid mechanism.
We show that SP is also satisfied.
Note that for both $\FOne$ and $\FTwos$, if an agent reports the true demand vector, its utility is at least $\frac{1}{n}$, while if it reports a false demand vector such that it is partitioned into another group (consequently, $\alpha$ is changed), then its utility is at most $\frac{1}{n}$ as it can receive at most $\frac{1}{n}$ of its true dominant resource.
So agents can never benefit by switching a group and changing the value of $\alpha$.
Then, since both $\FOne$ and $\FTwos$ are SP, this hybrid mechanism is also SP.
From Theorem \ref{thm:n-2-F1} and \ref{thm:n-2-F2s} we have that
$\AR_{\SW}(\FOne)=1+\alpha$ and $\AR_{\SW}(\FTwos) \le \frac{2-\alpha}{1+\frac{1-\alpha}{2}-\frac{1}{2n}}$.
So the \fratio for SW of the hybrid mechanism is
\[
\max\{ \max_{0 \le \alpha \le 2-\sqrt{3}+\frac{1}{2n}} 1+\alpha, \max_{ 2-\sqrt{3}+\frac{1}{2n} < \alpha \le 1} \frac{2-\alpha}{1+\frac{1-\alpha}{2}}\} = 3-\sqrt{3}+\frac{1}{2n},
\]
which implies that the price of SP is at most $3-\sqrt{3}+\frac{1}{2n}$ for SW.

As for utilization, we can also combine $\FOne$ and $\FTwos$ to get a hybrid mechanism, but with a different switch point.
When $\alpha \le \frac{1}{3}+\frac{1}{3n}$, the hybrid mechanism applies $\FOne$; otherwise, it applies $\FTwos$.
Clearly all four properties are satisfied by this hybrid mechanism.
From Theorem \ref{thm:n-2-F1} and \ref{thm:n-2-F2s} we have that 
$\AR_{\Util}(\FOne)=\frac{1}{1-\alpha}$ and $\AR_{\Util}(\FTwos) \le \frac{1}{1-\frac{1-\alpha}{2}-\frac{1}{2n}}$.
So the \fratio for utilization of the hybrid mechanism is
\[
\max\{ \max_{0 \le \alpha \le \frac{1}{3}+\frac{1}{3n}} \frac{1}{1-\alpha}, \max_{ \frac{1}{3}+\frac{1}{3n} < \alpha \le 1} \frac{1}{1-\frac{1-\alpha}{2}}\} = \frac{3}{2-\frac{1}{2n}},
\]
which implies that the price of SP is at most $\frac{3}{2-\frac{1}{2n}}$ for utilization.
\end{proof}

We then show that for general $m$ except one special case, all mechanisms satisfying SI, EF, PO, and SP have the same \fratio.

\begin{restatable}{theorem}{LBOUND}
\label{thm:lower-bound}
For social welfare, the price of SP is $m$ when $m \geq 4$ and between 2 and 3 when $m=3$.
For utilization, the price of SP is $\infty$ for any $m \geq 3$.
\end{restatable}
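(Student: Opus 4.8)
The plan is to treat the two directions separately: the upper bounds are witnessed by DRF, while every lower bound requires, for an \emph{arbitrary} mechanism satisfying the four axioms, a hard instance tailored to it.

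For the upper bounds, since DRF satisfies SI, EF, PO and SP (as recalled above), it is enough to bound $\AR_{\SW}(\DR)$. I would first note that any non‑wasteful allocation $\mathbf A$ has $\SW(\mathbf A)=\sum_{i\in N}A_{i r_i^*}=\sum_{r\in R}\sum_{i\,:\,r_i^*=r}A_{ir}\le\sum_{r\in R}\sum_{i\in N}A_{ir}\le m$, and that WLOG the SW‑maximizer among SI+EF allocations is non‑wasteful; meanwhile DRF, being SI, always attains $\SW\ge n\cdot\tfrac1n=1$. Hence $\AR_{\SW}(\DR)\le m$, which already gives ``price of SP $\le m$'' for every $m$ and in particular ``$\le 3$'' when $m=3$; for utilization the claimed upper bound $\infty$ is vacuous. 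So all the content lies in the matching lower bounds.

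For the lower bounds, fix an arbitrary mechanism $f$ satisfying SI, EF, PO and SP. I would build, for a small parameter $\varepsilon$ and large $n$, an instance $\mathbf I$ of the ``unbalanced'' type from the proof of Theorem~\ref{thm:n-m-F1}: a large group $G_1$ of $\approx n$ ``filler'' agents with demand essentially $\mathbf e_1$, and, for each resource $r_j$ with $j\ge 2$, a small group $G_j$ of agents dominant on $r_j$, one of which — the ``key'' agent of $G_j$ — has demand of \emph{moderate} size on $r_1$ (so that $r_1$ is the scarce resource pinning fillers to dominant share $\approx\tfrac1n$) but \emph{negligible} on every resource other than $r_1$ and $r_j$. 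The negligible entries make even a near‑unit bundle of a key agent worth only $\approx\tfrac1n$ to a filler, so an SI+EF allocation may award every key agent dominant share $\approx1$ without violating EF — this is exactly where $m\ge3$ is used and why there is no analogue for $m=2$. Such an allocation has social welfare $\to m$ as the minority ratio $\alpha\to0$ and uses every resource almost fully, so it serves as the numerator for both objectives; the construction also forces $f$ to leave $r_2,\dots,r_m$ with $o(1)$ utilization, which already yields $\AR_{\Util}(f)=\infty$ for all $m\ge 3$ once the welfare bound below is in place. The remaining and harder task is to show $\SW(f(\mathbf I))=1+o(1)$. For this I would use strategyproofness: for each $j$, consider the ``honest'' instance in which the key agent of $G_j$ instead reports the ordinary (non‑skewed) demand of its group; SP forces that agent's true utility in $\mathbf I$ from its reported bundle — which, evaluated against the ordinary demand, is discounted on the negligible coordinates — to be at most its dominant share in the honest instance, and the latter is $O(1/n)$ by EF against the fillers together with feasibility of $r_1$. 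The subtle point is to choose the two demand scales of the key agents, and the honest comparison instances, so that these estimates, combined with the EF inequality used for the numerator, squeeze $\SW(f(\mathbf I))$ down to $1+o(1)$; this gives $\AR_{\SW}(f)\ge m-o(1)$, and since $f$ is arbitrary, the price of SP equals $m$ when $m\ge 4$. When $m=3$ only two ``other'' resources are available, which loosens the SP/EF tradeoff and yields only $\AR_{\SW}(f)\ge 2-o(1)$, giving the interval $[2,3]$.

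The main obstacle is precisely this strategyproofness step for an arbitrary $f$. In Theorem~\ref{thm:n-m-F1} the bound on what $\FOne$ gives the key agents came for free from $\FOne$'s explicit rule (it equalizes the $r_1$‑shares of all non‑$G_1$ agents), whereas here nothing is assumed about how $f$ distributes the scarce resource $r_1$ among $G_2,\dots,G_m$, and a naive single deviation merely yields the vacuous bound ``dominant share $\le 1$.'' Making SP bite therefore requires a careful — possibly iterative — choice of comparison instances and demand scales, balancing (a) the honest instance still pinning the relevant agent to dominant share $O(1/n)$, (b) a large enough discount on the negligible coordinates, and (c) EF‑feasibility of the high‑welfare allocation used for the numerator; it is this balancing act, and the fact that only $m-1$ ``other'' resources are available, that simultaneously drives the proof and explains the weaker bound at $m=3$.
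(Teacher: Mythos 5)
Your upper bounds and the overall shape of the hard instances (a large filler group pinning resource~1, minority groups whose members have negligible cross-demands so that EF permits near-unit dominant shares, benchmark welfare $\to m$) match the paper, but the proposal stops exactly where the proof has to start: you explicitly concede that a single deviation per ``key'' agent only yields the vacuous bound and that some unspecified ``careful, possibly iterative'' choice of comparison instances is needed. That missing step is the actual content of the paper's argument, and it is not an adjustment of scales but a different combinatorial device: instead of one designated key agent per group, each minority group $G_k$ ($k\in[2,m]$) contains $n_2$ candidate agents (the $j$-th with demands $\varepsilon_2\beta^{j}$, $\varepsilon_2\beta^{-j}$ on two auxiliary resources, which is what guarantees EF of the benchmark within each group and is why the construction needs $m\ge 4$), and one considers all $n_2^{m-1}$ ``diverse'' tuples selecting one deviator per group. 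Feasibility of resource~1 together with SI of the fillers implies that, for any fixed choice of deviators in the other groups, at most $n_2/m$ agents of $G_k$ can receive at least $m^2/n$ of resource~1; SP then transfers ``gets little resource~1 when truthful'' to ``still gets little after deviating.'' Summing over $k$ shows the set of bad tuples has size at most $\frac{m-1}{m}n_2^{m-1}<n_2^{m-1}$, so \emph{some} tuple exists for which the mechanism gives every deviator $o(1)$ dominant share, forcing $\SW(f(\mathbf I))=1+o(1)$ against a fair benchmark of $m-o(1)$ (and the same instance gives the utilization blow-up). Your single-key-agent construction cannot be rescued without this counting step: for any fixed instance with one key agent per group, an SP mechanism (e.g., a suitable member of the family $\mathcal{F}$) may simply award those agents large shares, so the lower bound must quantify over a family of deviation patterns and apply a pigeonhole argument, which your sketch does not supply.

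For $m=3$ your proposal is similarly incomplete and also mislocates the reason the bound degrades. The paper's $m=3$ proof is not a weakened version of the same scheme but a separate construction with two groups, per-agent scaling factors $\gamma_j=\frac{n_2(n_2-1)}{jn}$, and a harmonic-sum argument: if every $j$-th agent of $G_2$ received at least $\frac{1}{\ln n_2}\frac{1}{j}\frac{n_2}{n}$ of resource~1, the total would exceed the $\frac{n_2}{n}$ available after SI for the fillers, so some agent gets less, and SP pins it down after its deviation; the benchmark there only reaches $\approx 2$ (fillers contribute $\approx 1$ and the boosted prefix of $G_2$ contributes $\approx 1$), which is why the lower bound is $2$ rather than $3$ --- not merely because ``only two other resources loosen the SP/EF tradeoff.'' So the gap is concrete: the tuple-counting argument for $m\ge 4$ and the harmonic-series single-group argument for $m=3$ are the two missing ideas, and without them the claimed squeeze $\SW(f(\mathbf I))=1+o(1)$ for an arbitrary SP mechanism is unproven.
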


One of the main results of \cite{Parkes2015} is that any mechanism that satisfies SP can only guarantee at most $1/m$ of the social welfare.
Theorem \ref{thm:lower-bound} strengthens this by showing that the result still holds for $m \ge 4$ even if we use \fratio as our benchmark.
For the case when $m=3$, we show that the price of SP is still $\infty$ for utilization, while for SW we can only get a lower bound of 2.
We leave the gap between 2 and 3 as an open question.

We prove Theorem \ref{thm:lower-bound} via the following Lemma \ref{lem:m>=4-SW} (for $m \ge 4$) and Lemma \ref{lem:m=3-SW} (for $m=3$).
Our proof follows a similar framework as \cite[Theorem 4.1]{Parkes2015}, but
requires a different construction to incorporate EF and SI.

\begin{lemma}
\label{lem:m>=4-SW}
With $m \ge 4$ resources, for any mechanism $f$ satisfying SI, EF, PO, and SP, we have
$\AR_{\SW}(f)=m$ and
$\AR_{\Util}(f)=\infty$.
\end{lemma}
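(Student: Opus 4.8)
First I would dispose of the two upper bounds. Since $f$ is SI, in $f(\mathbf I)$ every agent has dominant share at least $\frac1n$, so $\SW(f(\mathbf I))\ge n\cdot\frac1n=1$ for every instance. On the other side, for any feasible allocation $\mathbf A$ and any agent $i$ we have $A_{i,r_i^*}\ge u_i(\mathbf A_i)$ because $d_{i,r_i^*}=1$, hence $\SW(\mathbf A)=\sum_{i\in N}u_i(\mathbf A_i)\le\sum_{r\in R}\sum_{i:\,r_i^*=r}A_{ir}\le\sum_{r\in R}1=m$. Therefore $\AR_{\SW}(f)\le m$; the bound $\AR_{\Util}(f)\le\infty$ is vacuous. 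So the whole content is the two lower bounds $\AR_{\SW}(f)\ge m$ and $\AR_{\Util}(f)=\infty$, which I would obtain from a single family of instances.

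\textbf{Overall strategy for the lower bound.} The plan is to exhibit, for $n\to\infty$ and a parameter $\varepsilon=\varepsilon(n)\to0$, instances $\mathbf I_n$ with two features: (i) there is an allocation satisfying SI and EF whose social welfare is $\ge m-o(1)$ and whose utilization is $1-o(1)$; and (ii) \emph{every} mechanism $f$ satisfying SI, EF, PO and SP has $\SW(f(\mathbf I_n))\le 1+o(1)$ and uses each of resources $2,\dots,m$ only an $o(1)$ fraction. Dividing and letting $n\to\infty$ then yields $\AR_{\SW}(f)\ge m$ and $\AR_{\Util}(f)=\infty$ simultaneously; in fact the second conclusion is automatic from the first two facts in (ii), since once $G_1$ (defined below) already accounts for $1-o(1)$ of a social welfare that is itself $1+o(1)$, every agent outside $G_1$ must get an $o(1)$ dominant share and so resources $2,\dots,m$ are barely touched.

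\textbf{The construction.} I would adapt the instance from \cite[Thm.~4.1]{Parkes2015} so that the benchmark it is compared against satisfies SI and EF. Use one large group $G_1$ of $n-\Theta(1)$ ``bland'' agents whose demand is concentrated on resource~$1$; then resource~$1$ is the bottleneck and, by SI together with EF within $G_1$, under any admissible $f$ each bland agent is pinned to dominant share $\frac1n(1+o(1))$, so $G_1$ contributes $1-o(1)$. In each remaining group $G_j$ ($j\ge2$) I would place agents whose demand for resource~$1$ is so tiny that in the benchmark they can be inflated up to exhausting resource~$j$ without disturbing resource~$1$, which is what lets the benchmark reach social welfare $1+(m-1)-o(1)$. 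The new ingredient relative to \cite{Parkes2015} is that these agents need ``companion'' demands whose off‑dominant coordinates carry the $\varepsilon$‑versus‑$\varepsilon^2$ slack used in the proof of Theorem~\ref{thm:n-m-F1}, so that no companion envies the inflated agent (making the benchmark genuinely EF) while at the same time EF \emph{within} $G_j$ stops $f$ from singling out one agent the way the benchmark does. One then argues — by passing to reference instances in which some agents report bland demands, whose allocations are completely determined by EF, PO and SI, and propagating back to $\mathbf I_n$ through single‑agent manipulations controlled by SP — that $f$ must leave each $G_j$, $j\ge2$, with total dominant share $o(1)$. This scheme genuinely needs $m\ge4$; when $m=3$ it breaks down and only the weaker bound $2$ recorded in Theorem~\ref{thm:lower-bound} survives.

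\textbf{Main obstacle.} The hard part is that SP only ever gives \emph{lower} bounds on individual utilities, so $\SW(f(\mathbf I_n))\le1+o(1)$ cannot be obtained by applying SP to the would‑be high‑welfare agents directly. Instead SP has to be used to force the bland and companion agents to consume essentially all of the resources the inflated agents would need, after which the desired upper bound follows from feasibility alone. Arranging the demand vectors so that these SP deviation inequalities stay non‑vacuous is delicate precisely because the inflatable agents have tiny demands — a naive deviation toward their report is worthless to the deviator, so the deviations must be routed through the companion agents and the reference instances. This is the technical heart of the argument and is exactly where the interplay with EF and SI makes the construction more intricate than the original in \cite{Parkes2015}.
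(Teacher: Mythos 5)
Your upper bound $\AR_{\SW}(f)\le m$ is fine, and your overall direction (adapt the family-of-manipulations framework of \cite{Parkes2015} so that the benchmark is SI and EF, with one large bland group pinned by SI, small groups that only the benchmark can inflate, and the $\varepsilon_2\beta^{\pm j}$ two-extra-coordinate trick that genuinely needs $m\ge4$) is the same as the paper's. But there is a real gap at the heart of the lower bound: you posit a \emph{single} instance family $\mathbf{I}_n$ on which \emph{every} mechanism satisfying SI, EF, PO and SP has welfare $1+o(1)$, and you attribute the obstruction partly to ``EF within $G_j$ stops $f$ from singling out one agent the way the benchmark does.'' Both parts fail. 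EF cannot prevent singling out one agent per group, because the benchmark allocation itself satisfies SI, EF and PO while doing exactly that; and if the tiny resource-1 demands of the inflatable agents are truthful in one fixed instance, then the generalized $\FOne$ (which satisfies all four properties by Theorem~\ref{thm:n-m-F1}) inflates precisely those agents and attains welfare close to $m$ on it, so no fixed instance of the kind you describe can be bad for all admissible mechanisms; the adversarial instance must be chosen as a function of $f$.

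What the paper actually does, and what your sketch omits, is a mechanism-dependent pigeonhole over an exponential family of instances. Truthfully, every agent in $G_k$ ($k\ge2$) demands the constant fraction $\tfrac12$ of resource~1, so SI for the bland group plus resource-1 feasibility throttles them; for each ``diverse tuple'' selecting one agent per group, that agent's non-dominant demands are scaled down by $\tfrac{n_2}{n}$, and on every such manipulated instance the benchmark (which gives the scaled agents dominant share $x\approx 1$) is SI, EF and has welfare $\approx m$. Then, for any fixed mechanism, a counting argument shows that for each $k$, with the changes outside $G_k$ fixed, at most $n_2/m$ agents of $G_k$ can receive at least $m^2/n$ of resource~1, and SP transfers ``gets little resource~1 when unchanged'' into ``still gets little resource~1 when changed''; summing over $k$ leaves at least one tuple on which all $m-1$ changed agents simultaneously get little resource~1, whence the mechanism's welfare is $1+o(1)$ and resource~2 is barely used (your observation that the $\AR_{\Util}=\infty$ conclusion then follows from the same facts is correct). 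Your alternative route via ``reference instances whose allocations are completely determined by EF, PO and SI,'' propagated back by single-agent manipulations, is not substantiated: on the relevant manipulated instances the axioms do not pin down the allocation, and without the counting step SP gives no handle on which agents the mechanism chooses to reward. So the proposal has the right construction ingredients but is missing the step that makes SP bite.
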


\begin{proof}
We first construct an instance to show the result for SW, and then use the same instance to show the result for utilization.
For SW, we show that for any $m \ge 4$ and $\delta>0$, there exists a sufficiently large $n$ such that for any mechanism $f$, we have $\AR_{\SW}(f) \ge m-\delta$.

\paragraph{Instance construction.}
We construct an instance $\mathbf{I}$ with $n$ agents partitioned into $m$ groups.
The first group $G_1$ consists of $n_1$ agents who have the same demand vector $(1,\varepsilon_1,\dots,\varepsilon_1)$, where $\varepsilon_1=n^{-1}$.
For each $i \in [2,m]$, $G_i$ has $n_2$ agents, and the $j$-th agent in $G_i$ has a demand vector $(\frac{1}{2},0,\dots,0,1,\varepsilon_2\beta^j,\varepsilon_2\beta^{-j},0,\dots,0)$, where $1$ is on the $i$-th position, $\varepsilon_2=n^{-2n_2-3}$ and $\beta=n^2$.
Specifically, if $i=m-1$, the corresponding demand vector is $(\frac{1}{2},\varepsilon_2\beta^{-j},0,\dots,0,1,\varepsilon_2\beta^j)$, while the corresponding demand vector when $i=m$ is $(\frac{1}{2},\varepsilon_2\beta^j,\varepsilon_2\beta^{-j},0,\dots,0,1)$.
Note that for any $j \in [1,n_2]$, $\varepsilon_2\beta^j \le n^{-2n_2-3} \cdot n^{2n_2}=n^{-3}<\frac{\varepsilon_1}{n}$, so there is no envy from $G_1$ to $\bigcup_{i \in [2,m]} G_i$ in any allocation satisfying SI.
Moreover, for any $j_1<j_2$, we have $\varepsilon_2 \beta^{j_1}  < \frac{1}{n}\varepsilon_2 \beta^{j_2}$ and $\frac{1}{n} \varepsilon_2 \beta^{-j_1}  > \varepsilon_2 \beta^{j_2}$, so any allocation satisfying SI also satisfies EF within each group $G_i$ for $i \in [2,m]$.
Note that $n=n_1+(m-1)n_2$.

In the following proof we will change the demand vectors of some agents in the instance $\mathbf{I}$.
We make the restriction that when we change the demand vector of an agent from group $G_i (i \in [2,m])$, we can only change it by multiplying $d_{i,j}$ for all $j \neq i$ with the factor $\frac{n_2}{n}$.
For example,
\[
(\frac{1}{2},0,\dots,0,1,\varepsilon_2\beta^j,\varepsilon_2\beta^{-j},0,\dots,0)
\]
will be changed to
\[
(\frac{n_2}{n}\frac{1}{2},0,\dots,0,1,\frac{n_2}{n}\varepsilon_2\beta^j,\frac{n_2}{n}\varepsilon_2\beta^{-j},0,\dots,0).
\]
We call a $(m-1)$-tuple $(i_2,i_3,\dots,i_m)$ \emph{diverse} if $1 \le i_k \le n_2$ for every $k \in [2,m]$.
Let $T_{n_2,m}$ be the set of all diverse $(m-1)$-tuples.
Note that $|T_{n_2,m}|={n_2}^{m-1}$.

Next we show that (a) for any diverse tuple $(i_2,i_3,\dots,i_m) \in T_{n_2,m}$, if we change the $i_k$-th agent in $G_k$ for each $k \in [2,m]$, then there always exists an allocation that satisfies SI and EF, and has SW close to $m$; 
(b) for any mechanism satisfying SP and SI, we can always find a diverse $(m-1)$-tuple such that after the change, the SW will be close to 1.
Combining these two points, we get the claimed \fratio $m$ for SW.

\paragraph{Lower bound for the optimal fair allocation.}
Fix a $(i_2,i_3,\dots,i_m) \in T_{n_2,m}$ and change the demands of the corresponding agents.
Let $G^*$ be the set of agents that have been changed.
We build an allocation $\mathbf{A}^*$ as follows.
Every agent outside $G^*$ gets $\frac{1}{n}$ dominant share.
Every agent in $G^*$ gets the same dominant share $x$ such that all resources except resource~1 are exactly used up.
Notice that for resource~1, all agents in $G_1$ receive $\frac{n_1}{n}$; all agents in $\bigcup_{k \in [2,m]} G_k \setminus G^*$ receive $\frac{(m-1)(n_2-1)}{2n}$; all agents in $G^*$ receive $\frac{(m-1)n_2x}{2n}<\frac{(m-1)n_2}{2n}$ as $x<1$.
Sum them up and we get
\[
\frac{n_1}{n}+\frac{(m-1)(n_2-1)}{2n}+\frac{(m-1)n_2}{2n} < \frac{n_1+(m-1)n_2}{n}=1.
\]
So $\mathbf{A}^*$ is feasible and resource~1 is the only resource that is not used up.

We then compute $\SW(\mathbf{A}^*)$.
The main task is to compute $x$ since $\text{SW}(\mathbf{A}^*)=1-\frac{m-1}{n}+(m-1)x$.
For each resource $k \in [2,m]$, $G_1$ receives $\frac{n_1}{n}\varepsilon_1$; $G_{k-2}$ ($G_{m-1}$ when $k=2$ and $G_{m}$ when $k=3$) receives $y^-\coloneqq\frac{\varepsilon_2}{n}(\frac{\beta^{-1}(1-\beta^{-n_2})}{1-\beta^{-1}}-\beta^{-i_k}+x\cdot n_2\beta^{-i_k})$; $G_{k-1}$ ($G_m$ when $k=2$) receives $y^+\coloneqq\frac{\varepsilon_2}{n}(\frac{\beta(\beta^{n_2}-1)}{\beta-1}-\beta^{i_k}+x\cdot n_2\beta^{i_k})$; $G_k$ receives $\frac{n_2-1}{n}+x$.
Since except resource a all resources are used up, we have
\[
x=1-\frac{n_1}{n}\varepsilon_1-y^--y^+-\frac{n_2-1}{n}.
\]
Since $\varepsilon_1=\frac{1}{n}$, $\beta=n^2$ and $\varepsilon_2=n^{-2n_2-3}$, then $y^+ \le \frac{\varepsilon_2 (\beta^{n_2+1}+n_2\beta^{n_2})}{n} < \frac{1}{n}$ and $y^- < y^+ < \frac{1}{n}$.
Then
\[
x > 1-\frac{1}{n}-\frac{1}{n}-\frac{1}{n}-\frac{n_2-1}{n} = 1-\frac{n_2+2}{n},
\]
and
\[
\text{SW}(\mathbf{A}^*) =1-\frac{m-1}{n}+(m-1)x \ge m-\frac{(n_2+3)(m-1)}{n}.
\]

Next we show that $\mathbf{A}^*$ satisfies SI and EF.
SI is clearly satisfied.
So we just need to show EF.
Recall that for any $j \in [1,n_2]$, $\varepsilon_2\beta^j \le n^{-2n_2-3} \cdot n^{2n_2}=n^{-3}<\frac{\varepsilon_1}{n}$, so there is no envy from $G_1$ to $\bigcup_{k \in [2,m]} G_k$.
For each $k \in [2,m]$, each agent in $G_k$ receives at least $\frac{1}{n}$ of resource $k$ while every agent outside $G_k$ receives less than $\frac{1}{n}$ of resource $k$ (since $\max\{y^+,y^-,\varepsilon_1\} \le \frac{1}{n}$), so there is no envy from $G_k$ to agents outside $G_k$.
Then there is no envy between different groups and we just need to consider envy within each group.
Since each agent in $G^*$ receives $x>\frac{1}{n}$ of its dominant resource while every agent outside $G^*$ receives $\frac{1}{n}$ of its dominant resource, within each group $G_k$ for $k \in [2,m]$ we just need to check whether the $i_k$-th agent in $G_k$ is envied by some other $j$-th agent in $G_k$.
For any $k \in [2,m]$,
if $j>i_k$, then $j$ receives $\frac{1}{n}\varepsilon_2\beta^j$ of resource $k+1$(resource $2$ when $k=m$) and $i_k$ receives $x\frac{n_2}{n}\varepsilon_2\beta^{i_k}$ of resource $k+1$(resource $2$ when $k=m$).
Since $\beta=n^2>xn_2$, we have
\[
\frac{1}{n}\varepsilon_2\beta^j \ge \frac{\beta}{n} \varepsilon_2\beta^{i_k} > x\frac{n_2}{n}\varepsilon_2\beta^{i_k},
\]
and then $j$ receives more of resource $k+1$ than $i_k$ and hence there is no envy from $j$ to $i_k$.
Similarly, if $j < i_k$, then $j$ receives more of resource $k+2$ (resource $2$ when $k=m-1$ and resource $3$ when $k=m$) than $i_k$ and hence there is no envy from $j$ to $i_k$.

\paragraph{Upper bound for any SP and SI mechanism.}
We show that for any mechanism satisfying SP and SI, we can always find a diverse $(m-1)$-tuple such that after changing the demand vectors of the corresponding agents all of them will get less than $\frac{m^2}{n}$ of resource~1.
Consequently, the SW will be close to 1.
For any $k \in [2,m]$, let $L^k_{n_2,m} \subseteq T_{n_2,m}$ be the set of tuples such that after changing the demand vectors of the corresponding agents, the $i_k$-th agent from $G_k$ gets at least $\frac{m^2}{n}$ of resource~1.
Let $L_{n_2,m}=\bigcup_{k \in [2,m]} L^k_{n_2,m}$.
Our goal is to show that $|L_{n_2,m}| < |T_{n_2,m}|$.

Because of SI, agents in $G_1$ receive at least $\frac{n_1}{n}$ of resource~1, then all remaining agents receive at most $\frac{(m-1)n_2}{n}$ of resource~1.
Fix any $k \in [2,m]$.
For any tuple $t=(i_2,i_3,\dots,i_m) \in L^k_{n_2,m}$, let $t_{-k}=(i_2,\dots,i_{k-1},i_{k+1},\dots,i_m)$.
If we change the corresponding agents for $t_{-k}$, then the number of agents from $G_k$ that can receive at least $\frac{m^2}{n}$ of resource~1 is at most
\[
\frac{\frac{(m-1)n_2}{n}}{\frac{m^2}{n}} \le \frac{n_2}{m}.
\]
If the $i_k$-th agent receives less than $\frac{m^2}{n}$ of resource~1, then after changing the $i_k$-th agent, she still receives less than $\frac{m^2}{n}$ of resource~1, since otherwise she will benefit by changing its demand vector, which contradicts with SP.
So there are at most $\frac{n_2}{m}$ choices for $i_k$ for a fixed $t_{-k}$.
Then
\[
|L^k_{n_2,m}| \le \frac{n_2}{m} \cdot {n_2}^{m-2}=\frac{{n_2}^{m-1}}{m}, \text{ and }
\]
\[
  \begin{aligned}
|L_{n_2,m}| \le \sum_{k \in [2,m]} |L^k_{n_2,m}| &\le \frac{(m-1){n_2}^{m-1}}{m} \\
&< {n_2}^{m-1} \\
&= |T_{n_2,m}|.
  \end{aligned}
\]

Therefore, there exists at least one tuple in $T_{n_2,m}$ such that after changing the demand vectors of the corresponding agents, all of them will get less than $\frac{m^2}{n}$ of resource~1.
Then the SW of the allocation after the change is upper bounded with adding an assumption that $n\geq n^2_2$ by
\[
  \begin{aligned}
& \frac{n_1}{n}+\frac{2(m-1)n_2}{n}+(m-1) \cdot \frac{2m^2}{n_2} \\
&\le 1+\frac{(m-1)n_2}{n}+(m-1) \cdot \frac{2m^2}{n_2}\\
& \leq 1+\frac{3m^3}{n_2}.
  \end{aligned}
\]

Finally, since there always exists an allocation $\mathbf{A}^*$ with $\text{SW}(\mathbf{A}^*) \ge m-\frac{(n_2+3)(m-1)}{n}$, the \fratio for SW is lower bounded by
\[
\frac{m-\frac{(n_2+3)(m-1)}{n}}{1+\frac{3m^3}{n_2}}.
\]
For any $\delta$ and $m \ge 4$, we can choose $n_2 \ge \frac{4m^4}{\delta}$ and $n \ge n_2^2$ such that
\[
\frac{m-\frac{(n_2+3)(m-1)}{n}}{1+\frac{3m^3}{n_2}} \ge \frac{m-\frac{\delta}{4}}{1+\frac{3\delta}{4m}} \ge m-\delta.
\]
This finished the proof for SW.

\paragraph{Proof for utilization.}
We use the same instance as above.
Recall that under $\mathbf{A}^*$ all resources except resource~1 are used up.
Since $G_1$ consists of $n_1$ agents and each agent receives at least $\frac{1}{n}$ of resource~1, the utilization of resource~1 is at least $\frac{n_1}{n}$.
On the other hand, we have shown that for any mechanism satisfying SP and SI, we can find a diverse $(m-1)$-tuple such that after changing the demand vectors of the corresponding agents all of them will get less than $\frac{m^2}{n}$ of resource~1.
Let's consider the utilization of resource~2 in this case.
Agents in $G_1$, $G_2$, $G_{n-1}$, $G_n$ receive non-zero amount of resource~2.
First, the amount of resource~2 received by $G_1$ is at most $\frac{1}{n}$.
Next, for all agents in $G_2 \cup G_{n-1} \cup G_n$ except those in the $(m-1)$-tuple, they can use at most $1-\frac{n_1}{n}$ of resource~1, so they can use at most $2(1-\frac{n_1}{n})$ of resource~2 since in their demand vectors the demand for resource~1 is $\frac{1}{2}$.
Finally, for the three agents in $G_2 \cup G_{n-1} \cup G_n$ and the $(m-1)$-tuple, we have shown that each agent receives less than $\frac{m^2}{n}$ of resource~1 and hence less than $\frac{2m^2}{n_2}$ of resource~2, so in total they receive at most $\frac{6m^2}{n_2}$ of resource~2.
To sum up, the utilization of resource~2 is at most
$\frac{1}{n}+2(1-\frac{n_1}{n})+\frac{6m^2}{n_2}$.
For any large number $\gamma$, we can choose $n_2 \ge 36m^2\gamma$ and $n_1 \ge 6(m-1)n_2\gamma$ such that $\frac{1}{n_1} \le \frac{1}{3\gamma}$, $2(\frac{n}{n_1}-1) \le \frac{1}{3\gamma}$, and $\frac{6m^2n}{n_1n_2} \le \frac{12m^2}{n_2} \le \frac{1}{3\gamma}$, and then $\AR_{\Util}(f)$ is at least
\[
  \begin{aligned}
\frac{\frac{n_1}{n}}{\frac{1}{n}+2(1-\frac{n_1}{n})+\frac{6m^2}{n_2}}&=
\frac{1}{\frac{1}{n_1}+2(\frac{n}{n_1}-1)+\frac{6m^2n}{n_1n_2}} \\
&\ge
\frac{1}{\frac{1}{3\gamma}+\frac{1}{3\gamma}+\frac{1}{3\gamma}}\\
& \ge
\gamma.
  \end{aligned}
\]
This finished the proof for utilization.
\end{proof}

\paragraph{Remark.} We include $0$ in the demand vector for a better and clearer description of the proof. Precisely, we can replace $0$ with a small enough $\varepsilon$ to get the same result.

The remaining case is when $m=3$.
Notice that for the constructed instance in the proof of Lemma \ref{lem:m>=4-SW}, we need two resources to avoid envy between agents in the same group, and thus the proof only works for $m \ge 4$.
For $m=3$, we can use one resource to ``partly'' avoid envy between agents in the same group and use some further techniques to guarantee EF.
With this change, we can show that the price of SP for utilization is also $\infty$ when $m=3$.
However, we can only get the lower bound 2 for price of SP for SW when $m=3$.
We leave the ga between 2 and 3 as an open question.

\begin{lemma}
\label{lem:m=3-SW}
With $m=3$ resources, for any mechanism $f$ satisfying SI, EF, PO and SP, we have
$\AR_{\SW}(f) \in [2,3]$ and
$\AR_{\Util}(f)=\infty$.
\end{lemma}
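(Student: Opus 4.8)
The statement splits into two independent halves, and only one of them needs real work. The bound $\AR_{\SW}(f)\le 3$ uses neither EF, PO, nor SP: every feasible allocation $\mathbf{A}$ satisfies $\SW(\mathbf{A})=\sum_{i\in N}u_i(\mathbf{A}_i)\le\sum_{i\in N}A_{i,r_i^*}=\sum_{r\in R}\sum_{i:\,r_i^*=r}A_{ir}\le m=3$, while any mechanism satisfying SI gives every agent dominant share at least $\tfrac1n$ and hence social welfare at least $1$, so the quotient is at most $3$. The content is therefore $\AR_{\SW}(f)\ge 2$ together with $\AR_{\Util}(f)=\infty$, and for both I would reuse the template of the proof of Lemma~\ref{lem:m>=4-SW}: set up a family of instances indexed by which ``special'' agents receive a fixed small perturbation of their non-dominant coordinates; for each index produce an SI+EF allocation $\mathbf{A}^*$ whose social welfare tends to $2$ (for the utilisation claim, $\mathbf{A}^*$ should instead keep its minimum utilisation bounded below); and then run an averaging argument over the index set, combined with SP, to locate a single index for which every SI+EF+PO+SP mechanism is forced to keep the special agents near their $\tfrac1n$ share, so that its social welfare stays near $1$ and one of the resources is left almost unused.

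For the concrete instance I would keep one large group $G_1$ of agents with demand $(1,\varepsilon_1,\varepsilon_1)$, which under SI must hold at least $|G_1|/n$ of resource~$1$, together with two small groups $G_2$ and $G_3$ of dominant resources $2$ and $3$ whose agents demand about half of resource~$1$; then any SI mechanism leaves $G_2\cup G_3$ only about $1-|G_1|/n$ of resource~$1$ in total, so with $|G_1|/n\to1$ and $|G_2|,|G_3|$ taken large but $o(n)$ their combined dominant share under any SI mechanism is $o(1)$ and the mechanism's social welfare tends to $1$. In $\mathbf{A}^*$ one perturbed agent of $G_2$ (and, for the $\ge 2$ bound, possibly one of $G_3$) is instead raised to a substantial dominant share by filling up resource~$2$ (respectively resource~$3$), which it can afford because its perturbed demand for resource~$1$ is small; pushing this as far as envy-freeness permits brings $\SW(\mathbf{A}^*)$ up to roughly $2$. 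The SP/counting step is carried over from Lemma~\ref{lem:m>=4-SW} with only cosmetic changes: for a fixed choice of the perturbed agent in one group, SP forces all but an $O(1/m)$ fraction of the candidate positions in the other group to receive less than $m^2/n$ of resource~$1$ after perturbation, so a good index exists; and reading off, from that same index, a resource whose total usage under $f$ is $o(1)$ while $\mathbf{A}^*$ keeps its smallest utilisation bounded below yields $\AR_{\Util}(f)=\infty$.

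The only genuinely new point, and the reason the social-welfare bound comes out as $2$ rather than as $m=3$, is certifying envy-freeness of $\mathbf{A}^*$ within a perturbed group. In the proof of Lemma~\ref{lem:m>=4-SW} each agent of $G_k$ carries \emph{two} geometric tags, an increasing and a decreasing one, on two distinct spare resources, so that relative to any target position every lower-indexed agent is beaten on one spare resource and every higher-indexed agent on the other; this is precisely what allows one to perturb an \emph{arbitrary} agent of $G_k$ while preserving EF, and the averaging step then has an entire group of candidates to choose from. With $m=3$ there is only a single spare resource per group (resource~$3$ for $G_2$, resource~$2$ for $G_3$), so only one of the two directions can be neutralised by a tag; the other direction has to be handled by shrinking the perturbed agent's demand for resource~$1$ far enough that it holds so little of resource~$1$ that no other agent --- all of which demand $\Theta(1)$ of it --- can envy it through resource~$1$. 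But the more one shrinks that demand, the less faithfully ``resource~$1$ received'' tracks the perturbed agent's dominant share, so the SP/counting step bites only up to a bounded dominant share for that agent. Making these two requirements simultaneously satisfiable --- enough shrinking (together with a careful one-directional tag, or a further averaging device on another resource) for EF, yet enough resource-$1$ demand left for the SP monotonicity argument of Lemma~\ref{lem:m>=4-SW} to still pin the allocation down --- is the delicate part, and it is exactly this tension that caps $\mathbf{A}^*$ at social welfare $\approx 2$ and leaves the gap up to $3$ open.
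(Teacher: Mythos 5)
Your handling of the easy half is fine and matches the paper: $\AR_{\SW}(f)\le 3$ needs only SI, and the whole content is $\AR_{\SW}(f)\ge 2$ and $\AR_{\Util}(f)=\infty$. Your template for that content (shrink the resource-1 demand of one ``special'' agent, exhibit an SI+EF benchmark with social welfare near $2$, use SP plus a counting argument to show the mechanism must keep that agent near its $\tfrac1n$ share, and reuse the instance for utilization) is also the right one. But the proposal stops exactly at the step that carries the proof. As you yourself observe, with only one spare resource per group you cannot tag both directions of within-group envy; and if you instead make the perturbed agent's resource-1 demand so small (a uniform shrink of order $\tfrac1n$) that nobody can envy it through resource~1, then SP only transfers a bound of order $(\text{counting threshold})/(\text{shrink factor})$ to its post-perturbation dominant share. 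Since any threshold for which a below-threshold agent is guaranteed to exist must be at least about $\tfrac1n$, this bound is a constant (in fact vacuous), so the mechanism's social welfare is no longer forced near $1$ and neither the ratio $2$ nor the unbounded utilization gap follows. You flag this tension and gesture at ``a careful one-directional tag, or a further averaging device,'' but you do not resolve it, and resolving it is the actual content of Lemma~\ref{lem:m=3-SW}.

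The paper's resolution consists of three coupled choices that your sketch would need to supply. First, the large group gets demand $(1,\varepsilon_1,1)$, so it is dominant on resources~1 and~3 and can never envy a boosted $G_2$ agent however much of resources~1 and~2 that agent holds (your three-group instance with $G_1$ demand $(1,\varepsilon_1,\varepsilon_1)$ creates extra cross-group envy cases you do not address). Second, the shrink factor is index-dependent, $\gamma_j=\frac{n_2(n_2-1)}{jn}$, and the benchmark allocation does not boost only the perturbed agent $j$ but the entire prefix of the first $j$ agents of $G_2$, giving them equal amounts of resource~1, so that envy from below is impossible and the single geometric tag on resource~3 kills envy from above; the $1/j$ decay is exactly what keeps the prefix's total resource-1 consumption near $\tfrac{n_2}{n}$, so this benchmark is feasible and has social welfare about $2$. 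Third, the pigeonhole is run with harmonic weights: if every $j$-th agent received at least $\frac{1}{\ln n_2}\frac1j\frac{n_2}{n}$ of resource~1, the total would exceed $\tfrac{n_2}{n}$, contradicting SI; this threshold is matched to $\gamma_j$ so that SP caps the perturbed agent's post-perturbation dominant share by $O(1/\ln n_2)=o(1)$, which is what makes both the social-welfare ratio $2-\delta$ and the unbounded utilization ratio go through. Without an equivalent of this matching (index-dependent shrink, weighted counting, prefix-boosted benchmark), the two requirements you identify are, as your own back-of-envelope analysis indicates, not simultaneously satisfiable, so the proposal as written does not prove the lemma.
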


\begin{proof}
We first construct an instance to show the result for SW, and then use the same instance to show the result for utilization.
For SW, the upper bound is trivial since any mechanism $f$ satisfying SI can guarantee at least SW of 1, and hence $\AR_{\SW}(f) \le 3$.
For the lower bound, we show that for any  $\delta\in(0,1)$, there exists sufficiently large $n$ such that for any mechanism $f$, we have $\AR_{\SW}(f) \ge 2-\delta$.

\paragraph{Instance construction.} 
We first construct an instance $\mathbf{I}$ with $n$ agents partitioned into $2$ groups.
The first group $G_1$ consists of $n_1$ agents who have the same demand vector $(1,\varepsilon_1,1)$, where $\varepsilon_1=n^{-1}$.
The second group $G_2$ consists of $n_2$ agents, and the $j$-th agent in $G_2$ has a demand vector $(\frac{1}{n_2},1,\varepsilon_2\beta^j)$, where $\varepsilon_2=n^{-2n_2-3}$ and $\beta=n^2$.
Note that for any $j \in [1,n_2]$, $\varepsilon_2\beta^j \le n^{-2n_2-3} \cdot n^{2n_2}=n^{-3}<\frac{1}{n}$, so there is no envy from $G_1$ to $G_2$ in any allocation satisfying SI.
Note that $n=n_1+n_2$ and keep $n>n_2^2$.

In the following proof we will change the demand vectors of agents in $G_2$.
We make the restriction that when we change the demand vector of the $j$-th agent in $G_2$, we can only change it by multiplying $d_{j,1}$ and $d_{j,3}$ with the factor $\gamma_j=\frac{n_2(n_2-1)}{jn}<1$.
For example,
\[
(\frac{1}{n_2},1,\varepsilon_2\beta^j)
\]
will be changed to
\[
(\gamma_j\frac{1}{n_2},1,\gamma_j\varepsilon_2\beta^j).
\]

Similar to the case for $m \ge 4$ in Lemma \ref{lem:m>=4-SW}, in the following we show that (a) for any $j \in [1,n_2]$, if we change the $j$-th agent in $G_2$, then there always exists an allocation that satisfies SI and EF, and has SW close to $2$;
(b) for any mechanism satisfying SP and SI, we can always find an agent from $G_2$ such that after changing its demand vector, the SW of the allocation under this mechanism will be close to 1.
Combining these two points, we get the claimed lower bound $2$ for SW.

\paragraph{Lower bound for the optimal fair allocation.}
Fix a $j \in [1,n_2]$ and change the demand vector of the $j$-th agent in $G_2$.
Let $G_2^j \subseteq G_2$ be the set of the first $j$ agents in $G_2$ and $G_2^{-j}=G_2 \setminus G_2^{j}$.
We build an allocation $\mathbf{A}^*$ as follows.
Every agent outside $G_2^j$ gets $\frac{1}{n}$ dominant share.
Every agent in $G_2^j$ gets the same fraction $x$ of resource~1 such that resource~2 is exactly used up.
We will show that in $\mathbf{A}^*$ resource~1 and resource 3 are not used up, and hence, $\mathbf{A}^*$ is feasible.
Let us compute the value of $x$ first.
For resource~2, all agents in $G_1$ receive $\frac{n_1}{n}\varepsilon_1$;
all agents in $G_2^{-j}$ receive $\frac{n_2-j}{n}$;
all agents in $G_2^j$ receive $n_2(j-1)x+\frac{n_2x}{\gamma_j}$.
Thus,
\[
\frac{n_1}{n}\varepsilon_1+\frac{n_2-j}{n}+n_2(j-1)x+\frac{n_2x}{\gamma_j}=1,
\]
and then
\[
x=\frac{1-\frac{n_1}{n}\varepsilon_1-\frac{n_2-j}{n}}{\frac{n_2}{\gamma_j}+n_2(j-1)}
=\frac{\gamma_j}{n_2}\frac{n-n_1\varepsilon_1-(n_2-j)}{n+(1-\frac{1}{j})n_2(n_2-1)}\leq \frac{\gamma_j}{n_2}.
\]
Now for resource~1, all agents in $G_1$ receive $\frac{n_1}{n}$;
all agents in $G_2^{-j}$ receive $\frac{n_2-j}{n_2n}$; all agents in $G_2^j$ receive $jx\leq j\frac{\gamma_j}{n_2}=\frac{n_2-1}{n}$.
Sum them up and we get
\[
\frac{n_1}{n}+\frac{n_2-j}{n_2n}+\frac{n_2-1}{n}<1.
\]
For resource 3, since $\varepsilon_2\beta^j \le n^{-3}<\frac{1}{n_2}$, we have that
every agent receives less resource 3 than resource~1 and resource~2, so resource 3 is not used up.
Therefore, $\mathbf{A}^*$ is feasible and resource~2 is the only resource that is used up.
The SW of $\mathbf{A}^*$ is
\[
\text{SW}(\mathbf{A}^*)=\frac{n_1}{n}+1-\frac{n_1}{n}\varepsilon_1=1-\frac{n_2}{n}+1-\frac{n_1}{n^2} \ge 2-\frac{n_2+1}{n}.
\]

Next we show that $\mathbf{A}^*$ satisfies SI and EF.
SI is clearly satisfied.
So we just need to show EF.
There is no envy within $G_1$ as all agents receive the same amount of resources, so does $G_2^{-j}$.
There is no envy within $G_2^j$ as all agents receive the same amount $x$ of resource~1.
Thus, envy could only happen between different groups.
There is no envy from $G_1$ since every agent receives more fraction of resource 3 than agents in $G_2$ ($\frac{1}{n}>n^{-3}\geq \varepsilon_2\beta^j \geq \gamma_j\varepsilon_2\beta^j$).
There is no envy from $G_2^j$ since they receive more fraction of resource~1 than others in other groups ($x \ge \frac{1}{n}$).
There is no envy from $G_2^{-j}$ to $G_1$ since agents in $G_2^{-j}$ receive more fraction of resource~2 than agents in $G_1$ ($\frac{1}{n}>\frac{\varepsilon_1}{n}$).
Finally, there is no envy from $G_2^{-j}$ to $G_2^j$ since every agent in $G_2^{-j}$ receives at least $\frac{1}{n}\varepsilon_2\beta^{j+1}=n\varepsilon_2\beta^{j}$ of resource 3 while every agent in $G_2^{j}$ receives at most $\varepsilon_2\beta^{j}$ of resource 3.

\paragraph{Upper bound for any SP and SI mechanism.}
We show that for any mechanism satisfying SP and SI, we can always find an agent from $G_2$ such that after changing its demand vector, the SW of the allocation under this mechanism will be close to 1.
Suppose that for each $j \in [1,n_2]$, the $j$-th agent in $G_2$ receives at least $\frac{1}{\ln n_2}\frac{1}{j}\frac{n_2}{n}$ of resource~1 in the original instance $\mathbf{I}$.
Then the sum of resource~1 received by agents in $G_2$ is at least
\[
\sum_{j \in [1,n_2]} \frac{1}{\ln n_2}\frac{1}{j}\frac{n_2}{n}>\frac{n_2}{n}.
\]
On the other hand, since the mechanism is SI, all agents in $G_1$ receive at least $\frac{n_1}{n}=1-\frac{n_2}{n}$ of resource~1, which leads to a contradiction.
Therefore, at least one agent from $G_2$ receives less than $\frac{1}{\ln n_2}\frac{1}{j}\frac{n_2}{n}$ of resource~1 in the original instance.
Let this agent be the $j^*$-th agent in $G_2$ and we change the demand vector of this agent.
Because of SP, after the change this agent still receives less than $\frac{1}{\ln n_2}\frac{1}{j}\frac{n_2}{n}$ of resource~1.
Thus, it receives at most $\frac{n_2}{n_2-1}\frac{1}{\ln n_2}<\frac{2}{\ln n_2}$ dominant share as we will choose $n_2>2$.
Now we upper bound the SW after the change.
Agents in $G_1$ receive $\frac{n_1}{n}$ dominant share.
All remaining agents except the $j^*$-th agent in $G_2$ receive at most $\frac{n_2}{n}$ of resource~1 and at most $\frac{{n_2}^2}{n}$ of dominant share.
Then the SW after the change is upper bounded by
\[
\frac{n_1}{n}+\frac{{n_2}^2}{n}+\frac{2}{\ln n_2}=1+\frac{n_2(n_2-1)}{n}+\frac{2}{\ln n_2}.
\]

Finally, since there always exists an allocation $\mathbf{A}^*$ with $\text{SW}(\mathbf{A}^*) \ge2-\frac{n_2+1}{n}$, the \fratio for SW is lower bounded by
\[
\frac{2-\frac{n_2+1}{n}}{1+\frac{n_2(n_2-1)}{n}+\frac{2}{\ln n_2}}.
\]
For any $\delta\in(0,1)$ we can choose $n_2 \ge e^{\frac{16}{\delta}}$ and $n \ge \frac{8}{\delta}{n_2}^2$ such that $\frac{2}{\ln n_2} \le \frac{\delta}{8}$ and $\frac{n_2+1}{n} \le \frac{n_2(n_2-1)}{n} \le \frac{\delta}{8}$, then
\[
\frac{2-\frac{n_2+1}{n}}{1+\frac{n_2(n_2-1)}{n}+\frac{2}{\ln n_2}} \ge \frac{2-\frac{\delta}{8}}{1+\frac{\delta}{4}} \ge 2-\delta.
\]
This finished the proof for SW.

\paragraph{Proof for utilization.}
We use the same instance as above.
In $\mathbf{A}^*$ resource~2 is used up, and the utilization rate for resource~1 and 3 is at least $\frac{n_1}{n}$ because of SI.
So the utilization under $\mathbf{A}^*$ is at least $\frac{n_1}{n}$.
On the other hand, we have shown that for any mechanism satisfying SP and SI, we can always find an agent from $G_2$ such that after changing its demand vector, this agent receives at most $\frac{2}{\ln n_2}$ dominant share.
Let's consider the utilization rate of resource~2 in this case.
Agents in $G_1$ receive at most $\varepsilon=\frac{1}{n}$ of resource~2.
Agents in $G_2$ except the chosen agent receive at most $1-\frac{n_1}{n}$ of resource~1 due to SI and at most $n_2(1-\frac{n_1}{n})$ of resource~2.
So the utilization rate of resource~2 is at most $\frac{2}{\ln n_2}+\frac{1}{n}+n_2(1-\frac{n_1}{n})$.
For any large number $\gamma$, we can choose $n_2 \ge e^{12\gamma}$ and $n_1 \ge 3\gamma n_2^2$ such that $\frac{1}{n_1} \le \frac{1}{3\gamma}$, $\frac{2n}{n_1 \ln n_2} \le \frac{4}{\ln n_2} \le \frac{1}{3\gamma}$,  and $\frac{n_2^2}{n_1} \le \frac{1}{3\gamma}$, and then $\AR_{\Util}(f)$ is at least
\[
  \begin{aligned}
\frac{\frac{n_1}{n}}{\frac{2}{\ln n_2}+\frac{1}{n}+n_2(1-\frac{n_1}{n})}&=
\frac{1}{\frac{2n}{n_1 \ln n_2}+\frac{1}{n_1}+\frac{n_2^2}{n_1}} \\
&\ge
\frac{1}{\frac{1}{3\gamma}+\frac{1}{3\gamma}+\frac{1}{3\gamma}} \\
&\ge
\gamma.
  \end{aligned}
\]
This finished the proof for utilization.
\end{proof}

\section{Conclusion}
In this paper, we investigate the multi-type resource allocation problem. Generalizing the classic DRF mechanism, we propose several new mechanisms in the two-resource setting and in the general $m$-resource setting. The new mechanisms satisfy the same set of desirable properties as DRF but with better efficiency guarantees.
For future works, we hope to extend these mechanisms to handle more realistic assumptions, such that when agents have limited demands or indivisible task.
Another extension is to model and study the multi-resource allocation problem in a dynamic setting.

%
%
%
\bibliographystyle{splncs04}
\bibliography{bib}
%


\end{document}